\newcommand{\F}{\mathbb{F}}
\newcommand{\nc}{\newcommand}
\nc{\rnc}{\renewcommand} \nc{\nev}{\newenvironment}
\renewcommand{\ge}{ \geqslant}
\renewcommand{\le}{ \leqslant}
\nc{\W}[1]{\text{$\textup{W}[#1]$}}
\nc{\FPT}{\textup{FPT}}
\nc{\fpt}{\textup{fpt}}
\nc{\dotcup}{\;\dot\cup\;}
\newtheorem{theorem}{Theorem}[section]
\newtheorem{lemma}[theorem]{Lemma}
\newtheorem{proposition}[theorem]{Proposition}
\newtheorem{fact}[theorem]{Fact}
\newtheorem{definition}[theorem]{Definition}
\newtheorem{claim}[theorem]{Claim}
\newcommand\footnoteref[1]{\protected@xdef\@thefnmark{\ref{#1}}\@footnotemark}
\newcommand{\OPT}{\textsf{OPT}}
\rnc{\c}{\mathbf{{c}}} 
\rnc{\a}{\mathbf{{a}}}
\rnc{\b}{\mathbf{{b}}} 
\rnc{\i}{\mathbf{{i}}}
\rnc{\j}{\mathbf{{j}}} 
\rnc{\u}{\mathbf{{u}}}
\rnc{\v}{\mathbf{{v}}}
\nc{\B}{\mathbf{{B}}}
\nc{\s}{\mathbf{{s}}}
\rnc{\S}{\mathbf{{S}}}
\nc{\C}{\mathbf{{C}}}
\nc{\U}{\mathbf{{U}}}
\nc{\E}{\mathbf{{E}}}
\nc{\G}{\mathbf{{G}}}
\nc{\D}{\mathbf{{D}}}
\nc{\unsat}{\mathbf{{UNSAT}}}
\renewcommand{\epsilon}{\varepsilon}
\newcommand{\poly}{\text{poly}}
\newcommand{\alphab}{\vec{\bm{\alpha}}}
\newcommand{\betab}{\vec{\bm{\beta}}}
\newcommand{\var}{\mathsf{var}}
\renewcommand{\tilde}{\widetilde}
\renewcommand{\kappa}{\tau}
\newcommand{\rhob}{\vec{\bm{\rho}}}
\newcommand{\M}{\mathcal{M}}
\author{Karthik C.\ S.\footnote{This work was supported by a grant from the Simons Foundation, Grant Number 825876, Awardee Thu D. Nguyen. Also, part of this work was done when the author was a postdoctoral researcher at New York University and supported by Subhash Khot's Simons Investigator Award.}\\ Department of Computer Science\\ Rutgers University\\\texttt{karthik.cs@rutgers.edu} \and Subhash Khot\footnote{This work was supported by the NSF Award CCF-1422159,
2130816, the Simons Collaboration on Algorithms and Geometry, and the
Simons Investigator Award.
}\\ Department of Computer Science\\ Courant Institute of Mathematical
Sciences\\ New York University\\ \texttt{khot@cims.nyu.edu} }
\title{Almost Polynomial Factor Inapproximability for Parameterized $k$-Clique}
\date{}
\begin{document}

\maketitle

\begin{abstract} 
The $k$-Clique problem is a canonical hard problem in parameterized complexity. 
In this paper, we study the parameterized complexity of approximating the $k$-Clique  problem where an integer $k$ and a graph $G$ on $n$ vertices are given as input, and the goal is to find a clique of size at least $k/F(k)$ whenever the graph $G$ has a clique of size $k$. When such an algorithm runs in time $T(k) \cdot \text{poly}(n)$ (i.e., FPT-time) for some computable function $T$, it is said to be an \emph{$F(k)$-FPT-approximation algorithm} for the $k$-Clique problem. \vspace{0.1cm}

Although  the non-existence of an $F(k)$-FPT-approximation algorithm for any computable  sublinear function $F$ is known under gap-ETH [Chalermsook~et~al., FOCS 2017], it has remained a long standing open problem to prove the same inapproximability result under the more standard and weaker assumption, W[1]$\neq$FPT.\vspace{0.1cm}
 
In a recent breakthrough,  Lin [STOC 2021] ruled out  constant factor (i.e., $F(k)=O(1)$) FPT-approximation algorithms  under W[1]$\neq$FPT. In this paper, we improve this inapproximability result (under the same assumption) to rule out  every $F(k)=k^{1/H(k)}$ factor FPT-approximation algorithm for any increasing computable function $H$ (for example $H(k)=\log^\ast k$).   \vspace{0.1cm}

Our main technical contribution is introducing list decoding of Hadamard codes over large prime fields  into the proof framework of Lin. 
\end{abstract}
\clearpage

\section{Introduction}
In the {\em clique} problem (Clique), we are given an undirected graph $G$ on $n$ vertices and an integer $k$, and the goal is to decide whether there is a subset of vertices $S\subseteq V(G)$ of size $k$ such that every two distinct vertices in $S$ share an edge in $G$. 
Often regarded as one of the classical problems in computational complexity, Clique was first shown to be NP-complete in the seminal work of Karp~\cite{Karp72}. Thus, its optimization variant, namely the {\em maximum clique}, where the goal is to find a clique of the largest possible size, is also  NP-hard. 

To circumvent this apparent intractability of the problem, the study of an approximate version was initiated. The quality of an approximation algorithm is measured by the \emph{approximation ratio}, which is the ratio between  the size of the maximum clique and the size of the solution output by the algorithm. It is trivial to obtain an $n/c$ factor approximation algorithm for any constant $c\in\mathbb{N}$. The state-of-the-art approximation algorithm is due to Feige \cite{F04} which yields an approximation ratio of $O(n(\log\log n)^2/\log^3 n)$. 
On the opposite side, Maximum Clique is arguably the first natural combinatorial optimization problem studied in the context of hardness of approximation; in a seminal work of Feige, Goldwasser, Lov\'asz, Safra and Szegedy \cite{FGLSS96}, a connection (hereafter referred to as the FGLSS reduction) was made between interactive proofs and hardness of approximating Clique. The FGLSS reduction, together with the PCP theorem \cite{AS98,ALMSS98,D07} and gap amplification via randomized graph products \cite{BS92}, immediately implies $n^\varepsilon$ ratio inapproximability of Clique for some constant $\varepsilon>0$ under the assumption that NP$\not\subseteq$ BPP. Following \cite{FGLSS96},   a long line of research on the inapproximability of Clique \cite{BGLR93,FK00,BGS98,BS94},   culminated in the works of H\r{a}stad  \cite{H96a,H96b}, wherein it was shown that Clique cannot be approximated to within a factor of $n^{1-\varepsilon}$ in polynomial time unless NP$\subseteq$ZPP; this was later derandomized by Zuckerman  \cite{Z07}. Since then, better inapproximability ratios are known \cite{EH00,K01,KP06}, with the best ratio being $n/2^{(\log n)^{3/4+\varepsilon}}$ for every $\varepsilon> 0$ (assuming NP $\nsubseteq$ BPTIME$(2^{(\log n)^{O(1)}})$) due to Khot and Ponnuswami \cite{KP06}. In summary, our understanding of the boundaries of efficient computation of approximating clique in the NP world is almost complete.

Besides approximation, another widely-used technique to cope with NP-hardness is \emph{parameterization}. The parameterized version of Clique, which we will refer to simply as $k$-Clique, is exactly the same as the original decision version of the problem except that now we are not looking for a polynomial time algorithm but rather a {\em fixed parameter tractable} (FPT) algorithm -- one that runs in time $T(k)\cdot \poly(n)$ for some computable function $T$ (e.g., $T(k) = 2^k$ or even $2^{2^k}$). Such running time will henceforth be referred to as \emph{FPT time}. It turns out that even with this relaxed requirement, $k$-Clique still remains intractable: in the same work that introduced the $W$-hierarchy, Downey and Fellows~\cite{DF95} showed that $k$-Clique is complete for the class W[1], which is generally believed to not be contained in FPT, the class of fixed parameter tractable problems. Subsequently, stronger running time lower bounds have been shown for $k$-Clique under stronger assumptions. Specifically, Chen et al.~\cite{CHKX06a,CHKX06b} ruled out $T(k) \cdot n^{o(k)}$-time algorithms for $k$-Clique assuming the {\em Exponential Time Hypothesis} (ETH)\footnote{ETH~\cite{IP01,IPZ01} states that no subexponential time algorithm can solve 3-SAT.}.  Note that the trivial algorithm that enumerates through every $k$-tuple of vertices, and checks whether it forms a clique, runs in $\tilde{O}(n^{k})$ time. It is possible to speed up this running time using fast matrix multiplication~\cite{NP85,EG04}.

Given the strong negative results for $k$-Clique discussed in the previous paragraph, it is natural to ask whether one can come up with a \emph{fixed parameter approximation (FPT-approximation) algorithm} for $k$-Clique. The notion of FPT-approximation algorithms is motivated primarily through the  consideration of inputs with small sized optimal solutions. Case in point, the state-of-the-art polynomial time approximation ratio of $O(n(\log\log n)^2/\log^3 n)$ \cite{F04} would be meaningless if the size of the maximum clique (denoted $\OPT$) was itself  $O(n(\log\log n)^2/\log^3 n)$, as outputting a single vertex already guarantees an $\OPT$-approximation ratio. In this case, a bound such as $o(\OPT)$ would be more meaningful. Unfortunately, no approximation ratio of the form $o(\OPT)$ is known even when FPT-time is allowed.  We refer the reader to the textbooks~\cite{DowneyF13,CyganFKLMPPS15} for an excellent introduction to the area.
On the other hand, inapproximability results in parameterized complexity aim to typically rule out algorithms running in FPT time (under the W[1]$\neq$FPT\ hypothesis) for various classes of computable functions $F$. 
This brings us to the main question addressed in our work:
\begin{center}
\emph{Is there an $F(k)$-FPT-approximation algorithm for $k$-Clique\\ for some computable function $F$ which is $o(k)$?}
\end{center}
This question, which dates back to late 1990s (see, e.g., remarks in \cite{DFM06}), has attracted significant attention in literature and continues to be repeatedly raised in workshops and surveys on parameterized complexity \cite{CGG06,marx2008fpa,CHK13,BEOP15,saket2017,CL19,KLM19,FKLM20}. This open problem is  
even listed\footnote{In \cite{DowneyF13}, the authors list proving hardness of approximation for dominating set as  one of the six ``most infamous'' open questions in the area of Parameterized Complexity. Immediately, they clairfy that, ``One can ask similarly about an FPT Approximation for Independent Set''. Note that inapproximability results for  independent set problem imply  hardness of approximation of $k$-Clique and vice versa. } in the seminal textbook of Downey and Fellows~\cite{DowneyF13}.

Early attempts \cite{HKK13,BEOP15} ruled out constant ratio FPT-approximation algorithms for $k$-Clique, but under very strong assumptions such as the combination of ETH and the existence of a linear-size PCP. However, a few years ago, the authors in \cite{CCKLMNT20} proved under the {\em Gap Exponential Time Hypothesis} (Gap-ETH)\footnote{Gap-ETH~\cite{D16,MR16} is a strengthening of ETH, and states that no subexponential time algorithm can distinguish satisfiable 3-CNF formulae from ones that are not even $(1 - \varepsilon)$-satisfiable for some $\varepsilon > 0$.}, that no $F(k)$-approximation algorithm for $k$-Clique exists for any computable function $F$. 
Such non-existence of FPT-approximation algorithms is referred to in literature as the \emph{total FPT-inapproximability} of $k$-Clique.

While the result in \cite{CCKLMNT20}  seems to settle the parameterized complexity of approximating $k$-Clique, there are a few disadvantages to their result. First, while Gap-ETH may be plausible, it is a strong conjecture and in their reduction, the hypothesis does much of the work in the proof. In particular, Gap-ETH itself already gives the gap in hardness of approximation; once they have  such a gap, it suffices to design gap preserving reductions to prove other inapproximability results (although some care needs to be taken as they cannot directly use Raz's parallel repetition theorem \cite{R98} for gap amplification). This is analogous to the NP-world, where once one inapproximability result can be shown, many others follow via relatively simple gap-preserving reductions (see, e.g., \cite{PY91}). However, creating a gap in the first place requires the PCP Theorem~\cite{AS98,ALMSS98,D07}, which involves several new technical ideas such as local checkability and decodability of codes and proof composition. Hence, it is desirable to bypass Gap-ETH and prove total FPT-inapproximability under a standard assumption such as W[1]$\neq$FPT, that does not inherently have a gap.

The last seven years have witnessed many significant inapproximability results in parameterized complexity that are only based on the assumption  W[1]$\neq$FPT. A key component in all these works is a gap creating technique. Elaborating, we now have strong inapproximability results under W[1]$\neq$FPT for 
Set Cover \cite{CL19,KLM19,L19,KN21}, Set Intersection \cite{L18,BKN21},
Steiner Orientation problem \cite{W20}, and problems in Coding theory and Lattice theory \cite{BBEGKLMM21}. There have been even more   strong inapproximability results under Gap-ETH proved in these last few years and we direct the reader to a recent survey \cite{FKLM20} on the topic.

Returning to the question of proving hardness of approxmation of $k$-Clique, the difficulty in adopting the  techniques developed in the NP world into parameterized complexity 	were discussed in many previous works, such as \cite{CGG06,L18,CL19, FKLM20}, and it was also widely believed \cite{CGG06} that one needs to prove a PCP theorem analogue for parameterized complexity\footnote{One such formulation is called the \emph{Parameterized Inapproximability Hypothesis} (PIH) and was putforth by \cite{LRSZ20}. See Section~\ref{sec:open} for a small discussion on PIH. } in order to obtain any non-trivial inapproximability result for $k$-clique under the W[1]$\neq$FPT assumption. Recently, in a remarkable breakthrough, Lin \cite{L21} negated this belief, and developed a different proof framework to prove constant ratio inapproximability for the $k$-Clique problem assuming W[1]$\neq$FPT. 

We describe Lin's proof framework in detail in Section~\ref{sec:overview}, but even given the result of \cite{L21}, it is still very far from proving the total FPT-inapproximability of $k$-Clique. Our result below is a significant improvement on the state-of-the-art W[1]-hardness of approximation of $k$-Clique.

\begin{theorem}[Almost Polynomial Factor Inapproximability of $k$-Clique]\label{thm:main}
Let $H:\mathbb{N}\to\mathbb{N}$ be an increasing\footnote{\label{footnote:increasing}A function $H:\mathbb{N}\to\mathbb{N}$ is said to be increasing if for all $k\in\mathbb{N}$ we have $H(k+1)\ge H(k)$, and $\underset{k\to\infty}{\lim}\ H(k)=\infty$.} computable function such that $\forall k\in\mathbb{N}$, we have $H(k)\le \log k$. Given as input an integer $k$ and a graph $G$ on $n$ vertices, it is W[1]-hard parameterized by $k$ (under randomized reductions), to distinguish between the following two cases:
 \begin{description}
 \item[Completeness:]$G$ has a clique of size $k$.
 \item[Soundness:] $G$ does not have a clique of size $\nicefrac{k}{k^{1/H(k)}}$.
 \end{description}
\end{theorem}

For example, if we plug in $H(k)=\lceil \log\log k\rceil $  in our theorem, we obtain $k^{1/\lceil \log\log k\rceil}=\omega(\text{polylog } k)$ ratio inapproximability of $k$-Clique. In fact, if we susbstitute $H$ in the theorem statement with a \emph{very} slowly growing function, then we \emph{almost} obtain polynomial ratio inapproximability of $k$-Clique. We reiterate again that the only comparable result to the above theorem, is by Lin  \cite{L21}, who ruled out constant ratio (i.e., $H(k)=O(\log k)$) FPT-approximation algorithms.  

Our result also rules out $k^{1/H(k)}$ ratio FPT-approximation algorithms for the $k$-Independent Set problem by using the well-known connection to the $k$-Clique problem. 

We remark here that independent of our work, in \cite{LRSW21}, the authors assuming ETH, rule out FPT algorithms for approximating the $k$-clique problem to the  same hardness of approximation factors as in Theorem~\ref{thm:main}, and this has been subsequently improved in \cite{LinRSW23,GuruswamiLRS024}. Note that $W[1]\neq \FPT$ is a weaker assumption than ETH as the latter is known to imply the former \cite{CHKX06a,CHKX06b}. 
Finally, subsequent to our work, in \cite{CFLL25}, the authors have provided an alternate and significantly simpler proof of Theorem~\ref{thm:main}.

 \subsection{Proof Overview}\label{sec:overview}
 In this subsection, we provide a proof overview of Theorem~\ref{thm:main}. In order to motivate our proof framework and ideas, we first describe a wishful thinking reduction to gap $k$-Clique, and then describe Lin's framework, and finally provide the details of our techniques.  
 
 \paragraph{From PIH to gap $k$-Clique.} Suppose our starting point was a gap 2-CSP\footnote{A $t$-CSP is a constraint satisfaction problem in which every constraint involves at most $t$ variables.} instance $\varphi$ on $k$ variables and alphabet $[n]$, which is either completely satisfiable (i.e., there exists an assignment that satisfies all the constraints) or every assignment to the variables violates at least 1\% of the constraints.  Furthermore, suppose that  it was W[1]-hard, parameterized by $k$, to decide $\varphi$. This assumption is known as PIH and it was believed \cite{CGG06} that we need to first prove PIH in order to prove the hardness of gap $k$-Clique. Applying the well-known FGLSS reduction to $\varphi$, we 
 obtain a graph in which finding a clique which is larger than 99\% of the maximum clique size is W[1]-hard. Of course, the big problem with this reduction is that we do not know if PIH is true.

\paragraph{Lin's Framework.} In \cite{L21}, the author circumvents proving PIH, and instead makes the following surprising observation. Let $\varphi$ be a 2-CSP  instance where the variable set is thought of as $\{0,1\}^k$, and the constraints are only between a pair of points that differ  on one coordinate. We call a constraint to be in direction $i\in[k]$ if the constraint is between a pair of points that differ on the $i^{\text{th}}$ coordinate. Suppose we can show that it is W[1]-hard parameterized by $t:=2^k$, to distinguish between the cases when either $\varphi$ is satisfiable or when, for every assignment to $\{0,1\}^k$, there exists $i\in[k]$, such that 1\% of the constraints in the $i^{\text{th}}$ direction are violated. Note that in the soundness case, there is no guarantee that for every  assignment, 1\% of the total constraints are violated, in fact, for every  assignment we are only guaranteed that  $\Omega(1/k)$ fraction of the total constraints are violated. 
Nevertheless, by applying a simplified FGLSS reduction\footnote{The FGLSS reduction that we refer here is a simplified reduction specifically for 2-CSPs. Starting from a 2-CSP instance $\varphi$ on $k$ variables and alphabet set $[n]$, we construct the graph $G_{\varphi}$ whose vertex set is $[k]\times [n]$ where we do not have an edge between $(i,a)$ and $(j,b)$ if and only if the constraint between $i$ and $j$ in $\varphi$ is not satisfied by $(a,b)$.} to $\varphi$, we obtain a gap $t$-Clique! 

Therefore, informally speaking, a wishful version of Lin's framework comprises of two steps.

(i)  Show W[1]-hardness of deciding 2-CSP on the Boolean hypercube host graph with the aforementioned soundness property. 

(ii) Apply the FGLSS reduction to reduce the above 2-CSP to the gap $t$-Clique problem because, any subset of the Boolean hypercube containing 99.5\% of the vertices, also contains at least 99\% of the edges in every direction.

Lin starts from the $k$-Vector Sum problem, where given $k$ collections of $n$ Boolean vectors each, the goal is to decide if there are $k$ vectors, one in each collection, that sum to $\vec{0}$. 
Starting from  the $k$-Vector Sum problem and by using the local testability and local decodability of Hadamard codes over $\F_2$, he shows the  W[1]-hardness of deciding 3-CSP on some variant of the Boolean hypercube host graph, with the aforementioned soundness property. 

However, since we have a CSP of arity three, applying the simplified FGLSS reduction for 2-CSPs directly becomes tricky,  and he finds a critical modification to the FGLSS reduction, which allows him to reduce to the gap $t$-Clique problem. We note that the gap created is between the existence of  a $t$-clique in the completeness case versus no $0.995t$-clique in the soundness case. In order to rule out FPT-approximation algorithms for all constant ratios, he applies the well-known technique of   graph product, by taking an $O(1)$-wise product  of the hard $k$-Clique instance and the size of the graph increases only to $n^{O(1)}$.

\paragraph{Our Framework.} We are now ready to describe our proof framework. At a high level, the gap created by Lin mainly arrives from the distance of the Hadamard code. Since the gap generated by using Hadamard codes over $\F_2$ is at most $1/2$,  in order to obtain larger gaps, we use Hadamard codes over $\F_q$, for some large  $q$ only depending on $k$.  However, working with Hadamard codes over $\F_q$ in the low acceptance regime, has its own challenges, such as:

$\bullet$ First, in Lin's case, local testability of Hadamard codes in the high acceptance regime is just the standard BLR Linearity testing \cite{BLR93}, which can be used off the shelf. However, we need to test the Hadamard code in the low acceptance regime over $\F_q$, and thus we prove results on the list decodability of Hadamard codes over $\F_q$. Such results appear implicitly in literature, and we make them explicit through our Theorems~\ref{thm:lintest}~and~\ref{thm:piecing}. 

$\bullet$ Second, because we deal with list decoding instead of standard decoding, all the relationships in our proofs have some ``noise'' and therefore the arguments in our soundness analysis of Theorem~\ref{thm:main} are very intricate. 

We have described above, the challenges that we had to address to prove Theorem~\ref{thm:main} over the result of \cite{L21}. Next, we sketch the outline of our proof.

Our starting point is the same as Lin's starting point, i.e., the $k$-Vector Sum problem, but over $\F_q$. The W[1]-hardness of the $k$-Vector Sum problem is known in literature \cite{ALW13}, and in fact Lin provides a short proof in his paper. Then we create a 3-CSP on the variable set $\F_q^k$ and alphabet size $[n]$ with three types of constraints:

(i) We have 3-ary constraints arising from the 3-query list decoding of Hadamard codes. These constraints enforce that the assignments satisfying them  can themselves be viewed as a Hadamard codeword.  In particular, for every $k$-tuple of vectors of the $k$-Vector Sum  instance, our assignment is supposed to be the Hadamard encoding of the sum of the $k$-tuple of vectors. 

(ii) We have 2-ary constraints arising from a pair of points on any  axis parallel line in $\F_q^k$. The constraints along the $i^{\text{th}}$ direction enforce  that the  $i^{\text{th}}$  vector in our $k$-tuple of vectors indeed comes from  the  $i^{\text{th}}$ collection in the $k$-Vector Sum  instance. 

(iii) We have 2-ary constraints arising from a pair of points on specific lines through the origin, which enforce that the  sum of the $k$-tuple of vectors 
is $\vec{0}$.

After constructing this CSP, we build an instance of the $t$-Clique problem, where $t:=q^{2k}$, by building a graph on $q^{2k}$ clouds of vertices, where each cloud is an independent set containing one vertex for each triple $(x,y,x+y)\in[n]\times [n]\times [n]$. Each cloud represents a pair of variables of our CSP, which are the queries to the linearity test. The satisfying pairs of the alphabet set of the constraints in items (ii) and (iii) appear directly as edges in the graph. Since every variable appear in multiple clouds of vertices, we only put an edge between pairs of vertices that are ``consisitent'' on their assigment to a variable. 

Unlike \cite{L21}, we do not analyze the reduction from $k$-Vector Sum problem to the 3-CSP and from the 3-CSP to the $t$-Clique problem in two separate steps, but rather we analyze the instance of the $t$-Clique directly with respect to the $k$-Vector Sum problem, and this helps us keep the analysis clean and succinct. 
A more detailed overview of this reduction and analysis  is given in Section~\ref{sec:construction}. 

\subsection{Organization of Paper}
The paper is organized as follows. 
First, in Section~\ref{sec:prelim} we define the $k$-Vector Sum problem and state its known hardness result.
Then, in Section~\ref{sec:lintest} we prove linearity testing result in the low soundness regime (a.k.a.\ list decoding of Hadamard code) over fields of large prime order.
Next, in Section~\ref{sec:main} we prove our main result, i.e., Theorem~\ref{thm:main}.
Finally, in Section~\ref{sec:open} we highlight a couple of important open problems. 

\section{Preliminaries}\label{sec:prelim}

First, we define the notion of relative Hamming norm that is used throughout this paper. 
Let $q$ be a prime power and $n\in\mathbb{N}$. For any vector $x\in \F_q^d$ we define its relative Hamming norm, denoted $\|x\|$, as the fraction of coordinates in $[d]$ in which $x$ is not 0, i.e., 
$$\|x\|:=\frac{|\{i\in[d]:x_i\neq 0\}|}{d}.
$$

Next, we define the $k$-Vector Sum problem and state its known W[1]-hardness result. 

\begin{definition}[$k$-Vector Sum]
Let $q$ be a prime.
Given $k$ sets $U_1,\ldots,U_k$ of vectors  in $\mathbb{F}_q^{m}$, the goal of $k$-vector-sum problem is to decide whether there exist $\vec{u}_1\in U_1,\ldots,\vec{u}_k\in U_k$ such that
\[
\sum_{i\in[k]}\vec{u}_i=\vec{0}.
\]
\end{definition}

It is known that the above problem is W[1]-hard over finite fields \cite{ALW13}. We direct the reader to \cite{L21} for a short proof\footnote{The proof in \cite{L21} is over $\F_2$ but it is easy to see that their reduction generalizes to fields of larger characteristic. Also, they prove the hardness result for a version of $k$-Vector Sum where  a target vector is given as input, but that version reduces to the version given in this paper by simply including an extra collection containing only the negative of the target vector.  }.

\begin{theorem}\cite{ALW13,L21}\label{thm:kvec}
For every prime $q$ (independent of $n$), 
$k$-Vector-Sum over $\mathbb{F}_q$  and $m=\Theta(k^2\log n)$ is $W[1]$-hard parameterized by $k$.  
\end{theorem}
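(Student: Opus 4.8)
The plan is to give a deterministic \FPT-reduction from $k$-Clique --- which is W[1]-hard by Downey and Fellows --- to the Vector-Sum problem over $\F_q$ with $\Theta(k^2)$ collections and ambient dimension $\Theta(k^2\log n)$; this is essentially Lin's reduction from \cite{L21}, run over an arbitrary prime field and in the present ``target-free'' formulation. Let $G$ be the input graph; deleting isolated vertices (which cannot lie in a $k$-clique once $k\ge 2$) we may assume $G$ has vertex set $[N]$ with $N=\Theta(|E(G)|)$. Fix once and for all an injective label map $\mathrm{enc}\colon[N]\to\F_q^{d}$ with $d=\lceil\log_q N\rceil$, which exists since $q^{d}\ge N$. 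The guiding idea is that a $k$-clique will be represented by the coherent data ``the chosen edge of pair $\{i,j\}$ has endpoints $(v_i,v_j)$'', and the instance is engineered so that its zero-sum solutions are exactly such data.

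Concretely, I would create one collection $U_{\{i,j\}}$ for each pair $1\le i<j\le k$, containing, for every edge $\{u,w\}\in E(G)$, the two vectors corresponding to the ordered pairs $(u,w)$ and $(w,u)$ (the decision ``which edge of $G$ plays the role of pair $\{i,j\}$''). The coordinate space is split into consistency blocks: for each colour $i\in[k]$, fix an ordering of the $k-1$ pairs incident to $i$ and introduce $k-2$ blocks $D_{i,1},\dots,D_{i,k-2}$, each a copy of $\F_q^{d}$, so that $m=k(k-2)d=\Theta(k^2\log n)$. The vector of $(u,w)\in U_{\{i,j\}}$ (say $i<j$, so $u$ is the $i$-endpoint and $w$ the $j$-endpoint) writes $\pm\mathrm{enc}(u)$ into the (at most two) blocks of colour $i$'s chain adjacent to the position of pair $\{i,j\}$, writes $\pm\mathrm{enc}(w)$ into the corresponding blocks of colour $j$'s chain, and is zero elsewhere --- the standard device of threading $k-1$ equalities through $k-2$ difference gadgets, so that in any selection of one vector per collection the block $D_{i,\ell}$ sums to $\mathrm{enc}(x)-\mathrm{enc}(y)$, where $x,y$ are the $i$-endpoints picked in the $\ell$-th and $(\ell+1)$-st pairs incident to $i$. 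Injectivity of $\mathrm{enc}$ then makes the total sum $\vec{0}$ equivalent to: all pairs incident to each colour $i$ agree on a single vertex $v_i$; and since the only available vectors come from genuine edges and $G$ is loopless, this is in turn equivalent to $\{v_1,\dots,v_k\}$ being a $k$-clique --- which yields both completeness and soundness.

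It then remains to check the routine bookkeeping: the construction is polynomial-time, the number of collections $\binom{k}{2}$ is a function of $k$ alone, the vectors lie over the fixed prime field $\F_q$ with no dependence on $N$, and $m=\Theta(k^2\log n)$, where $n$ denotes the largest collection size ($=2|E(G)|$, so that $\log n=\Theta(\log N)$). Relabelling the number of collections as ``$k$'' --- and, if one wants the dimension to be exactly $\Theta(k^2\log n)$ rather than a smaller polynomial in the new parameter, padding every vector with zero coordinates (which preserves the set of zero-sum selections), and, if one prefers the target formulation of \cite{L21}, adjoining a singleton collection $\{-\mathbf{t}\}$ --- delivers the theorem. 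The one spot that needs a little care is laying out the consistency blocks so the dimension stays $O(k^2\log n)$, for which one threads a single length-$(k-1)$ chain per colour instead of all $\binom{k-1}{2}$ pairwise comparisons; the rest of the verification is mechanical, and since W[1]-hardness of $k$-Clique is classical and nothing in the argument is sensitive to $\mathrm{char}(\F_q)$, I do not anticipate a genuine obstacle.
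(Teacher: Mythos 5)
Your construction is correct, and it is essentially the argument the paper itself defers to: the theorem is stated here only with a citation to \cite{ALW13,L21} plus a footnote noting that Lin's $\F_2$ reduction generalizes to larger characteristic and that the target-vector version reduces to the target-free one by adjoining a singleton collection $\{-\mathbf{t}\}$ --- both of which your writeup handles explicitly. Your pair-indexed collections with per-colour consistency chains are the standard Clique-to-Vector-Sum encoding underlying those references, and the bookkeeping (loopless $G$ forcing distinctness, zero-padding to hit $m=\Theta(k^2\log n)$ in the relabelled parameter) checks out.
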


\section{Low Soundness Linearity Testing over Large Characteristic Fields}\label{sec:lintest}

In this section, we prove a linearity testing result which is a key  technical component in proving our inapproximability result. 

Let $q$ be a prime number and $d,\ell\in\mathbb{N}$.
Given a function $f:\F_q^d\to\F_q^{\ell}$, consider the following test $\mathcal{T}$. 
Pick $\alphab,\betab\in \F_q^d$ uniformly and independently at random. Accept if $f(\alphab)+f(\betab)=f(\alphab+\betab)$ and reject otherwise. Further, we define $S_{f,\mathcal{T}}\subseteq \F_q^d\times \F_q^d$ and $\var(f,\mathcal{T})\subseteq \F_q^d$ as follows:
$$
S_{f,\mathcal{T}}:=\{(\alphab,\betab)\in\F_q^d\times \F_q^d: f(\alphab)+f(\betab)=f(\alphab+\betab) \},
$$
$$
\var(f,\mathcal{T}):=\{\alphab\in\F_q^d: \exists \betab \in \F_q^d \text{ such that }(\alphab,\betab)\in S_{f,\mathcal{T}} \}.
$$
We say that a function $c:\F_q^d\to\F_q^{\ell}$ is linear if for all $\alphab,\betab\in \F_q^d$ we have $c(\alphab)+c(\betab)=c(\alphab+\betab)$. 
Moreover, we say that a function $f:\F_q^d\to\F_q^{\ell}$ is scalar respecting if for all $\alphab\in \F_q^d$ and all $\gamma\in\F_q$ we have $f(\gamma\cdot \alphab)=\gamma\cdot f(\alphab)$.

We prove below a couple of theorems in the flavor of the many list-decoding  results known in literature for Hadamard codes \cite{GL89,GRS00,G09}. 
 
\begin{theorem}[Linearity Testing]\label{thm:lintest}
Let $q$ be a prime number and $d\in \mathbb{N}$.
Let $f:\F_q^d\to\F_q$ be a scalar respecting function. Let $\epsilon, \delta > 0$ be parameters such that 
$\epsilon \gg \delta \gg \frac{1}{q^{1/3}}$. 
If $f$  passes $\mathcal{T}$ with probability $\varepsilon$, then  there exists an integer $r = O(1/\delta^2)$ and 
\emph{linear} functions $c_1,\ldots ,c_r:\F_q^d\to\F_q$, such that the following holds.
$$
\underset{ (\alphab, \betab) \sim S_{f,\mathcal{T}} }{\Pr}\left[\exists\ \text{unique } j\in[r] \text{ such that }f(\alphab)=c_j(\alphab),
 f(\betab)=c_j(\betab) \right]\ge 1- O\left(\frac{\delta}{\epsilon}\right).
$$
\end{theorem}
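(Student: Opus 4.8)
The plan is to mimic the classical Goldreich–Levin / Fourier-analytic argument for list-decoding the Hadamard code, but carried out over $\F_q$ and, crucially, restricted to the self-correction structure provided by the test $\mathcal{T}$. First I would define, for each character $\psi$ of $\F_q$ (or equivalently for each candidate linear function), an appropriate correlation/agreement measure between $f$ and that linear function, and argue that only $O(1/\delta^2)$ of them can have agreement noticeably above $\delta$. The cleanest route is a \emph{second-moment / counting} argument: the probability that a uniformly random pair $(\alphab,\betab)$ passes the test and \emph{simultaneously} agrees with two \emph{distinct} linear functions $c_i,c_j$ is tiny, because $c_i-c_j$ is a nonzero linear map and hence takes each value of $\F_q$ on an exactly $1/q$ fraction of inputs; so the events ``$f$ looks like $c_j$ on this pair'' are nearly disjoint across $j$, which both bounds the list size by $r=O(1/\delta^2)$ (any linear function with agreement exceeding roughly $\delta$ on $S_{f,\mathcal{T}}$ must be on the list, and since $|S_{f,\mathcal{T}}| \ge \varepsilon q^{2d}$ and they are near-disjoint, there cannot be more than $O(1/\delta^2)$ of them once $\varepsilon \gg \delta$) and gives the ``unique $j$'' clause in the conclusion for free.

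The heart of the matter is showing the $1 - O(\delta/\varepsilon)$ \emph{coverage}: almost every passing pair agrees with \emph{some} $c_j$ on the list. Here I would use the standard self-correction idea. Define the plurality/majority decoding $g(\alphab) := \mathrm{plurality}_{\betab}\, \big(f(\alphab+\betab) - f(\betab)\big)$, where the plurality is over those $\betab$ with $(\alphab,\betab) \in S_{f,\mathcal{T}}$ (or over all $\betab$, weighting by the test). Two things need to be established: (a) $g$ is linear — this follows from a standard double-sampling argument using that $f$ passes the test with probability $\varepsilon$, shifting $\betab$ by a random third point to decouple, exactly as in BLR but now tracking the $\varepsilon$ and $\delta$ parameters since we are in the low-acceptance regime; (b) on the set $S_{f,\mathcal{T}}$, the function $f$ agrees with $g$ on most coordinates of most passing pairs. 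Because we are list-decoding rather than unique-decoding, $g$ will not be a single function but will depend on which ``branch'' of the plurality we follow; the collection of all such branches, after pruning to those with agreement above $\delta$, is exactly $\{c_1,\dots,c_r\}$. The scalar-respecting hypothesis is what lets me pass from the test's additive structure to genuine $\F_q$-linearity (it rules out pathological functions that are additive on a subgroup but not homogeneous), and it is used precisely at the step where I upgrade ``additive-on-$S_{f,\mathcal{T}}$'' to ``linear.''

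The main obstacle, I expect, is the quantitative bookkeeping in the low-acceptance regime with the hierarchy $\varepsilon \gg \delta \gg q^{-1/3}$. Over $\F_2$ in the high-acceptance regime, one self-correction step gives a function $2\varepsilon'$-close to the codeword and everything is clean; here $\varepsilon$ can be as small as, say, $\delta^{0.9}$, so the plurality vote is only barely a plurality and a single cleanup step does not suffice — I anticipate needing an iterated correction or a direct Fourier/character-sum estimate to show the branches stabilize, and the $q^{-1/3}$ threshold is presumably where the error terms from ``a random pair of linear functions is near-independent'' (which cost $O(1/q)$ each, summed over $O(1/\delta^2)$ pairs, giving $O(1/(\delta^2 q))$, forced to be $\ll \delta$) become negligible. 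So the real work is not conceptual but a careful propagation of the $\delta/\varepsilon$ error through the union bounds over the $\le r$ list elements and over the coordinates of the two queried points $\alphab,\betab$, making sure the ``unique $j$'' event fails with probability only $O(\delta/\varepsilon)$ rather than something like $O(r\delta/\varepsilon)$ — which is exactly where the near-disjointness from the second-moment step is reused.
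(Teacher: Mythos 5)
Your treatment of the list size and of the uniqueness clause is essentially the paper's: two distinct linear functions over $\F_q^d$ agree on exactly a $1/q$ fraction of points, so the $O(1/\delta^2)$ list members pairwise collide on an $O(1/(\delta^2 q))=O(\delta)$ fraction --- which is precisely where $\delta \gg q^{-1/3}$ is needed --- and conditioning on the $\varepsilon$-probability passing event converts additive $O(\delta)$ errors into $O(\delta/\varepsilon)$. (One cosmetic difference: the paper does not define the list by agreement on $S_{f,\mathcal{T}}$; it sets $g(\alphab):=\omega^{f(\alphab)}$ and takes the list to be all characters $\chi_{\rhob}$ with $|\hat g(\rhob)|=\Omega(\delta)$, so the size bound is immediate from Parseval.)

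The genuine gap is in the coverage step, which you rightly call the heart of the matter but for which your proposed mechanism --- plurality self-correction $g(\alphab):=\mathrm{plurality}_{\betab}\bigl(f(\alphab+\betab)-f(\betab)\bigr)$ followed by a BLR-style double-sampling argument --- does not work when the acceptance probability $\varepsilon$ is far below $1/2$. In that regime $f$ can be a mixture of many linear functions: the plurality at a point is not a strict majority, the double-sampling step (which needs two independent votes to agree with probability exceeding $1/2$) fails, and iterating the correction does not restore it. Your hedge that you ``anticipate needing a direct Fourier/character-sum estimate'' points at the right replacement but does not supply it. The paper's actual argument is: prove the trilinear identity $\Pr[g_1(\alphab)g_2(\betab)=g_3(\alphab+\betab)]=\tfrac1q+\tfrac{q-1}{q}\sum_{\rhob}\hat g_1(\rhob)\hat g_2(\rhob)\hat g_3(\rhob)$; assume for contradiction that a $\delta$-fraction of passing pairs have $g(\alphab)$ inconsistent with every list character; let $X$ be that inconsistent set and define $\tilde g$ to equal $g$ on $X$ and to be random outside it; apply the identity to $(\tilde g,g,g)$ to extract a character $\chi_{\rhob^\ast}$ with $\hat{\tilde g}(\rhob^\ast)=\Omega(\delta)$, hence an $\Omega(\delta)$ agreement of $g$ with $\chi_{\rhob^\ast}$ concentrated on $X$, forcing $\rhob^\ast$ into the list and contradicting the definition of $X$. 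Separately, your stated use of the scalar-respecting hypothesis (upgrading additivity to homogeneity) is vacuous for prime $q$, where additivity already implies $\F_q$-linearity; its actual role in the proof is to make the Fourier coefficients of $g$ real, which is what ties ``large Fourier coefficient'' to ``large agreement'' and lets the trilinear sum be lower-bounded term by term.
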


The proof of the above theorem follows by combining known ideas in literature, more precisely, we combine the arguments made in \cite{AS03} and \cite{KS13} to obtain the theorem. We include a proof of the above theorem in Appendix~\ref{sec:appendix} for the sake of completeness. 

Next, we extend the above theorem to  functions from $\F_q^d$ to $\F_q^\ell$ for any $\ell\in\mathbb{N}$ by using the above theorem as a blackbox result.  

\begin{theorem}[Piecing Together]\label{thm:piecing}
Let $q$ be a prime number and $d,\ell\in \mathbb{N}$. 
Let $f:\F_q^d\to\F_q^{\ell}$ be a scalar respecting function. 
Let $\epsilon, \kappa > 0$ be parameters such that 
$1/2\ge \kappa \ge \epsilon \gg  \frac{1}{q^{1/24}}$. 
If $f$  passes $\mathcal{T}$ with probability $\varepsilon$, then there exists a \emph{linear} function $c:\F_q^d\to\F_q^{\ell}$, such that the following holds.
$$
\underset{\alphab\sim \var(f,\mathcal{T}) }{\Pr}\left[\|f(\alphab)-c(\alphab)\|\le \kappa \right]\ge \frac{\varepsilon}{2}.
$$ 
\end{theorem}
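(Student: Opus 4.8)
The plan is to apply Theorem~\ref{thm:lintest} coordinate by coordinate and then glue the resulting linear functions together. Write $f = (f^{(1)}, \ldots, f^{(\ell)})$ where each $f^{(t)} : \F_q^d \to \F_q$; note that each $f^{(t)}$ is scalar respecting because $f$ is. The first observation is that the test $\mathcal{T}$ run on $f$ only accepts $(\alphab,\betab)$ when \emph{all} coordinates agree, so for every $t$ the coordinate function $f^{(t)}$ passes its own linearity test with probability at least $\varepsilon$, and moreover $S_{f,\mathcal{T}} \subseteq S_{f^{(t)},\mathcal{T}}$ and $\var(f,\mathcal{T}) \subseteq \var(f^{(t)},\mathcal{T})$. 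So for each $t$ we get from Theorem~\ref{thm:lintest} (with $\delta$ chosen so that $\epsilon \gg \delta \gg q^{-1/3}$ and $\delta/\epsilon$ small — this is where we need the hypothesis $\epsilon \gg q^{-1/3}$, so that such a $\delta$ exists, and where $\kappa$ enters as the target closeness) a list $c^{(t)}_1,\ldots,c^{(t)}_{r}$ of linear functions with $r = O(1/\delta^2)$ such that a $1 - O(\delta/\epsilon)$ fraction of pairs in $S_{f,\mathcal{T}}$ have $f^{(t)}$ matching a unique list element on both coordinates.

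The heart of the argument is then a piecing / selection step: I want to pick one index $j_t \in [r]$ per coordinate $t$ so that the single linear function $c := (c^{(1)}_{j_1}, \ldots, c^{(\ell)}_{j_\ell})$ is close to $f$ on a $\geq \varepsilon^2/3$ fraction of $\var(f,\mathcal{T})$. The natural way to do this is an averaging/probabilistic argument over $\alphab$. First, I would restrict attention to the set $A$ of $\alphab \in \var(f,\mathcal{T})$ that are "good", i.e. there is a $\betab$ with $(\alphab,\betab) \in S_{f,\mathcal{T}}$ and, simultaneously for the overwhelming majority of the $\ell$ coordinates, $f^{(t)}(\alphab) = c^{(t)}_{j}(\alphab)$ and $f^{(t)}(\betab) = c^{(t)}_{j}(\betab)$ for the unique matching $j$; a counting argument using $|S_{f,\mathcal{T}}| \geq \varepsilon q^{2d}$ together with the per-coordinate $1 - O(\delta/\epsilon)$ guarantees shows $|A|$ is a constant fraction of $\var(f,\mathcal{T})$, hence $|A| \geq \Omega(\varepsilon^2 q^d)$ after accounting for $|\var(f,\mathcal{T})| \geq \varepsilon q^d$ (which follows since each accepting pair contributes an element to $\var$, so $|\var(f,\mathcal{T})| \cdot q^d \geq |S_{f,\mathcal{T}}| \geq \varepsilon q^{2d}$). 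Then, for each $\alphab \in A$ and each coordinate $t$, there is a well-defined index $j_t(\alphab) \in [r]$ that $f^{(t)}(\alphab)$ agrees with. Choosing $(j_1,\ldots,j_\ell)$ to be the most popular tuple of indices among $\alphab \in A$ — or, more carefully, choosing each $j_t$ by a majority vote weighted appropriately and arguing the choices can be made coherently — yields a set of $\alphab$'s of the desired density on which $f(\alphab)$ agrees with $c(\alphab) := (c^{(1)}_{j_1}(\alphab),\ldots,c^{(\ell)}_{j_\ell}(\alphab))$ in all but a $\kappa$ fraction of coordinates.

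The main obstacle, and the step that needs the most care, is that the index $j$ matched by coordinate $t$ genuinely depends on $\alphab$, so a naive "pick the globally best tuple" argument could lose an exponential-in-$\ell$ factor. The fix is to exploit that linear functions over $\F_q$ are either equal everywhere or agree on only a $1/q$ fraction of inputs: if for a fixed $t$ two different list elements $c^{(t)}_{j}$ and $c^{(t)}_{j'}$ were each matched by $f^{(t)}$ on a non-negligible set of $\alphab$, that is fine — but the point is that on the set $A$, fixing $\alphab$ fixes $j_t(\alphab)$ uniquely, and the number of distinct functions $\alphab \mapsto j_t(\alphab)$ realized is at most $r = O(1/\delta^2)$, a constant. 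So there are only $r^{\ell}$ possible tuples — still too many — but we do not need a single tuple to work for all of $A$; we partition $A$ according to $(j_1(\alphab),\ldots,j_\ell(\alphab))$ and need one cell of density $\geq \varepsilon^2/3$. Here I would instead run the popularity argument \emph{coordinate-wise with slack}: for each $t$ let $j_t$ be the index matched by $f^{(t)}$ on the largest fraction of $\var(f,\mathcal{T})$; since the total matched fraction is $\geq \varepsilon - O(\delta/\epsilon) \cdot (\text{stuff})$ and there are $r$ choices, $j_t$ is matched on $\geq \Omega(\varepsilon/r) = \Omega(\varepsilon \delta^2)$ of inputs — and then a second averaging over $t$ shows that for $\geq \varepsilon^2/3$ of $\alphab \in \var(f,\mathcal{T})$, the number of coordinates $t$ with $f^{(t)}(\alphab) \neq c^{(t)}_{j_t}(\alphab)$ is at most $\kappa \ell$, because the expected fraction of bad coordinates is small (using $\kappa \geq \epsilon$ to absorb the loss), and Markov finishes it. Getting the quantitative bookkeeping to land exactly at $\varepsilon^2/3$ with the constraint $\kappa \geq \epsilon \gg q^{-1/3}$ is the delicate part; everything else is routine.
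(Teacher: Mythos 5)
Your setup is the same as the paper's: decompose $f$ into coordinate functions $f_i:\F_q^d\to\F_q$, observe $S_{f,\mathcal{T}}\subseteq S_{f_i,\mathcal{T}}$ so each $f_i$ passes the test with probability at least $\varepsilon$, and apply Theorem~\ref{thm:lintest} per coordinate. You also correctly identify the real difficulty: the decoded index $j_t(\alphab)$ depends on $\alphab$, and a most-popular-tuple argument loses a factor exponential in $\ell$. But the fix you propose does not work. If for each coordinate $t$ you take $j_t$ to be the index matched on the largest fraction of $\var(f,\mathcal{T})$, that index is only guaranteed to be matched on an $\Omega(\varepsilon/r)=\Omega(\varepsilon\delta^2)$ fraction of points. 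Consequently, for a \emph{typical} $\alphab$ the expected fraction of coordinates $t$ with $f_t(\alphab)=c^{(t)}_{j_t}(\alphab)$ is only $\Omega(\varepsilon\delta^2)$; the expected fraction of \emph{bad} coordinates is $1-\Omega(\varepsilon\delta^2)$, i.e.\ close to $1$, not small, so Markov gives nothing and the claim ``the expected fraction of bad coordinates is small'' is false. Nothing in your argument forces the per-coordinate popular index to be matched by the \emph{same} large set of $\alphab$'s across different coordinates, which is exactly what is needed.

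The missing idea is to exploit the \emph{pairwise uniqueness} in the conclusion of Theorem~\ref{thm:lintest}, anchored at a single point. The paper picks a vertex $\alphab^\ast$ that is simultaneously (i) fully decoded on all but an $\varepsilon^{2.5}$ fraction of coordinates and (ii) of high degree (at least $(\varepsilon^2/2)|\var(f,\mathcal{T})|$) in the graph whose edges are the accepting pairs; such a vertex exists by two averaging arguments. The tuple of indices is then defined by $\alphab^\ast$ alone: $c$ is built from the list elements $c^i_{v_{\alphab^\ast}(i)}$. The point is that for any neighbor $\betab$ of $\alphab^\ast$ and any coordinate $i$ on which $\betab$ is decoded, the \emph{unique} linear function agreeing with $f_i$ on both endpoints of the accepting pair $(\alphab^\ast,\betab)$ must be the one already selected at $\alphab^\ast$; so the index decoded at $\betab$ is forced to coincide with $v_{\alphab^\ast}(i)$, and $\|f(\betab)-c(\betab)\|\le\varepsilon^{2.5}\le\kappa$ for every decoded neighbor. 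The density of such neighbors is at least $\varepsilon^2/2-\varepsilon^{2.5}\ge\varepsilon^2/3$. Without this anchoring-plus-uniqueness step (or an equivalent coherence mechanism), your coordinate-wise vote cannot produce a single linear $c$ that is $\kappa$-close to $f$ on an $\varepsilon^2/3$ fraction of $\var(f,\mathcal{T})$, so the proposal as written has a genuine gap at its central step.
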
 
\begin{proof}
	The proof outline of this theorem is as follows. We consider $f_i:\F_q^d\to\F_q$, which is the restriction of $f$ to its $i^{\text{th}}$ output coordinate/symbol, and invoke Theorem~\ref{thm:lintest} to obtain a list of linear functions that agree with it uniquely on most of the pairs $(\alphab,\betab)$ that pass the linearity test. Next we argue that since linear functions agree on at most $1/q$ fraction of the domain, it is possible to assume that for most $\alphab\in \var(f,\mathcal{T})$, all its corresponding pairs in $S(f,\mathcal{T})$ are consistent on the $i^{\text{th}}$  coordinate w.r.t.\ a single linear function from the list. Then, by an averaging argument, we identify $\alphab^*\in \var(f,\mathcal{T})$ and $T_{\alphab^*}\subseteq \var(f,\mathcal{T})$ such that (i) for all $\betab\in T_{\alphab^*}$, we have that $(\alphab^*,\betab)$ agrees uniquely with a linear function from the list on most of the $\ell$ coordinates, and (ii) $|T_{\alphab^*} |=\Omega(\varepsilon\cdot  q^d)$. Thus, by stitching together the linear functions in the list that agree with $\alphab^*$ on each coordinate, we obtain the global linear function claimed in the theorem statement.  We now present the formal proof.

	 For every $i\in[\ell]$, let $f_i:\F_q^d\to\F_q$ be a scalar respecting function obtained by only looking at the $i^{\text{th}}$ coordinate output of $f$. It immediately follows that  
	$f_i$ passes $\mathcal{T}$ with probability $\varepsilon_i\ge \varepsilon$, since $S_{f,\mathcal{T}}\subseteq S_{f_i,\mathcal{T}}$. Moreover, each $f_i$ is also scalar respecting, and thus from Theorem~\ref{thm:lintest} we have that there are  $r_i = O(1/\delta_i^2)$ many linear functions $c_1^i,\ldots ,c_{r_i}^i:\F_q^d\to\F_q$,  such that the following holds.
	$$
	\underset{ (\alphab, \betab) \sim S_{f_i,\mathcal{T}} }{\Pr}\left[\exists\ \text{unique } j\in[r_i] \text{ such that }f_i(\alphab)=c_j^i(\alphab),
	f_i(\betab)=c_j^i(\betab) \right]\ge 1- O\left(\frac{\delta_i}{\epsilon_i}\right).
	$$
	
	Let $R_i\subseteq S_{f_i,\mathcal{T}}$ be the largest set such that for all $(\alphab, \betab)\in R_i$, the event `$\exists\ \text{unique } j\in[r_i] \text{ such that }f_i(\alphab)=c_j^i(\alphab),
	f_i(\betab)=c_j^i(\betab)$' does not happen. We note that $|R_i|=O\left(\frac{\delta_i}{\epsilon_i}\right)\cdot |S_{f_i,\mathcal{T}}|\le O\left(\frac{\delta_i}{\epsilon_i}\right)\cdot q^{2d}= O\left(\frac{\delta_i}{\epsilon_i}\right)\cdot \frac{|S_{f,\mathcal{T}}|}{\varepsilon}$. 
	
	We set $\delta_i=O(\varepsilon^{4}\cdot \varepsilon_i)$, to obtain a subset $\tilde{S}_{f_i,\mathcal{T}}:=  S_{f,\mathcal{T}}\setminus R_i$, for which we know $|\tilde{S}_{f_i,\mathcal{T}}|\ge |{S}_{f,\mathcal{T}}|-|R_i| \ge  (1-\varepsilon^{3})\cdot |{S}_{f,\mathcal{T}}|$. Note that  for every $(\alphab, \betab) \in \tilde{S}_{f_i,\mathcal{T}}$ there exists unique $j\in[r_i]$ such that  $f_i(\alphab)=c_j^i(\alphab)$ and $f_i(\betab)=c_j^i(\betab)$.

	For every $i\in [\ell]$, let $A_i\subseteq \var(f,\mathcal{T})$ be the largest subset for which we have that for every  $\alphab\in A_i$ there are two distinct $j,j'\in [r_i]$ such that $c_j^i(\alphab)=c_{j'}^i(\alphab)$. 
	Since every two linear functions on the domain $\F_q^d$ can agree on their evaluations on exactly $q^{d-1}$ points of the domain, we have that:
	$$
	|A_i|\le {q^{d-1}}\cdot \binom{r_i}{2}=O\left(\frac{q^{d-1}}{\delta_i^4}\right)= O\left(\frac{q^{d-1}}{\varepsilon^{16}\cdot \varepsilon_i^4}\right)= O\left(\frac{q^{d-1}}{\varepsilon^{20}}\right)\le \varepsilon^{4}\cdot q^d,
	$$
	where in the last inequality we used that $\varepsilon\gg\frac{1}{q^{1/24}}$.

	Let $G = (V,E)$ be a directed graph where $V$ is associated with $\var(f,\mathcal{T})$, and we have an edge from $\alphab$ to $\betab$ in $E$ if and only if $(\alphab,\betab)\in S_{f,\mathcal{T}}$ (i.e., $|E|=|S_{f,\mathcal{T}}|$). For every $i\in [\ell]$, let $\tilde G_i = (V,\tilde E_i)$ be a subgraph of $G$ where we have an edge from $\alphab$ to $\betab$ in $\tilde E_i$ if and only if $(\alphab,\betab)\in \tilde{S}_{f_i,\mathcal{T}}$ and $\alphab\notin A_i$. Note the following lower bound on size of $\tilde E_i$: $$|\tilde E_i|\ge |\tilde S_{f_i,\mathcal{T}}|-\left(|A_i|\cdot q^d\right)\ge  \left((1-\varepsilon^{3})\cdot \varepsilon-\varepsilon^{4}\right)\cdot q^{2d}\ge (1-\varepsilon^2)\cdot \varepsilon\cdot q^{2d}=(1-\varepsilon^2)\cdot |E|,$$ where we used that $\varepsilon\le 1/2$.
	
	We now claim the following:
	
	\begin{claim}\label{claimTh3.2}
		There is a vertex $\alphab^*$ in $V$ and a subset $T_{\alphab^*}$ of its out-neighbors in $G$ such that the following holds:
		\begin{itemize}
\item $|T_{\alphab^*}|\ge (1-\varepsilon)\cdot \varepsilon\cdot q^d$. 
\item $\forall \betab\in T_{\alphab^*},$ we have $\underset{\substack{i\sim [\ell]}}{\Pr}\left[\left(\alphab^*,\betab\right)\in \tilde E_i\right]\ge 1-\varepsilon$.
		\end{itemize}
	\end{claim}
	
	Assuming the above claim, we will first complete the proof of the theorem. We now define $s_1,\ldots ,s_{\ell}\in \mathbb{Z}$ as follows. For every  $i\in [\ell]$ if there is some $\betab \in T_{\alphab^*}$ such that $(\alphab^*,\betab)\in \tilde E_i$, then let $s_i\in [r_i]$ be the unique index such that $c_{s_i}^i(\alphab)= f_i(\alphab)$ (the uniqueness follows from the observation that $\alphab^*\notin A_i$); otherwise we define $s_i$ to be 0.
	We build a linear function $c:\F_q^d\to\F_q^\ell$ as follows. For every $i\in [\ell]$, let $c_0^i:\F_q^d\to\F_q$ denote the all zeroes function. Then,
	$$ \forall \alphab\in \F_q^d,\ 
	c(\alphab):=\left(c_{s_1}^1(\alphab),c_{s_2}^2(\alphab),\ldots ,c_{s_{\ell}}^\ell(\alphab)\right).
	$$

	For every $\betab\in T_{\alphab^\ast}$, we claim that $$
	\|f(\betab)-c(\betab)\|\le \varepsilon.
	$$
	To see this, note that for every $i\in[\ell]$ such that $(\alphab^*,\betab)\in\tilde E_i$ we have that $c_{s_i}^i$ agrees with $f_i$ on $\betab$ (since $(\alphab^*,\betab)\in\tilde S_{f_i,\mathcal{T}}$ and $\alphab^*\notin A_i$). Thus, we have 
	$$
	\|f(\betab)-c(\betab)\|\le 1-\underset{\substack{i\sim [\ell]}}{\Pr}\left[\left(\alphab^*,\betab\right)\in \tilde E_i\right]\le \varepsilon \le \tau.
	$$
	
	Finally, we have that 
	\begin{align*}
		&
		\underset{\alphab\sim \var(f,\mathcal{T}) }{\Pr}\left[\|f(\alphab)-c(\alphab)\|\le \kappa \right]\ge \frac{|T_{\alphab^\ast}|}{|\var(f,\mathcal{T})|}\ge (1-\varepsilon)\cdot \varepsilon\ge \frac{\varepsilon}{2},
	\end{align*}
where in the last inequality we used that $\varepsilon\le 1/2$.	Thus, we return to the proof of Claim~\ref{claimTh3.2}.
	
	 \begin{proof}[Proof of Claim~\ref{claimTh3.2}] Recall that for all $i\in [\ell]$, we have that 
		$|\tilde E_i|\ge (1-\varepsilon^2)\cdot |E|$. This implies that $\sum_{i\in [\ell]}|\tilde E_i|\ge (1-\varepsilon^2)\cdot |E|\ell$.
		
		Let $\tilde E\subseteq E$ be the largest subset such that for all $e\in \tilde E$, we have $\underset{i\in [\ell]}{\Pr}[e\in \tilde E_i]\ge 1-\varepsilon$.  Then we claim that $|\tilde E|\ge (1-\varepsilon)\cdot |E|$. Suppose otherwise, then there are more than $\varepsilon\cdot |E|$ many edges  such that each of them appears in less than $1-\varepsilon$ fraction of the graphs $G_1,\ldots ,G_{\ell}$. Even if the remaining edges (i.e., at most $(1-\varepsilon)\cdot |E|-1$ many edges) appear in all the graphs, this implies that $\sum_{i\in [\ell]}|\tilde E_i|< \ell\cdot( (1-\varepsilon)|E| + \varepsilon\cdot |E|\cdot (1-\varepsilon)) = \ell\cdot \left(|E| -\varepsilon^{2}\cdot |E|\right)$, a contradiction. 
		
		Let $\tilde G$ be the subgraph of $G$ whose edge set is $\tilde E$. Then by an averaging argument, there is a vertex $\alphab^*$ in $\tilde G$ whose  degree is $|\tilde E|/\var(f,\mathcal{T})\ge (1-\varepsilon)|E|/q^d= (1-\varepsilon)\cdot \varepsilon\cdot q^d$. The subset $T_{\alphab^*}$ of neighbors of $\alphab^*$ in the claim statement is exactly the neighbors of $\alphab^*$ in $\tilde{G}$ and we note that for every $\betab\in T_{\alphab^*}$, we have that  $\underset{i\in [\ell]}{\Pr}[(\alphab^*,\betab)\in \tilde E_i]\ge 1-\varepsilon$. 
	\end{proof}
	
\end{proof}

\section{Almost Polynomial Factor FPT Inapproximability of k-Clique}\label{sec:main}

In this section we prove Theorem~\ref{thm:main}. More precisely, we prove the following.
\begin{theorem}\label{thm:clique}
Let $\mathbb{P}$ be the set of all prime numbers. 
For every  increasing\footnoteref{footnote:increasing} computable function $F:\mathbb{N}\to\mathbb{N}$ such that $\forall k\in\mathbb{N},\ F(k)\le 1+\lfloor\log k\rfloor$, there exist computable functions $\Lambda:\mathbb{N}\to\mathbb{N}$ and $\hat{q}:\mathbb{N}\to\mathbb{P}$ such that the following holds. For every fixed parameter $k\in\mathbb{N}$, there is a randomized reduction running in $\Lambda(k)^{O(1)}\cdot\poly(n)$ time which given an instance $(U_1,U_2,\ldots,U_k)$  of $k$-vector sum as input, where for all $i\in[k]$ we have that $U_i$ is a collection of $n$ vectors in $\F_{\hat{q}(k)}^{O(k^2\log n)}$, outputs a graph $G$ such that the following holds. 
\begin{description}
\item[Completeness:] If there exist $\vec{u}_1\in U_1,\ldots,\vec{u}_k\in U_k$ such that
\[
\sum_{i\in[k]}\vec{u}_i=\vec{0},
\] then, there is a clique in $G$ of size exactly $\Lambda(k)$. 
\item[Soundness:] If for all $\vec{u}_1\in U_1,\ldots,\vec{u}_k\in U_k$ we have that
\[
\sum_{i\in[k]}\vec{u}_i\neq\vec{0},
\]
then, there is no  clique in $G$  of size  $\Lambda(k)^{1-\frac{1}{F(\Lambda(k))}}$.
\item[Size:] The number of vertices in $G$ is at most $\Lambda(k)\cdot \poly(n)$.
\end{description}
\end{theorem}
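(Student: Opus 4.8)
The plan is to construct the graph $G$ directly from the $k$-vector sum instance, following Lin's ``hypercube host graph'' philosophy but over $\F_q$ with $q=\hat q(k)$ chosen large enough (depending only on $k$ and $F$) to force the large gap. Concretely, I would let $d = \Theta(k^2\log n)$ be the ambient dimension of the vectors, and set $\Lambda(k) = q^{2d/(k^2\log n)\cdot\text{(something)}}$; more honestly, $\Lambda(k)$ should be $q^{2k}$ times a polynomial normalization so that the ``size'' bound $\Lambda(k)\cdot\poly(n)$ holds. The vertex set of $G$ is organized into $q^{2k}$ (or $\Lambda(k)$) \emph{clouds}, one for each pair $(\alphab,\betab)$ of ``query points'', where a cloud is an independent set with one vertex per consistent triple $(x,y,x+y)\in[n]^3$ encoding candidate values of the purported Hadamard codeword $f$ at $\alphab$, $\betab$, $\alphab+\betab$. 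Edges go between two cloud-vertices exactly when they are (a) consistent on any shared coordinate assignment, (b) satisfy the axis-parallel constraints of type (ii) — forcing the $i$-th block of the decoded linear function to be (the encoding of) an actual vector of $U_i$ — and (c) satisfy the line-through-origin constraints of type (iii) forcing $\sum_i u_i=\vec 0$. A $\Lambda(k)$-clique then corresponds to a globally consistent choice, i.e.\ a single linear function passing the linearity test on all query pairs.

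For \textbf{completeness}, given $\vec u_1\in U_1,\dots,\vec u_k\in U_k$ with $\sum_i\vec u_i=\vec 0$, I would take $c$ to be the Hadamard encoding (linear map) determined by $(\vec u_1,\dots,\vec u_k)$, pick in each cloud $(\alphab,\betab)$ the vertex corresponding to the triple $(c(\alphab),c(\betab),c(\alphab+\betab))$, and check that all these vertices are pairwise adjacent: consistency is automatic since they all come from one function $c$, the type-(ii) constraints hold because the $i$-th block of $c$ encodes $\vec u_i\in U_i$, and the type-(iii) constraints hold because the sum is $\vec 0$. This yields a clique of size exactly $\Lambda(k)$ (one vertex per cloud).

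For \textbf{soundness} — the heart of the argument and the main obstacle — I would argue the contrapositive: suppose $G$ has a clique $K$ of size $\Lambda(k)^{1-1/F(\Lambda(k))}$. Since $K$ picks at most one vertex per cloud, it covers an $\varepsilon := \Lambda(k)^{-1/F(\Lambda(k))}$-fraction of the clouds, and the pairwise adjacency means the induced partial function $f$ passes the linearity test $\mathcal T$ on all pairs inside the clique, i.e.\ with probability $\gtrsim\varepsilon^2$ over random pairs (after defining $f$ arbitrarily off the covered points, or restricting to $\var(f,\mathcal T)$). The key point is that $\varepsilon$ is polynomially small in $\Lambda(k)\sim q^{2k}$, but still $\varepsilon\gg q^{-1/3}$ provided $q$ is chosen large enough relative to $k$ and $F$ — this is exactly where the ``$q$ depending only on $k$'' and the slow growth of $F$ get used, and it is the delicate bookkeeping. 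Apply Theorem~\ref{thm:piecing} (with $\kappa$ some small constant, say $\kappa = \varepsilon$) to extract a genuine linear function $c:\F_q^d\to\F_q^\ell$ with $\|f(\alphab)-c(\alphab)\|\le\kappa$ on an $\varepsilon^2/3$-fraction of $\var(f,\mathcal T)$. Then I would show that because the clique also satisfies the type-(ii) and type-(iii) constraints on a large fraction of clouds, the linear function $c$ must — after a union-bound / averaging over the $\ell = O(k)$ blocks, absorbing the $\kappa$-noise — have its $i$-th block equal to the Hadamard encoding of some $\vec u_i\in U_i$ and satisfy $\sum_i\vec u_i=\vec 0$, contradicting the soundness hypothesis of the $k$-vector sum instance. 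The genuinely intricate part, flagged already in the paper's overview, is that list decoding rather than unique decoding means $c$ is only $\kappa$-close to $f$, so each reconstructed block is only approximately a codeword and only approximately from $U_i$; I would handle this by choosing the distance parameters so that the approximation error is below the minimum distance between distinct Hadamard codewords (namely $1-1/q$), forcing exact recovery, and by carefully chaining the ``$1-\varepsilon^{\text{something}}$ fraction'' estimates from Theorem~\ref{thm:piecing} across the $O(k)$ coordinate blocks and the three constraint types.

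Finally, the \textbf{size} bound follows since $G$ has $q^{2k}\le\Lambda(k)$ clouds, each of size at most $n^3$ before the consistency pruning but actually $O(n)$ after it (the triple $(x,y,x+y)$ is determined by two of its entries, and we only keep locally-consistent ones), so the vertex count is $\Lambda(k)\cdot\poly(n)$; the reduction clearly runs in $\Lambda(k)^{O(1)}\cdot\poly(n)$ time. To conclude Theorem~\ref{thm:main} from Theorem~\ref{thm:clique}, I would compose with the W[1]-hardness of $k$-vector sum (Theorem~\ref{thm:kvec}), observe that the new parameter $\Lambda(k)$ is a computable function of $k$, and relabel $F$ in terms of the desired $H$ so that $\Lambda(k)^{1/F(\Lambda(k))}$ matches $(\Lambda(k))^{1/H(\Lambda(k))}$ up to the direction of the inequality (clique size $k'/k'^{1/H(k')}$ in soundness), choosing $q=\hat q(k)$ at the end to satisfy all the $\varepsilon\gg q^{-1/3}$ constraints accumulated in the soundness analysis.
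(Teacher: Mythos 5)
Your high-level architecture matches the paper's: an FGLSS-style graph whose clouds are indexed by linearity-test query pairs, completeness by selecting the vertices of a genuine Hadamard codeword, and soundness by extracting an approximate linear function via Theorem~\ref{thm:piecing} and decoding it back to a solution of $k$-vector sum. However, there is a genuine structural gap at the very start. You place the query points $\alphab,\betab$ in $\F_q^{d}$ with $d=\Theta(k^2\log n)$, the ambient dimension of the input vectors. Then the number of clouds is $q^{\Theta(k^2\log n)}=n^{\Theta(k^2)}$, so the new clique parameter is not a function of $k$ alone and the graph is far larger than $\Lambda(k)\cdot\poly(n)$; the reduction is no longer a parameterized reduction at all. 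The paper avoids this by making the domain of the tested function $\F_q^{k^2}$ (size depending only on $k$) and pushing all $n$-dependence into the \emph{alphabet} $\F_q^{\ell}$ with $\ell=\Theta(\log_q n)$, via a random linear projection $g(\vec b)=(A_1\vec b,\ldots,A_\ell\vec b)$ of the input vectors. You never introduce such a projection, which also leaves unexplained why the reduction is randomized: the randomness is exactly the choice of the $A_i$, and the distance-preservation statements (Propositions~\ref{prop:collision} and~\ref{prop:collisionalphabeta}) proved about $g$ are what later force exact recovery of the $\vec u_i$ — your appeal to ``the minimum distance $1-1/q$ of the Hadamard code'' cannot substitute for them, because after projection the relevant objects live in $\F_q^{\ell}$, not in the code's ambient space.

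The second gap is in the decoding step, which you compress into ``a union-bound / averaging over the $\ell=O(k)$ blocks'' (note $\ell$ is $\Theta(\log_q n)$, not $O(k)$). From the single linear function $c$ produced by Theorem~\ref{thm:piecing} one must extract, for each $i$, one vector $\vec u_i^\ast\in U_i$. The paper does this in two nontrivial stages: first, for \emph{each} direction $\vec\alpha\in\F_q^k$ it finds two clique points on a line in direction $\vec{\bm e}_i\cdot\vec\alpha$ whose first coordinates lie in the good set $R^\ast$, and uses the absence of a Type~4 non-edge to produce a candidate $\vec u_{\vec\alpha}\in U_i$ with $\|\M(\vec\alpha,\vec\Theta_i-g(\vec u_{\vec\alpha}))\|\le 2\kappa$; second, it proves all these candidates coincide via the approximate identities $\vec u_{\vec\alpha}+\vec u_{\vec\beta}=2\vec u_{\vec\alpha+\vec\beta}$ combined with Proposition~\ref{prop:collisionalphabeta} (Claims~\ref{claim:decodespecificalpha} and~\ref{claim:decodei}). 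Neither stage is an averaging argument, and without them you cannot even state which $\vec u_i$ is supposed to sum to $\vec 0$. Finally, Theorem~\ref{thm:piecing} requires the tested function to be scalar respecting; the paper enforces this with dedicated Type~3 non-edges and a careful two-phase extension of the partial function off $\var(T)$, a point your construction omits.
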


The proof of Theorem~\ref{thm:main} then follows by invoking the above theorem and noting the W[1]-hardness of $k$-Vector Sum problem (Theorem~\ref{thm:kvec}).

The proof outline of Theorem~\ref{thm:clique} in the subsequent subsections is as follows. In Section~\ref{sec:definition} we introduce a few definitions and results which will be useful for the design and analysis of our reduction. In Section~\ref{sec:construction}, we outline a randomized reduction from the $k$-vector sum to the $\Lambda(k)$-clique problem. In Section~\ref{sec:analysis}, we prove the completeness, soundness, and claims on the reduction parameters. 

\subsection{Notations and Definitions}
\label{sec:definition}

  In this subsection we introduce a few definitions and prove some basic results which will come in handy in the subsequent subsections.

For any finite field $\F$ we define the operator $\langle\cdot \rangle:\F^d\times \F^d\to \F$ (for every $d\in\mathbb{N}$) as follows. For all $\vec{a}:=(a_1,\ldots,a_d),\vec{b}:=(b_1,\ldots,b_d)\in\F^d$ we have $$
 \langle \vec{a},\vec{b}\rangle=\sum_{i\in[d]}\left(a_i\cdot b_i\right),
$$
where the sum is over $\F$. 

Next, we define an operator $\M$ which mimics matrix multiplication but by treating the matrices as vectors. Formally, for any field $\F$ and  $t,d\in\mathbb{N}$, we define $\M:\F^{d}\times \F^{t\cdot d}\to \F^t$ as follows. For all $\vec{a}\in \F^d,\vec{b}:=(\vec{b}_1,\ldots,\vec{b}_{t})\in\F^{t\cdot d}$ (where $\vec{b}_i\in\F^d$ for all $i\in[t]$) we have:
$$
\M(\vec{a},\vec{b}):=\left(\langle \vec{a},\vec{b}_1\rangle,\ldots ,\langle \vec{a},\vec{b}_t\rangle\right).
$$

We now define a linear transformation $g$ that will be useful later on. 
Let $k\in\mathbb{N}$ and $q\in\mathbb{P}$. Let $B\subseteq \F_q^m$, where $m=\Theta(k^2\log n)$ and $|B|=n$.  Let $\ell:=12\log_q n$. In the next subsection, we will fix $k$, set $q$ to be a prime depending on $k$, and use the notations $m$ and  $\ell$ as specified here.

Select $\ell$ matrices $A_1,A_2,\ldots,A_\ell\in \mathbb{F}_q^{k\times m}$ uniformly and independently at random. For every  $\vec{b}\in \mathbb{F}_q^m$, let  
\[
g(\vec{b}):=(A_1\vec{b},\cdots,A_\ell\vec{b})\in\mathbb{F}_q^{k\cdot \ell}.
\]

   Let $\tilde{B}_r\subseteq \F_q^m$ be the $r$-sumset of $B$, i.e., $$\tilde{B}_r:=\left\{\sum_{i\in[r]}\gamma_i\cdot \vec{b}_i\bigg| \gamma_1,\ldots ,\gamma_r\in\F_q,\ \text{and }\vec{b}_1,\ldots,\vec{b}_r\in B\right\}.$$
   
   We next show that if $q$ is large but only a function of $k$ (independent of $n$), then with very high probability, the relative Hamming weight of the images of all vectors in $\tilde{B}_k$ under $g$ is high. 
  
 \begin{proposition}\label{prop:collision}
Suppose $q>2^{12k}$ but $q=O_k(1)$. Then with probability at least $1-\frac{O_k(1)}{n^{k}}$,   for every  $\vec{b}\in \tilde{B}_k\setminus\{\vec{0}\}$, we have that  $\|g(\vec{b})\|\ge 2/3$. 
\end{proposition}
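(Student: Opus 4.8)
The plan is to fix an arbitrary nonzero vector $\vec{b}\in\tilde{B}_k$ and bound the probability, over the random choice of $A_1,\dots,A_\ell$, that $\|g(\vec{b})\|<2/3$, and then union-bound over all of $\tilde{B}_k\setminus\{\vec{0}\}$. First I would observe that $g(\vec{b})=(A_1\vec{b},\dots,A_\ell\vec{b})\in\F_q^{k\ell}$ splits into $\ell$ independent blocks $A_j\vec{b}\in\F_q^k$, and that within a block, since $\vec{b}\neq\vec{0}$, each coordinate $(A_j\vec{b})_i=\langle (\text{row } i \text{ of } A_j),\vec{b}\rangle$ is an independent uniform element of $\F_q$ (a nonzero linear functional applied to a uniform matrix row). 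Hence each of the $k\ell$ coordinates of $g(\vec{b})$ is independently uniform on $\F_q$, so $\Pr[\text{coordinate}=0]=1/q$. The relative Hamming weight $\|g(\vec{b})\|$ is $1/(k\ell)$ times the number of nonzero coordinates, which is $k\ell$ minus a sum of $k\ell$ i.i.d.\ Bernoulli$(1/q)$ indicators. Since $q>2^{12k}$ is huge, the expected number of zero coordinates is $k\ell/q$, which is tiny; the event $\|g(\vec{b})\|<2/3$ requires at least $k\ell/3$ zero coordinates, an enormous deviation.

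The key quantitative step is a union bound / binomial tail estimate: $\Pr[\|g(\vec{b})\|<2/3]\le\binom{k\ell}{\lceil k\ell/3\rceil}q^{-\lceil k\ell/3\rceil}\le 2^{k\ell}\cdot q^{-k\ell/3}$. Plugging in $q>2^{12k}$ gives $q^{-k\ell/3}<2^{-4k^2\ell}$, so this probability is at most $2^{k\ell}\cdot 2^{-4k^2\ell}=2^{-(4k^2-k)\ell}\le 2^{-3k^2\ell}$ for $k\ge 1$. Next I would bound $|\tilde{B}_k\setminus\{\vec{0}\}|\le |\tilde{B}_k|\le (q\cdot n)^k = q^k n^k$, since each element of the $k$-sumset is determined by $k$ scalars from $\F_q$ and $k$ vectors from $B$ (with $|B|=n$). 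Taking the union bound, the failure probability is at most $q^k n^k\cdot 2^{-3k^2\ell}$. Recalling $\ell = 12\log_q n$, we have $2^{-3k^2\ell}$; I need to convert this to a power of $n$. Here I should be a little careful: $2^{-3k^2\ell}$ versus $n$ — using $\ell=12\log_q n$ means $q^{\ell}=q^{12\log_q n}=n^{12}$, and $2^{\ell}\le q^{\ell}=n^{12}$, so $2^{-3k^2\ell}\le n^{-3k^2\cdot(\ell/\log_2 q)}$... actually the cleanest route is: since $q>2^{12k}$, we have $\log_2 q > 12k$, hence $\ell = 12\log_q n = 12\log_2 n/\log_2 q < \log_2 n/k$; and separately $2^{-3k^2\ell} = n^{-3k^2\ell/\log_2 n} = n^{-3k^2\cdot 12/\log_2 q} = n^{-36k^2/\log_2 q} < n^{-36k^2/\log_2 q}$. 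Hmm — I want a lower bound on $36k^2/\log_2 q$ but $\log_2 q = O_k(1)$ is a constant depending on $k$, so $n^{-36k^2/\log_2 q}=n^{-c(k)}$ for a constant $c(k)$; choosing the threshold $q$ slightly more generously (say just large enough that $36k^2/\log_2 q \ge 2k+1$, i.e.\ $\log_2 q \le 36k^2/(2k+1)$, consistent with $q=O_k(1)$ and $q>2^{12k}$) makes the union bound $q^k n^k \cdot n^{-(2k+1)} = q^k n^{-k-1} = O_k(1)\cdot n^{-k}$.

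The main obstacle is purely bookkeeping: making the exponents line up so that the product $|\tilde{B}_k|\cdot\Pr[\text{bad}]$ comes out as $O_k(1)/n^k$ as stated, given the specific choices $\ell=12\log_q n$ and $q>2^{12k}$ in the excerpt. There is no conceptual difficulty — the coordinates are manifestly independent and uniform once $\vec{b}\neq\vec{0}$, and a binomial tail bound does the rest — but I would want to double-check whether the constant $12$ in $\ell=12\log_q n$ and the bound $2^{12k}$ on $q$ are exactly what is needed, or whether the proof secretly uses a slightly different relationship (e.g.\ that $2^{k\ell}q^{-k\ell/3}$ must beat $q^k n^k$, which forces $q^{-k\ell/3+\text{something}}\cdot 2^{k\ell}\le n^{-k}$, i.e.\ roughly $(\log_2 q)/3 > $ a constant times $k$, matching $q>2^{O(k)}$). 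The one subtlety worth stating carefully in the write-up is the independence-and-uniformity claim: fixing $\vec b\neq \vec 0$, the map $M\mapsto M\vec b$ from $\F_q^{k\times m}$ to $\F_q^k$ is a surjective linear map, and more strongly each of the $k$ output coordinates depends on a disjoint block of entries of $M$ (one row each), so the $k$ outputs are independent and each uniform; combined with independence across $j\in[\ell]$, all $k\ell$ coordinates of $g(\vec b)$ are i.i.d.\ uniform on $\F_q$.
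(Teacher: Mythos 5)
Your approach is the same as the paper's: fix a nonzero $\vec{b}\in\tilde{B}_k$, observe that the $k\ell$ coordinates of $g(\vec{b})$ are i.i.d.\ uniform over $\F_q$, bound $\Pr[\|g(\vec{b})\|<2/3]$ by $\binom{k\ell}{k\ell/3}q^{-k\ell/3}$, and union-bound over $|\tilde{B}_k|\le (qn)^k$. The core argument is correct.

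The one place where your bookkeeping goes astray — and which you yourself flagged — is the conversion of $q^{-k\ell/3}$ into a power of $n$. By substituting $q>2^{12k}$ at that point you end up with $2^{-\Theta(k^2)\ell}=n^{-\Theta(k^2)/\log_2 q}$, whose exponent degrades as $q$ grows, forcing you to impose an upper bound on $\log_2 q$ that is not part of the hypothesis. This detour is unnecessary: since $\ell=12\log_q n$ is defined with base $q$, one has the exact identity $q^{-k\ell/3}=q^{-4k\log_q n}=n^{-4k}$, valid for every $q$. The hypothesis $q>2^{12k}$ is needed only for the binomial coefficient, where $\binom{k\ell}{k\ell/3}\le 2^{k\ell}=n^{12k/\log_2 q}<n$ precisely because $\log_2 q>12k$ (and this factor only improves as $q$ grows). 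With that reorganization the union bound reads $(qn)^k\cdot n\cdot n^{-4k}=O_k(1)\cdot n^{-3k+1}\le O_k(1)/n^k$, with no extra condition on $q$.
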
  
\begin{proof}
For every $i\in[\ell]$, let $\vec{a}_i^1,\ldots ,\vec{a}_i^k\in\F_q^m$ be the row vectors of  $A_i$.
Fix $\vec{b}\in \tilde{B}_k\setminus\{\vec{0}\}$. 
For any fixed $i\in[\ell]$ and $j\in[k]$, we have $$
\Pr\left[\langle \vec{a}_i^j, \vec{b}\rangle\neq 0\right]=1-\frac{1}{q},
$$
where the probability is over the selection of the random matrix row $\vec{a}_i^j$. 

Next, the probability that for a fixed $\vec{b}$ we have $\|g(\vec{b})\|< 2/3$ is upper bounded by the probability that there exists a subset  $S\subseteq [\ell]\times [k]$ of size    $\ell k/3$ such  that for every $(i,j)\in S$ we have $\langle \vec{a}_i^j, \vec{b}\rangle= 0$. Therefore, 
$$\Pr\left[\|g(\vec{b})\|< \frac{2}{3}\right]\le \binom{\ell k}{|S|}\cdot q^{-\ell k/3}.$$
By union bound, the probability that for every  $\vec{b}\in \tilde{B}_k\setminus\{\vec{0}\}$, we have  $\|g(\vec{b})\|\ge\frac{2}{3}$ is at least:
\begin{align}
 1-|\tilde{B}_k|\cdot\binom{\ell k}{\ell k/3}\cdot q^{-\ell k/3}.\label{eq:calc1}
\end{align}
Note that $|\tilde{B}_k|\le (qn)^k=O_k(1)\cdot n^k$, $\binom{\ell k}{\ell k/3}\le 2^{\ell k}=n^{12k/\log q}< n$, and $q^{-\ell k/3}\le n^{-4k}$. Thus we have expression in \eqref{eq:calc1} is lower bounded by $1-\frac{O_k(1)}{n^{k}}$. 
\end{proof}

We saw above that any two vectors in $B$ disagree on most coordinates under $g$. We see below that this continues to hold even when projected to a fixed smaller subspace.

  \begin{proposition}\label{prop:collisionalphabeta}
Suppose $q>2^{12k}$ but $q=O_k(1)$. Then with probability at least $1-\frac{O_k(1)}{n}$,   for every  distinct $\vec{b}_1,\vec{b}_2\in \tilde{B}_2$, and  linealy independent $\vec{a}_1,\vec{a}_2\in\F_q^k$, we  have that  $$\left\|\M\left(\vec{a}_1,g(\vec{b}_1)\right)- \M\left(\vec{a}_2,g(\vec{b}_2)\right)\right\|\ge \frac{1}{2}.$$  
\end{proposition}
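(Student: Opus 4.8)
The plan is to mimic the proof of Proposition~\ref{prop:collision}, with the key change that we now fix a pair of points $\vec{b}_1,\vec{b}_2$ together with a pair of linearly independent coefficient vectors $\vec{a}_1,\vec{a}_2$, and we analyze the combined quantity $\M(\vec{a}_1,g(\vec{b}_1))-\M(\vec{a}_2,g(\vec{b}_2))\in\F_q^k$ coordinate by coordinate over the $\ell$ blocks. First I would unpack the definitions: writing $A_i$ with rows $\vec{a}_i^1,\dots,\vec{a}_i^k\in\F_q^m$, the $t$-th coordinate of $\M(\vec{a},g(\vec{b}))$ in block $i$ is $\langle \vec{a}, (\langle \vec{a}_i^1,\vec{b}\rangle,\dots,\langle \vec{a}_i^k,\vec{b}\rangle)\rangle = \sum_{j\in[k]} a_j \langle \vec{a}_i^j,\vec{b}\rangle = \langle \sum_{j} a_j \vec{a}_i^j,\ \vec{b}\rangle$. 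Hence, for a fixed block $i$, the difference $\M(\vec{a}_1,g(\vec{b}_1))-\M(\vec{a}_2,g(\vec{b}_2))$ restricted to that block is the vector whose entries (well, the relevant single combined entry, since $\M(\vec{a},g(\vec{b}))$ already collapses the $k$ rows via $\vec{a}$) is $\langle \sum_j (a_1)_j \vec{a}_i^j,\ \vec{b}_1\rangle - \langle \sum_j (a_2)_j \vec{a}_i^j,\ \vec{b}_2\rangle$. Actually, since $\M(\vec a_1,\cdot)$ maps $\F_q^{k\ell}\to\F_q^\ell$, each block $i$ contributes exactly one $\F_q$-entry, so the total object lives in $\F_q^\ell$; I want to show it is nonzero in at least $\ell/2$ of the $\ell$ blocks.

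The heart of the matter is the per-block probability estimate. For a fixed block $i$, the rows $\vec{a}_i^1,\dots,\vec{a}_i^k$ are independent uniform over $\F_q^m$, so the single entry $X_i := \langle \vec{a}_1, (\langle \vec{a}_i^j,\vec{b}_1\rangle)_j\rangle - \langle \vec{a}_2,(\langle \vec{a}_i^j,\vec{b}_2\rangle)_j\rangle = \sum_{j\in[k]} \big\langle (a_1)_j\vec{b}_1 - (a_2)_j\vec{b}_2,\ \vec{a}_i^j\big\rangle$ is a sum over $j$ of independent terms. I claim this entry is uniform over $\F_q$ provided not all of the vectors $w_j := (a_1)_j\vec{b}_1 - (a_2)_j\vec{b}_2$ are zero; indeed each $\langle w_j,\vec{a}_i^j\rangle$ is uniform on $\F_q$ when $w_j\neq\vec{0}$, independent across $j$, so the sum is uniform. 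So I must verify that $w_j\neq\vec0$ for at least one $j\in[k]$. Here the hypotheses kick in: if $w_j=\vec0$ for every $j$, then $(a_1)_j\vec{b}_1 = (a_2)_j\vec{b}_2$ for all $j$. Since $\vec{a}_1,\vec{a}_2$ are linearly independent they are in particular not both the zero vector; pick $j$ with $(a_1)_j\neq 0$ or $(a_2)_j\neq0$. If $(a_1)_j\neq0$ and $(a_2)_j\neq0$ then $\vec{b}_1$ and $\vec{b}_2$ are scalar multiples of each other — and one checks that if $\vec{b}_1 = \lambda\vec{b}_2$ for a scalar $\lambda$ with $\vec b_1,\vec b_2$ distinct and the relation $(a_1)_j\lambda = (a_2)_j$ holding for every index $j$ where either coordinate is nonzero, this forces $\vec{a}_2 = \lambda\vec{a}_1$, contradicting linear independence (the case $\vec b_2=\vec 0$, forcing $\vec b_1=\vec 0=\vec b_2$, contradicts distinctness; the case where $(a_1)_j\ne0,(a_2)_j=0$ forces $\vec b_1=\vec 0$, again handled). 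I expect this little case analysis — ruling out $w_1=\dots=w_k=\vec0$ from the two hypotheses (distinctness of $\vec b_1,\vec b_2$ and linear independence of $\vec a_1,\vec a_2$) — to be the main obstacle, or at least the only genuinely non-mechanical step; everything else is bookkeeping.

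Given the per-block bound $\Pr[X_i=0]=1/q$ and independence of $X_1,\dots,X_\ell$ across blocks (the matrices $A_1,\dots,A_\ell$ are chosen independently), I would finish exactly as in Proposition~\ref{prop:collision}: the event $\big\|\M(\vec a_1,g(\vec b_1))-\M(\vec a_2,g(\vec b_2))\big\| < 1/2$ requires some subset $S\subseteq[\ell]$ of size $\ell/2$ with $X_i=0$ for all $i\in S$, so
\[
\Pr\!\left[\big\|\M(\vec a_1,g(\vec b_1))-\M(\vec a_2,g(\vec b_2))\big\|<\tfrac12\right]\le \binom{\ell}{\ell/2}\cdot q^{-\ell/2}.
\]
Then take a union bound over all choices of the triple $(\vec b_1,\vec b_2,\vec a_1,\vec a_2)$: there are at most $|\tilde B_2|^2\cdot q^{2k} \le (qn)^4\cdot q^{2k} = O_k(1)\cdot n^4$ such choices (using $|\tilde B_2|\le (qn)^2$, and noting $\vec a_1,\vec a_2$ range over a set of size at most $q^{2k}=O_k(1)$). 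With $\ell=12\log_q n$ we get $\binom{\ell}{\ell/2}\le 2^\ell = n^{12/\log q} = n^{o(1)}$ and $q^{-\ell/2}=n^{-6}$, so the union bound gives failure probability $O_k(1)\cdot n^4 \cdot n^{o(1)} \cdot n^{-6} = O_k(1)/n^{2-o(1)} = O_k(1)/n$, as claimed. (I would double-check whether the exponent bookkeeping in fact yields $O_k(1)/n$ exactly as stated or something slightly better, and adjust the slack in $\ell$ or the stated bound accordingly; the condition $q>2^{12k}$ is what makes $2^\ell$ subpolynomial, just as in the previous proposition.)
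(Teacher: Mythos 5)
Your proposal is correct and follows essentially the same route as the paper: a per-block probability of $1/q$ for the relevant coordinate of $\M(\vec{a}_1,g(\vec{b}_1))-\M(\vec{a}_2,g(\vec{b}_2))$ to vanish, followed by the $\binom{\ell}{\ell/2}\cdot q^{-\ell/2}$ bound and a union bound over the $O_k(1)\cdot n^4$ choices of $(\vec{b}_1,\vec{b}_2,\vec{a}_1,\vec{a}_2)$. The only (cosmetic) difference is that you obtain the per-block uniformity by rewriting the entry as $\sum_{j}\langle (a_1)_j\vec{b}_1-(a_2)_j\vec{b}_2,\ \vec{a}_i^j\rangle$ and ruling out that all these combination vectors vanish via one case analysis, whereas the paper uses the joint uniformity of $\vec{a}_1^TA_i$ and $\vec{a}_2^TA_i$ for linearly independent $\vec{a}_1,\vec{a}_2$ and handles the case $\vec{b}_1=\vec{0}$ or $\vec{b}_2=\vec{0}$ separately at the end; your version cleanly unifies the two cases and the exponent bookkeeping (using $2^{\ell}=n^{12/\log q}\le n$) indeed yields the stated $1-O_k(1)/n$.
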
  
\begin{proof}
For fixed non-zero $\vec{a}\in\F_q^k$, $i\in[\ell]$, and any $\vec{\rho}\in \F_q^m$ we have 
\begin{align}
\Pr\left[\vec{a}^T A_i=\vec{\rho}^T\right]=\frac{1}{q^m},\nonumber
\end{align}
where the probability is over the selection of the random matrix  $A_i$. Thus, for a fixed non-zero $\vec{b}\in \F_q^m$, and   any fixed $\gamma\in \F_q$ we have 
\begin{align}
\Pr\left[\langle\vec{a}^T A_i,\vec{b}\rangle=\gamma\right]=\frac{1}{q}.\label{calc3}
\end{align}

Next the probability that for fixed distinct $\vec{b}_1,\vec{b}_2\in \tilde{B}_2\setminus\{\vec{0}\}$, and fixed linearly independent $\vec{a}_1,\vec{a}_2\in\F_q^k$ we have $\left\|\M\left(\vec{a}_1,g(\vec{b}_1)\right)-\M\left(\vec{a}_2,g(\vec{b}_2)\right)\right\|< \frac{1}{2}$ is upper bounded by the probability that there exists a subset  $S\subseteq [\ell]$ of size    $\ell /2$ such  that for every $i\in S$ we have $\langle \vec{a}_1^T A_i, \vec{b}_1\rangle= \langle \vec{a}_2^T A_i, \vec{b}_2\rangle$. However, for a fixed $i\in S$, we have from \eqref{calc3} that  
$$
\Pr\left[\langle \vec{a}_1^T A_i, \vec{b}_1\rangle= \langle \vec{a}_2^T A_i, \vec{b}_2\rangle\right]=\sum_{\gamma\in\F_q}\Pr\left[\langle \vec{a}_1^T A_i, \vec{b}_1\rangle= \langle \vec{a}_2^T A_i, \vec{b}_2\rangle=\gamma\right]=\sum_{\gamma\in \F_q}\frac{1}{q^2}=\frac{1}{q},
$$
where we used the linear independence of $\vec{a}_1$ and $\vec{a}_2$ in the penultimate equality. 
Therefore we have, 
$$\Pr\left[\left\|\M\left(\vec{a}_1,g(\vec{b}_1)\right)- \M\left(\vec{a}_2,g(\vec{b}_2)\right)\right\|< \frac{1}{2}\right]\le \binom{\ell }{|S|}\cdot q^{-\ell/2}.$$
By union bound, the probability that for every    distinct $\vec{b}_1,\vec{b}_2\in \tilde{B}_2\setminus\{\vec{0}\}$, and every linearly independent $\vec{a}_1,\vec{a}_2\in\F_q^k$, we  have that the probability that $\left\|\M\left(\vec{a}_1,g(\vec{b}_1)\right)- \M\left(\vec{a}_2,g(\vec{b}_2)\right)\right\|\ge \frac{1}{2}$  is at least:
\begin{align}
   1-n^4\cdot q^{2k}\cdot\binom{\ell }{\ell /2}\cdot q^{-\ell/2}.\label{eq:calc2}
\end{align}
Note that $\binom{\ell }{\ell /2}\le 2^{\ell}=n^{12/\log q}\le n$ and $q^{-\ell /2}\le  1/n^{6}$. Thus, we have expression in \eqref{eq:calc2} is lower bounded by $1-\frac{O_k(1)}{n}$. 

Finally, we consider the case that either $\vec{b}_1$ or $\vec{b}_2$ is $\vec{0}$. Then the proposition amounts to proving that  for every   $\vec{b}\in \tilde{B}_2\setminus\{\vec{0}\}$, and  $\vec{a}\in\F_q^k\setminus\{\vec{0}\}$, we  have that  $\left\|\M\left(\vec{a},g(\vec{b})\right)\right\|\ge \frac{1}{2}$. For fixed   $\vec{b}\in \tilde{B}_2\setminus\{\vec{0}\}$ and  $\vec{a}\in\F_q^k\setminus\{\vec{0}\}$ we have the probability that $\left\|\M\left(\vec{a},g(\vec{b})\right)\right\|< \frac{1}{2}$ is upper bounded by the probability that there exists a subset  $S\subseteq [\ell]$ of size    $\ell /2$ such  that for every $i\in S$ we have $\langle \vec{a}^T A_i, \vec{b}\rangle= 0$. However, for a fixed $i\in S$, we have from \eqref{calc3} that  this probability is $1/q$.
Therefore we have, 
$$\Pr\left[\left\|\M\left(\vec{a},g(\vec{b})\right)\right\|< \frac{1}{2}\right]\le \binom{\ell }{|S|}\cdot q^{-\ell/2}.$$
By union bound, and calculations similar to the one done previously, the proof is completed. 
\end{proof}

\subsection{Construction} \label{sec:construction}

In this subsection, we provide the reduction from the $k$-Vector Sum problem to the $\Lambda(k)$-Clique problem. 

Fix $F:\mathbb{N}\to\mathbb{N}$ as in the statement of Theorem~\ref{thm:clique}.
Without loss of generality, we assume that  $F$  satisfies the following:   for all $k\in\mathbb{N}$, we have that $F(k)\le 1+\left\lfloor \frac{\log k}{\log\log k}\right\rfloor$. This is because, suppose there is an FPT algorithm   which can decide if a graph has a clique of size $k$ or no clique of size $k^{1-1/F(k)}$, then we can use the same algorithm to decide if  a graph has a clique of size $k$ or no clique of size $k^{1-1/F'(k)}$, where $F'(k):=\min\left(F(k),1+\left\lfloor \frac{\log k}{\log\log k}\right\rfloor\right)$.

 We define the functions 
$\hat{q}:\mathbb{N}\to\mathbb{P}$ and $\Lambda:\mathbb{N}\to\mathbb{N}$ as follows. For every $k\in\mathbb{N}$, we define $\hat{q}(k)$ as the smallest prime number greater than\footnote{This lower bound on the choice of $\hat{q}(k)$ is needed as we would like to use Propositions~\ref{prop:collision}~and~\ref{prop:collisionalphabeta} later in the section.} $\max\left(2^{12k},2^{2^5}\right)$. For every $k\in\mathbb{N}$, we define $\Lambda(k):=(\hat{q}(k))^{2k^2}$. Note that, 
\begin{align}
	\Lambda(k)^{-\frac{1}{F(\Lambda(k))}}=(\hat{q}(k))^{ -\frac{2k^2}{F(\hat{q}(k))^{2k^2})}}\le (\hat{q}(k))^{ -\frac{2k^2}{\frac{2\log\hat{q}(k))^{2k^2}}{\log\log \hat{q}(k))^{2k^2}}}}= 2^{-(0.5+ \log k)}\cdot 2^{-0.5\cdot \log\log \hat{q}(k)}<  \frac{1}{8k},\label{eqF}
\end{align} 
where we used that  $F(k)\le 1+\left\lfloor \frac{\log k}{\log\log k}\right\rfloor$ and that $\hat{q}(k)>2^{2^5}$.

Fix $k\in\mathbb{N}$ and let $q:=\hat{q}(k)$.
Starting from an instance $(U_1,\ldots ,U_k)$ of $k$-Vector Sum over $\F_q$ (where the vectors are $m$-dimensional for $m=\Theta(k^2\log n)$) we  construct a graph $G(V,E)$ as follows. For all $i\in[k]$, let $|U_i|=n/k$. Let $U:=U_1\cup\cdots\cup U_k$.  Recall that $\ell=12\log_q n$. We next put together  Propositions~\ref{prop:collision}~and~\ref{prop:collisionalphabeta} as follows. We sample\footnote{The usage of these sampled matrices makes our reduction randomized.} $\ell$ matrices $A_1,A_2,\ldots,A_\ell\in \mathbb{F}_q^{k\times m}$ uniformly and independently at random and  with probability at least $1-o(1)$ we have (i) $\forall\ \gamma_1,\ldots ,\gamma_k\in \F_q, \forall\ (\vec{u}_1,\ldots ,\vec{u}_k)\in U_1\times \cdots \times U_k,$ if $\underset{i\in[k]}{\sum}\gamma_i\cdot \vec{u}_i\neq \vec{0}$ then:
\begin{align}
\left\|g\left(\sum_{i\in[k]}\gamma_i\cdot \vec{u}_i\right)\right\|\ge \frac{2}{3}\ ,\label{prop1}
\end{align} 
and (ii) $\forall$ $i\in [k]$ and for every three vectors $\vec{u}_1,\vec{u}_2,\vec{u}_3\in U_i$ such that $\vec{u}_3-\vec{u}_1\neq \vec{u}_2-\vec{u}_3$, and every linearly independent $\vec{\alpha},\vec{\beta}\in \F_q^k$ we have: 
\begin{align}
\left\|\M\left(\vec{\alpha},g(\vec{u}_3-\vec{u}_1)\right)- \M\left(\vec{\beta},g(\vec{u}_2-\vec{u}_3)\right)\right\|\ge \frac{1}{2}.\label{prop2}
\end{align} 

Now we are ready to construct $G$. First we define the vertex set $V$ of $G$:
$$
V:=\left\{(\alphab,\betab,\vec{x},\vec{y})\in \F_q^{k^2}\times\F_q^{k^2}\times \F_q^{\ell}\times \F_q^{\ell}\big| \text{ if }\alphab= \betab\text{ then }\vec{x}=\vec{y}\right\}.
$$

 Next, instead of defining the edge set $E$, we will define the graph through it's non-edges. But to do so in a clean way, we need a few additional notations and definitions.

We view every $v:=(\alphab,\betab,\vec{x},\vec{y})\in V$ as a function from $\{\alphab,\betab,\alphab+\betab\}$ to $\F_q^\ell$ where we define $v(\alphab)=\vec{x}$, $v(\betab)=\vec{y}$, and $v(\alphab+\betab)=\vec{x}+\vec{y}$.

For a vertex $v=(\alphab,\betab,\vec{x},\vec{y})\in V$, we define $\var(v):=\{\alphab,\betab,\alphab+\betab\}$. Further, for any set $T\subseteq V$, we abuse notation and  define $\var(T)$ as follows:
$$
\var(T):=\underset{v\in T}{\cup}\var(v).
$$

Finally, for every $v:=(\alphab,\betab,\vec{x},\vec{y})$ and $v^{\prime}:=(\vec{\bm{\alpha^{\prime}}},\vec{\bm{\beta^{\prime}}},\vec{x^{\prime}},\vec{y^{\prime}})\in V$ we do \emph{not} have an edge between them if and only if at least one of the following conditions hold.
\begin{description}
\item[Type 1:] $\vec{\bm{\alpha}}=\vec{\bm{\alpha^{\prime}}}$ and $\vec{\bm{\beta}}=\vec{\bm{\beta^{\prime}}}$. 
\item[Type 2:] There exists $\vec{\bm \rho}\in \var(v)\cap\var(v^{\prime})$ such that $v(\vec{\bm \rho})\neq v^{\prime}(\vec{\bm \rho})$.
\item[Type 3:] There exists some $\gamma\in \F_q$ such that $\alphab=\gamma\cdot \alphab^\prime$ and $\vec{x}\neq \gamma \cdot \vec{x}'$.
\item[Type 4:] There exist some $i\in[k]$ and $\vec{\alpha}\in\F_q^k$, such that $$\vec{\bm{\alpha}}-\vec{\bm{\alpha^{\prime}}}=\vec{\alpha}\cdot \vec{\bm{e_i}}=(\underbrace{\vec{0},\ldots, \vec{0}}_{\substack{i-1\\ \text{coordinates}}},\vec{\alpha},\vec{0},\ldots, \vec{0}),$$
and for all $\vec{u}\in U_i$ we have $ \M(\vec{\alpha},g(\vec{u}))\neq\vec{x}-\vec{x^{\prime}}$. We emphasize here that we think of each coordinate as a vector in $\F_q^k$. 
\item[Type 5:] There exists some $\vec{\alpha}\in\F_q^k$, such that $$\vec{\bm{\alpha}}-\vec{\bm{\alpha^{\prime}}}=(\vec{\alpha},\ldots, \vec{\alpha}),$$
and  $\vec{x}\neq\vec{x^{\prime}}$.
\end{description}

{The intuition behind specifying these non-edges is as follows. For every $k$-tuple of vectors $\vec{\bm{u}}:=(\vec{u}_1,\ldots ,\vec{u}_k)\in U_1\times \cdots \times U_k$ we associate a unique subset of vertices  $T_{\vec{\bm{u}}}$ as follows:
$$
T_{\vec{\bm{u}}}:=\left\{\left(\alphab=(\vec{\alpha}_1,\ldots ,\vec{\alpha}_{k}),\betab=(\vec{\beta}_1,\ldots ,\vec{\beta}_{k}),\sum_{i\in[k]}\M(\vec{\alpha}_i,g(\vec{u}_i)),\sum_{i\in[k]}\M(\vec{\beta}_i,g(\vec{u}_i))\right)\bigg| \alphab,\betab\in \F_q^{k^2}\right\}.
$$
The claim then is that if $\vec{u}_1+\cdots +\vec{u}_k=\vec{0}$ then $T_{\vec{\bm{u}}}$ is a clique. On the other hand if $\vec{u}_1+\cdots +\vec{u}_k\neq\vec{0}$ then the Type 5 non-edges ensure that there is no $\frac{|T_{\vec{\bm{u}}}|}{k\cdot q^{1/k}}$ sized\footnote{\label{footnote}In fact, we could claim that if $\vec{u}_1+\cdots +\vec{u}_k\neq\vec{0}$ then there is no $|T_{\vec{\bm{u}}}|/q^{\delta}$ sized clique, for some tiny $\delta>0$. } clique in the graph induced by $T_{\vec{\bm{u}}}$. 

On the other hand if we pick any subset $T^\prime\subseteq V$ of size $q^{2k^2}$ in $G$ then one of the first four types of non-edges ensures that there is no $\frac{|T^\prime|}{k\cdot q^{1/k}}$ sized clique in the graph induced by $T^\prime$. In other words, the first four types of non-edges incentivize to pick subset of vertices which corresponds to $T_{\vec{\bm{u}}}$ for some $\vec{\bm{u}}\in U_1\times \cdots \times U_k$. 
Type 1 non-edges incentivize to  include only one vertex in $T^\prime$ of the form $(\alphab,\betab,\vec{x},\vec{y})$ for every $\alphab,\betab\in \F_q^{k^2}$. Type 2 non-edges incentivize only to pick those vertices which are "consistent", i.e., we can extract an assignment $\sigma:\var(T^\prime)\to\F_q^\ell$ in a consistent manner.  Type 3 non-edges are introduced for technical reasons, as we would like to invoke Theorem~\ref{thm:piecing} in our analysis, i.e., to say that if $T^\prime$ contains a large clique, then it must have some "linear structure". Equipped with having an assignment $\sigma$ and some linear structure, the dearth of Type 4 non-edges enables us to decode a vector $\vec{u}_i^\ast\in U_i$ such that $T^\prime$ has a large intersection with $T_{\vec{\bm{u}}^\ast}$  where
$\vec{\bm{u}}^\ast:=(\vec{u}_1^\ast,\ldots ,\vec{u}_k^\ast)\in U_1\times \cdots \times U_k$. }

In summary, Types 1-4 non-edges ensure that any subset $T\subseteq V$ of size $q^{2k^2}$ in $G$ which contains a large clique must overlap significantly with  $T_{\vec{\bm{u}}}$ for some $\vec{\bm{u}}\in U_1\times \cdots \times U_k$. Then the lack of Type 5 non-edges ensure that if $T$ has a large clique then the $k$-tuple of vectors represented by $\vec{\bm{u}}$ must sum to $\vec{0}$.

\subsection{Analysis}\label{sec:analysis}
In this section, we analyze the parameters of the reduction, and prove the completeness and soundness claims of the theorem statement.

\paragraph{Parameters of the reduction.} 
The new graph has at most  $|\F_q^{2k^2}|\cdot |\F_q^{2\ell}|=\Lambda(k)\cdot n^{24}$ many vertices. The time needed to construct this graph is $\Lambda(k)\cdot n^{25}$.  

\paragraph{Completeness.} Suppose there exist $\vec{u}_1\in U_1,\ldots,\vec{u}_k\in U_k$ such that
$
\sum_{i\in[k]}\vec{u}_i=\vec{0}
$. Then, we can find a clique of size $|\F_q|^{2k^2}$ in $G$ as follows. Consider $T\subseteq V$ defined as below:
$$
T:=\left\{\left(\vec{\alpha_1},\ldots ,\vec{\alpha_{k}},\vec{\beta_1},\ldots ,\vec{\beta_{k}},\sum_{i\in[k]}\M(\vec{\alpha_i},g(\vec{u_i})),\sum_{i\in[k]}\M(\vec{\beta_i},g(\vec{u_i}))\right)\bigg| \vec{\alpha_i},\vec{\beta_i}\in\F_q^k, i\in[k]\right\}.
$$ 
We claim that every pair of distinct vertices in $T$ have an edge in $G$ and since $|T|=q^{2k^2}$, the completeness case follows.  

First note that if we fix any $\vec{\alpha_1},\ldots ,\vec{\alpha_{k}},\vec{\beta_1},\ldots ,\vec{\beta_{k}}\in \F_q^k$ then there are unique vectors $\vec{x},\vec{y}\in\F_q^\ell$ such that  $\left(\vec{\alpha_1},\ldots ,\vec{\alpha_{k}},\vec{\beta_1},\ldots ,\vec{\beta_{k}},\vec{x},\vec{y}\right)$ is in $T$. Thus, there are no Type 1 non-edges in subgraph induced by $T$.

Next, for every two distinct vertices $v,v^{\prime}\in T$, and for every $\vec{\bm{\rho}}:=(\vec{\rho_1},\ldots ,\vec{\rho_{k}})\in \var(v)\cap\var(v^{\prime})$, we have  $$
v(\vec{\bm{\rho}})=v^{\prime}(\vec{\bm{\rho}})=\sum_{i\in[k]} \M(\vec{\rho_i},g(\vec{u_i})), 
$$
and thus there are no Type 2 non-edges in subgraph induced by $T$.

Then,  we note that there are no Type 3 non-edges in the subgraph induced by $T$ because for every $v:=(\alphab,\betab,\vec{x},\vec{y})\in T$ and every $\gamma\in F_q$, if $v^\prime:=(\gamma\cdot \alphab,\betab^\prime,\vec{x}^\prime,\vec{y}^\prime)\in T$, then we have:
$$
\gamma\cdot \vec{x}=\gamma \cdot \sum_{i\in[k]}\M(\vec{\alpha_i},g(\vec{u_i})) = \sum_{i\in[k]}\M(\gamma \cdot \vec{\alpha_i},g(\vec{u_i})=\vec{x}'.
$$

In order to next show that there are no Type 4 non-edges in subgraph induced by $T$, we first fix $v:=(\alphab,\betab,\vec{x},\vec{y})\in T$, $i\in [k]$, and $\vec{\alpha}\in \F_q^k$. Suppose there exists $v^{\prime}:=(\alphab-\vec{\alpha}\cdot \vec{\bm{e_i}},\vec{\bm{\beta^{\prime}}},\vec{x^{\prime}},\vec{y^{\prime}})\in T$. Then we have 
\begin{align*}
\vec{x}-\vec{x^{\prime}}&=\sum_{j\in[k]} \M(\vec{\alpha_j},g(\vec{u_j}))-\left(\sum_{\substack{j\in[k]\\j\neq i}} \M(\vec{\alpha_j},g(\vec{u_j}))\right)-\M(\vec{\alpha_i}-\vec{\alpha},g(\vec{u}_i))\\
&=\M(\vec{\alpha_i},g(\vec{u}_i))-\M(\vec{\alpha_i}-\vec{\alpha},g(\vec{u}_i))\\
&=\M(\vec{\alpha},g(\vec{u}_i)).
\end{align*} 
Thus, $(v,v^{\prime})$ is an edge in the subgraph induced by $T$. 

Finally, we show that there are no Type 5 non-edges in subgraph induced by $T$. Let $v:=(\alphab,\betab,\vec{x},\vec{y})\in T$ and $\vec{\alpha}\in \F_q^k$. Suppose there exists $v^{\prime}:=(\alphab-(\vec{\alpha},\ldots ,\vec{\alpha}),\vec{\bm{\beta^{\prime}}},\vec{x^{\prime}},\vec{y^{\prime}})\in T$. Then we have 
\begin{align*}
\vec{x}-\vec{x^{\prime}}&=\sum_{i\in[k]} \M(\vec{\alpha_i},g(\vec{u_i}))-\sum_{i\in[k]} \M(\vec{\alpha_i}-\vec{\alpha},g(\vec{u_i}))\\
&=\sum_{i\in[k]} \M(\vec{\alpha},g(\vec{u_i}))
= \M\left(\vec{\alpha},g\left(\sum_{i\in[k]}\vec{u_i}\right)\right)
= \M(\vec{\alpha},g(\vec{0}))
= \vec{0}.
\end{align*} 
Thus, $(v,v^{\prime})$ is an edge in the subgraph induced by $T$. 

\paragraph{Soundness.}
Let $T$ be the set of vertices of the largest clique in $G$ (breaking ties arbitrarily). 
Let $\varepsilon:=\frac{1}{q^{{2k^2}/{F\left(q^{2k^2}\right)}}}$. 
Suppose  $|T|\ge \varepsilon\cdot q^{2k^2}$, then we shall show that for every $i\in[k]$ there exists $u_i^\ast\in U_i$ such that $u_1^\ast+\cdots +u_k^\ast=\vec{0}$.

The proof strategy is as follows. First, using $T$, we construct a function $\Gamma$ from $\F_q^{k^2}$ to $\F_q^\ell$ which we show passes the linearity test with probability at least $\varepsilon$ (Claim~\ref{claim:lintest}). Then, we invoke Theorem~\ref{thm:piecing} to say that there exists a collection of few linear functions on $k$ variables over $\F_q^k$ with coefficients from $\F_q^{k\times \ell}$ with the following property: for many queries $(\vec{\bm \alpha},\vec{\bm\beta})\in \F_q^{ k^2}\times \F_q^{ k^2}$ on which $\Gamma$ passes the linearity test, we have   a fixed linear function in our collection whose evaluation on $\vec{\bm\alpha}$ agrees with  $\Gamma(\vec{\bm\alpha})$.

Then for every $i\in[k]$ and $\vec{\alpha}\in\F_q^k$, we will identify  $\vec{u}_{\vec{\alpha}}\in U_i$ such that $\M(\vec{\alpha},g(\vec{u}_{\vec{\alpha}}))$ is roughly equal to evaluating the linear function at $\vec{\bm e}_i\cdot \vec{\alpha}$ (Claim~\ref{claim:decodespecificalpha}). Next, we show that for every $i\in[k]$, there is a single $\vec{u}_i\in U_i$  such that for all $\vec{\alpha}\in \F_q^k$ we have that  $\M(\vec{\alpha},g(\vec{u}_i))$ is roughly equal to evaluating the linear function at $\vec{\bm e}_i\cdot \vec{\alpha}$ (Claim~\ref{claim:decodei}). Finally, the proof follows by  observing that there are no Type 5 non-edges in $T$, and thus   these identified $\vec{u}_i$s must sum to $\vec{0}$.

We now begin the formal soundness case analysis. We claim that for every $\alphab\in \var(T)$, if there exist distinct $v,v^{\prime}\in T$ such that $\alphab\in \var(v)\cap\var(v^{\prime})$ then, $v(\alphab)=v^{\prime}(\alphab)$. Otherwise, $(v,v^{\prime})$ would be a non-edge of Type 2 which is not possible as the vertices in $T$ form a clique. 

We construct a function $\Gamma:\F_q^{k^2}\to \F_q^\ell$  in two phases. In the first phase, we define $\Gamma$ only for vectors in $\var(T)$.   For every $\vec{\bm{\alpha}}\in\var(T)$, we set $
\Gamma(\vec{\bm{\alpha}})=
v(\vec{\bm{\alpha}})\text{ if }v\in T\text{ is such that }\vec{\bm{\alpha}}\in \var(v)$.
In the second phase, 
we iteratively go over all the vectors in $\F_q^{k^2}\setminus \var(T)$ in some canonical order. In the $j^{\text{th}}$ iteration, let $\alphab_j$ be the vector considered. If there exist $\gamma\in \F_q$ and $\alphab'\in \var(T)$ such that $\alphab_j=\gamma\cdot \alphab'$ then we define $\Gamma(\alphab_j)=\gamma\cdot \Gamma(\alphab')$; otherwise if there exists $j'<j$ such that $\alphab_j=\gamma\cdot \alphab_{j'}$ for some $\gamma\in\F_q$ then we define $\Gamma(\alphab_j)=\gamma\cdot \Gamma(\alphab_{j'})$; otherwise, we set $\Gamma(\alphab_j)$ to a uniformly random vector in $\F_q^\ell$. 

Notice that by our construction and that there are no Type 3 non-edges in $T$, we have that for all $\alphab\in \F_q^{k^2}$ and for all $\gamma\in \F_q$ we have $\Gamma(\gamma\cdot\alphab)=\gamma\cdot \Gamma(\alphab)$, i.e., $\Gamma$ is scalar respecting. 

Next, we have the following claim on $\Gamma$ passing the linearity test.

\begin{claim}\label{claim:lintest}
$\Gamma$ passes the linearity test with probability at least $\varepsilon$. 
\end{claim}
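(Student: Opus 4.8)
\textbf{Proof plan for Claim~\ref{claim:lintest}.}
The goal is to show that the function $\Gamma$ passes the linearity test $\mathcal{T}$ with probability at least $\varepsilon = 1/q^{1/k}$, where recall $\varepsilon \cdot q^{2k^2} \le |T|$. The plan is to lower bound the acceptance probability of $\mathcal T$ on $\Gamma$ by the probability that a random test query $(\alphab,\betab)$ lands on a pair that is ``witnessed'' by the clique $T$: that is, there exist (not necessarily distinct) vertices $v, v' \in T$ with $\alphab \in \var(v)$ and $\betab \in \var(v')$. Indeed, if such $v, v'$ exist, then since $T$ is a clique there are no Type~1 non-edges, so $v \ne v'$ forces $\alphab \ne \betab$ as unordered pairs, and more importantly, since there are no Type~2 non-edges, the partial assignments of all vertices of $T$ agree on their overlapping variables — in particular $\Gamma$ restricted to $\var(T)$ is \emph{well-defined and consistent}. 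Then I would like to argue that whenever both $\alphab, \betab \in \var(T)$ (and their sum is handled correctly), the test accepts: each vertex $v = (\alphab, \betab, \vec x, \vec y)$, viewed as a function on $\{\alphab, \betab, \alphab+\betab\}$, \emph{by definition} satisfies $v(\alphab) + v(\betab) = \vec x + \vec y = v(\alphab+\betab)$, so $\Gamma$ inherits this additive relation as long as the three points $\alphab, \betab, \alphab+\betab$ are all ``explained'' by a single vertex or consistently by the clique.

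The first key step is therefore combinatorial: count the number of test queries $(\alphab, \betab) \in \F_q^{k^2} \times \F_q^{k^2}$ that are served by a single vertex of $T$. Each $v = (\alphab_v, \betab_v, \cdot, \cdot) \in T$ naturally certifies the query $(\alphab_v, \betab_v)$ (and by scalar-respecting, the queries $(\gamma \alphab_v, \gamma \betab_v)$ for all $\gamma \in \F_q^\times$, plus the point $\alphab_v + \betab_v$). The second key step is to use $|T| \ge \varepsilon \cdot q^{2k^2}$ together with the fact that, by Type~1 non-edges, distinct vertices of $T$ have distinct $(\alphab, \betab)$ pairs, so $T$ directly exhibits at least $\varepsilon \cdot q^{2k^2}$ distinct pairs $(\alphab, \betab)$ on which the test accepts. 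Dividing by the total number of queries $q^{2k^2}$ gives acceptance probability at least $\varepsilon$. The scalar-respecting extension of $\Gamma$ to all of $\F_q^{k^2}$ in the second phase of its construction only helps: it ensures $\Gamma$ is globally well-defined and does not destroy the accepting queries already counted.

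I expect the main subtlety — and the step to carry out carefully — to be verifying that when $(\alphab, \betab)$ is certified by a vertex $v \in T$, the test really does accept \emph{on $\Gamma$}, which requires that $\Gamma(\alphab+\betab)$ equals $v(\alphab+\betab) = \vec x + \vec y$ rather than some other value assigned in phase two or by a different vertex. If $\alphab + \betab \in \var(v)$ this is immediate from $v$'s definition; if $\alphab+\betab \in \var(w)$ for a different $w \in T$, consistency (no Type~2 non-edges) forces $w(\alphab+\betab) = v(\alphab+\betab)$; and if $\alphab + \betab \notin \var(T)$, one must check the phase-two assignment does not interfere — but crucially the clique already supplies $\ge \varepsilon q^{2k^2}$ queries $(\alphab_v, \betab_v)$ for which \emph{all three} of $\alphab_v, \betab_v, \alphab_v+\betab_v$ lie in $\var(v) \subseteq \var(T)$, so we can restrict the count to exactly those and sidestep the phase-two points entirely. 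Thus the clean argument is: each $v \in T$ contributes the single accepting query $(\alphab_v, \betab_v)$, these are distinct across $v$ by Type~1, and $\mathcal T$ accepts on each of them because $v$ itself witnesses the linear relation on $\{\alphab_v, \betab_v, \alphab_v + \betab_v\} \subseteq \var(T)$ where $\Gamma$ agrees with $v$. This yields $\Pr[\mathcal T \text{ accepts}] \ge |T|/q^{2k^2} \ge \varepsilon$.
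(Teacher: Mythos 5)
Your proposal is correct and follows essentially the same route as the paper: count the accepting queries $(\alphab_v,\betab_v)$ contributed by the vertices of $T$ (distinct by Type~1, consistent by Type~2, with $\alphab_v+\betab_v\in\var(v)$ so phase two never interferes) and divide by $q^{2k^2}$. The paper packages this as a set $S$ of query pairs with $|S|=|T|$ on which the test accepts with probability $1$, but the substance is identical.
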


To see the claim, first consider the set $S\subseteq \F_q^{k^2}\times \F_q^{k^2}$ defined as follows.
$$
S:=\underset{(\alphab,\betab,\vec{x},\vec{y})\in T}{\bigcup}\{(\alphab,\betab)\}.
$$  

Notice that the probability of $\Gamma$ passing the linearity test is lower bounded by:
$$
\frac{|S|}{|\F|^{2k^2}}\cdot \Pr_{(\vec{\bm{\alpha}},\vec{\bm{\beta}})\sim S}\left[\Gamma(\vec{\bm{\alpha}})+\Gamma(\vec{\bm{\beta}})=\Gamma(\vec{\bm{\alpha}}+\vec{\bm{\beta}})\right].
$$
 
However, for any $(\alphab,\betab)\in S$, we have that $(\alphab,\betab,\Gamma(\alphab),\Gamma(\betab))$ is in $T$ by construction of set $S$.  Thus, we have $\Pr_{(\vec{\bm{\alpha}},\vec{\bm{\beta}})\sim S}\left[\Gamma(\vec{\bm{\alpha}})+\Gamma(\vec{\bm{\beta}})=\Gamma(\vec{\bm{\alpha}}+\vec{\bm{\beta}})\right]=1$ and $\Gamma$ passes the linearity test with probability at least $|S|/|\F_q|^{2k^2}$. Since $|S|=|T|$, we have that the proof of Claim~\ref{claim:lintest} is completed.

Next invoking Theorem~\ref{thm:piecing} with $\kappa=\frac{1}{8k}$ (since $\Gamma$ is scalar respecting and $\kappa> \varepsilon$ from \eqref{eqF}), we have that there exist a linear function $c:\F_q^{k^2}\to\F_q^{\ell}$, such that the following holds.
$$
\underset{\vec{\bm{\alpha}}\sim \var(T) }{\Pr}\left[\|\Gamma(\vec{\bm{\alpha}})-c(\vec{\bm{\alpha}})\|\le \kappa  \right]\ge \frac{\varepsilon}{2}.
$$ 

Let $R^\ast\subseteq \var(T)$ be the largest sized subset such that the following holds:
\begin{align}
\underset{\vec{\bm{\alpha}}\sim R^\ast }{\Pr}\left[\|\Gamma(\vec{\bm{\alpha}})-c(\vec{\bm{\alpha}})\|\le \kappa  \right]=1
.\label{eq:lin1}
\end{align}

We note that $|R^\ast|\ge \frac{\varepsilon}{2}\cdot |\var(T)|$, and since $|\var(T)|\ge \varepsilon\cdot |\F_q^{k^2}|$, we have that $|R^\ast|\ge \frac{\varepsilon^2}{2}\cdot q^{k^2}$.  
Next, we think of $c$ as a linear function on $k$ variables over $\F_q^k$ with coefficients  in $\F_q^{k\times \ell}$:
$$c(\vec{\alpha}_1,\ldots ,\vec{\alpha}_k)=\sum_{i\in[k]}\M(\vec{\alpha}_i,\vec{\Theta}_i ),$$ 
for some $\vec{\Theta}_1 ,\ldots ,\vec{\Theta}_k \in\mathbb{F}^{k\times \ell}$.

\begin{claim}\label{claim:decodespecificalpha}
For every $i\in[k]$ and $\vec{\alpha}\in\F_q^k$, there exists $\vec{u}_i^\ast\in U_i$  such that $\|\M(  \vec{\alpha},\vec{\Theta}_i-g(\vec{u}_i^\ast)) \|\le 2\kappa$.
\end{claim}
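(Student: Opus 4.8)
The plan is to fix $i \in [k]$ and $\vec{\alpha} \in \F_q^k$, and to exploit the abundance of Type~4 non-edges being \emph{absent} inside the clique $T$ together with the agreement of $\Gamma$ with the linear function $c$ on the large set $R^\ast$. First I would identify a vertex $v = (\alphab, \betab, \vec{x}, \vec{y}) \in T$ such that both $\alphab \in R^\ast$ and $\alphab - \vec{\alpha}\cdot\vec{\bm{e_i}}$ also lies in $\var(T)$ — in fact I want it in $R^\ast$ as well. Since $|R^\ast| \ge \frac{\varepsilon^3}{3} q^{k^2}$ is a constant fraction (in terms of $q$, which depends only on $k$) of $\F_q^{k^2}$, and translation by the fixed vector $\vec{\alpha}\cdot\vec{\bm{e_i}}$ is a bijection of $\F_q^{k^2}$, the set $R^\ast \cap (R^\ast + \vec{\alpha}\cdot\vec{\bm{e_i}})$ has size at least $(\frac{2\varepsilon^3}{3} - 1)q^{k^2}$; this is non-empty provided $\varepsilon^3 > 3/2$, which does not hold, so instead I should be more careful and only require the weaker containment that a positive fraction of $\alphab \in R^\ast$ appear as the $\alphab$-coordinate of some vertex in $T$ whose translate $\alphab - \vec{\alpha}\cdot\vec{\bm e_i}$ is \emph{also} the $\alphab$-coordinate of some vertex in $T$. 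Actually the cleanest route: pick any $\alphab^\ast \in R^\ast$ that is the $\alphab$-coordinate of a vertex $v \in T$, and let $\alphab' := \alphab^\ast - \vec{\alpha}\cdot\vec{\bm e_i}$; I then need $\alphab' \in \var(T)$ with $\Gamma$ close to $c$ on it. To guarantee this I would choose $\alphab^\ast$ from $R^\ast$ so that $\alphab'$ lands in $R^\ast$ too, using that $R^\ast \cap (R^\ast + \vec{\alpha}\cdot\vec{\bm e_i})$ is non-empty as long as $2|R^\ast| > q^{k^2}$; if that fails for this particular $i,\vec\alpha$ the whole soundness argument's parameters would need $\varepsilon$ large enough, so I expect the paper to have arranged $|R^\ast|$ (hence $\varepsilon$) big enough that $2 \cdot \frac{\varepsilon^3}{3} q^{k^2} > q^{k^2}$, or alternatively to iterate the decoding more cleverly — this is the first thing to check against the paper's actual parameter settings.

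Granting such $\alphab^\ast, \alphab' \in R^\ast$ that are both $\alphab$-coordinates of vertices $v, v' \in T$ respectively, the key step is: since $v, v' \in T$ form an edge, the non-edge of Type~4 does not occur, so there exists $\vec{u}_i^\ast \in U_i$ with $\M(\vec{\alpha}, g(\vec{u}_i^\ast)) = v(\alphab^\ast) - v'(\alphab') = \Gamma(\alphab^\ast) - \Gamma(\alphab')$. (Here $v(\alphab^\ast) = \Gamma(\alphab^\ast)$ and $v'(\alphab') = \Gamma(\alphab')$ by the consistency of $\Gamma$ on $\var(T)$ established just before the claim.) Next, by \eqref{eq:lin1} applied to both $\alphab^\ast$ and $\alphab'$, we have $\|\Gamma(\alphab^\ast) - c(\alphab^\ast)\| \le \kappa$ and $\|\Gamma(\alphab') - c(\alphab')\| \le \kappa$, so by the triangle inequality for relative Hamming distance, $\|(\Gamma(\alphab^\ast) - \Gamma(\alphab')) - (c(\alphab^\ast) - c(\alphab'))\| \le 2\kappa$. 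Since $c$ is linear and $\alphab^\ast - \alphab' = \vec{\alpha}\cdot\vec{\bm e_i}$, we get $c(\alphab^\ast) - c(\alphab') = c(\vec{\alpha}\cdot\vec{\bm e_i}) = \M(\vec{\alpha}, \vec{\Theta}_i)$, using the coordinate expression $c(\vec\alpha_1,\ldots,\vec\alpha_k) = \sum_j \M(\vec\alpha_j, \vec\Theta_j)$ with all coordinates but the $i$-th equal to $\vec 0$. Combining, $\|\M(\vec{\alpha}, g(\vec{u}_i^\ast)) - \M(\vec{\alpha}, \vec{\Theta}_i)\| \le 2\kappa$, i.e. $\|\M(\vec{\alpha}, \vec{\Theta}_i - g(\vec{u}_i^\ast))\| \le 2\kappa$ by bilinearity of $\M$ in its second argument, which is exactly the claimed bound.

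The main obstacle, as flagged above, is securing a vertex $v \in T$ whose $\alphab$-coordinate $\alphab^\ast$ lies in $R^\ast$ and whose Type~4-translate $\alphab^\ast - \vec\alpha\cdot\vec{\bm e_i}$ is \emph{also} in $R^\ast$ and realized as an $\alphab$-coordinate of $T$ — a simultaneous membership that requires $R^\ast$ (equivalently the set of $\alphab$-coordinates appearing in $T$) to be a sufficiently large fraction of $\F_q^{k^2}$ and to be essentially closed under the relevant translations up to a controllable loss. I expect this to be handled either by a pigeonhole/averaging argument showing the set of ``good'' $\alphab^\ast$ is non-empty because $|R^\ast| \ge \frac{\varepsilon^3}{3} q^{k^2}$ and $\varepsilon = q^{-1/k}$ with $q = q(k)$ chosen large, so $\frac{2\varepsilon^3}{3} q^{k^2} - q^{k^2} > 0$ fails — meaning the paper likely instead argues via a fixed reference vertex and only needs \emph{one} such pair, or replaces $R^\ast$ by the set $\var(T)$ itself (which has size $\ge \varepsilon q^{k^2}$, still too small) — so the genuine resolution must be that the claim is proved not for a worst-case single $\alphab^\ast$ but by showing the \emph{existence} of at least one good pair, which follows as soon as $|R^\ast| + |R^\ast + \vec\alpha\cdot\vec{\bm e_i}| > q^{k^2}$; if even that is false the authors must chain through the second phase of the $\Gamma$-construction (the scalar-respecting extension) to enlarge the effective domain, and tracking that extension carefully is where the real work lies. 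Everything after obtaining the pair $(v,v')$ is the short triangle-inequality computation sketched above.
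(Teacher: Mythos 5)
Your second half --- once a suitable pair of vertices is in hand, invoke the absence of a Type~4 non-edge to extract $\vec{u}_i^\ast$ with $\M(\vec{\alpha},g(\vec{u}_i^\ast))=\Gamma(\alphab^\ast)-\Gamma(\alphab')$, then combine \eqref{eq:lin1} at both points with the linearity of $c$ via the triangle inequality --- is exactly the paper's computation. But the step you flag as ``the first thing to check'' is a genuine gap, and none of the resolutions you list is the one the paper uses. You insist on finding two points of $R^\ast$ differing by \emph{exactly} $\vec{\alpha}\cdot\vec{\bm{e_i}}$, which, as you correctly compute, would need $|R^\ast|>q^{k^2}/2$; this is hopeless since $|R^\ast|\ge\frac{\varepsilon^3}{3}q^{k^2}$ with $\varepsilon=q^{-1/k}$. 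The paper's parameters are \emph{not} arranged to make $R^\ast$ that dense, and the scalar-respecting second phase of the $\Gamma$-construction plays no role here.

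The missing idea is to relax the required difference to any \emph{nonzero scalar multiple} $\gamma\cdot\vec{\alpha}\cdot\vec{\bm{e_i}}$, $\gamma\in\F_q\setminus\{0\}$. The cosets of the line $L_i=\F_q\cdot(\vec{\alpha}\cdot\vec{\bm{e_i}})$ partition $\F_q^{k^2}$ into $q^{k^2-1}$ classes, so as soon as $|R^\ast|>q^{k^2-1}$ --- which holds because $\frac{\varepsilon^3}{3}>\frac{1}{q}$ for the paper's choice of $q=\hat q(k)>2^{12k}$ --- pigeonhole yields two points $\alphab,\ \alphab+\vec{\bm{e_i}}\cdot(\gamma\vec{\alpha})\in R^\ast$ in the same coset. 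Running your triangle-inequality argument on this pair gives $\|\M(\gamma\vec{\alpha},\vec{\Theta}_i-g(\vec{u}_i^\ast))\|\le 2\kappa$, and the extraneous $\gamma$ is then removed for free because $\M(\gamma\vec{\alpha},\cdot)=\gamma\cdot\M(\vec{\alpha},\cdot)$ and relative Hamming weight is invariant under multiplication by a nonzero scalar. Note also that the resulting $\vec{u}_i^\ast$ a priori depends on $\vec{\alpha}$ (and on the chosen pair); making it uniform over all $\vec{\alpha}$ is deferred to Claim~\ref{claim:decodei}, so your claim statement is only for a per-$\vec{\alpha}$ witness, consistent with the paper.
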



If $\vec{\alpha}=\vec{0}$, the claim trivially holds. Therefore we assume that $\vec{\alpha}\in\F_q^k\setminus \{\vec{0}\}$. For every $i\in[k]$ and every $\vec{\alpha}\in \F_q^k$, we show that there is a line in the direction of $\vec{\alpha}\cdot \vec{\bm{e_i}}$ which contains two vertices $(\alphab,\betab,\vec{x},\vec{y})$ and  $(\alphab',\betab',\vec{x}',\vec{y}')$ such that  $\alphab,\alphab'\in R^\ast$. Then, by noting that these two vertices don't have a Type 4 non-edge between them, we identify $\vec{u}_i^\ast\in U_i$. The formal argument follows.

We say a line is linear if it passes through the origin and affine otherwise. Note that every linear line can be identified through one of the non-zero points on it. Also note that for every linear line in $\F_q^{k^2}$, the line   and all its affine shifts always cover the entire space $\F_q^{k^2}$ . 
 Since $|{R}^\ast|/q^{k^2}>\varepsilon^2/2>1/q$, we have that  by an averaging argument,  for every linear line, either that line or one of it's affine shifts   contains at least two points in $R^\ast$. We use this argument below and in the proof of Claim~\ref{claim:testzero}.

Fix $i\in[k]$ and $\vec{\alpha}\in\F_q^k\setminus \{\vec{0}\}$. Let $L_i$ be a linear line in $\F_q^{k^2}$  containing the point  $\vec{\alpha}\cdot \vec{\bm e}_i$. Then there exist two points  $\vec{\bm{\alpha}},\vec{\bm{\alpha}}+\vec{\bm e}_i\cdot (\gamma\cdot\vec{\alpha})\in  {R}^\ast$, for some $\alphab\in\F_q^{k^2}$ and $\gamma\in \F_q\setminus\{0\}$. From \eqref{eq:lin1} we have
\begin{align}
\|\Gamma(\vec{\bm{\alpha}})-c(\vec{\bm{\alpha}})\|\le \kappa\ \text{and } \|\Gamma(\vec{\bm{\alpha}}+\vec{\bm e}_i\cdot (\gamma\cdot\vec{\alpha}))-c(\vec{\bm{\alpha}}+\vec{\bm e}_i\cdot (\gamma\cdot\vec{\alpha}))\|\le \kappa. 
\label{eq:lin2}
\end{align}

Let $v:=(\alphab,\betab,\Gamma(\alphab),\Gamma(\betab))$ and 
$v^{\prime}:=(\alphab+\vec{\bm e}_i\cdot (\gamma\cdot\vec{\alpha}),\betab',\Gamma(\alphab+\vec{\bm e}_i\cdot (\gamma\cdot\vec{\alpha})),\Gamma(\betab'))$  be the two vertices in $T$ for some $\betab,\betab'\in \F_q^{k^2}$. Since there  is no Type 4 non-edge between them,  there exists $u_i^\ast\in U_i$ such that  
\begin{align}
\Gamma(\alphab+\vec{\bm e}_i\cdot (\gamma\cdot\vec{\alpha}))-\Gamma(\alphab)=\M(\gamma\cdot\vec{\alpha},g(\vec{u}_i^\ast)).\label{eq:lin4}
\end{align}

On a different note, we have
\begin{align}
c(\vec{\bm{\alpha}}+\vec{\bm e}_i\cdot (\gamma\cdot\vec{\alpha}))=c(\vec{\bm{\alpha}})+\gamma\cdot c(\vec{\bm e}_i\cdot \vec{\alpha})=c(\vec{\bm{\alpha}})+\M(\gamma\cdot \vec{\alpha},\vec{\Theta}_i).
\label{eq:lin3}
\end{align}

Plugging in the simplification in \eqref{eq:lin4} and \eqref{eq:lin3} into \eqref{eq:lin2}, we have
\allowdisplaybreaks
\begin{align*}
2\kappa&\ge\|\Gamma(\vec{\bm{\alpha}})-c(\vec{\bm{\alpha}})\|+\|\Gamma(\vec{\bm{\alpha}}+\vec{\bm e}_i\cdot (\gamma\cdot\vec{\alpha}))-c(\vec{\bm{\alpha}}+\vec{\bm e}_i\cdot (\gamma\cdot\vec{\alpha}))\|\\
&\ge\|\Gamma(\vec{\bm{\alpha}})-c(\vec{\bm{\alpha}})-\Gamma(\vec{\bm{\alpha}}+\vec{\bm e}_i\cdot (\gamma\cdot\vec{\alpha}))+c(\vec{\bm{\alpha}}+\vec{\bm e}_i\cdot (\gamma\cdot\vec{\alpha}))\|\\
&=\|\M(\gamma\cdot \vec{\alpha},\vec{\Theta}_i)-\M(\gamma\cdot\vec{\alpha},g(\vec{u_i^\ast})) \|\\
&=\|\M(\gamma\cdot \vec{\alpha},\vec{\Theta}_i-g(\vec{u_i^\ast})) \|\\
&=\|\M(  \vec{\alpha},\vec{\Theta}_i-g(\vec{u_i^\ast})) \|,
\end{align*}
 where the last equality follows from noting that for any vector $\vec{a}$ and non-zero scalar $\zeta$, we have $\|\zeta\cdot \vec{a}\|=\|\vec{a}\|$.

\begin{claim}\label{claim:decodei}
For every $i\in[k]$, there exists $\vec{u}_i^\ast\in U_i$  such that  for every $\vec{\alpha}\in\F_q^k$ we have $$ \|\M(  \vec{\alpha},\vec{\Theta}_i-g(\vec{u_i^\ast})) \|\le 2\kappa.$$ 
\end{claim}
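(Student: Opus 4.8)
The plan is to upgrade Claim~\ref{claim:decodespecificalpha}, which gives a decoded vector $\vec{u}_i^\ast$ that may depend on the direction $\vec{\alpha}$, into a statement where a single $\vec{u}_i^\ast\in U_i$ works simultaneously for all $\vec{\alpha}\in\F_q^k$. First I would fix $i\in[k]$ and, applying Claim~\ref{claim:decodespecificalpha} to a convenient nonzero direction (say $\vec{\alpha}=\vec{e}_1$, or in fact to each of the $k$ standard basis directions), obtain a candidate $\vec{u}_i^\ast\in U_i$ with $\|\M(\vec{e}_j,\vec{\Theta}_i-g(\vec{u}_i^\ast))\|\le 2\kappa$ for the chosen direction(s). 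The goal is then to show this fixed $\vec{u}_i^\ast$ satisfies the bound for every $\vec{\alpha}$.

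The key step is an averaging/collision argument played off Proposition~\ref{prop:collisionalphabeta}. Suppose for contradiction that for the fixed $\vec{u}_i^\ast$ there were some direction $\vec{\alpha}$ with $\|\M(\vec{\alpha},\vec{\Theta}_i-g(\vec{u}_i^\ast))\|$ large. But Claim~\ref{claim:decodespecificalpha} applied to that $\vec{\alpha}$ produces \emph{some} $\vec{u}'\in U_i$ with $\|\M(\vec{\alpha},\vec{\Theta}_i-g(\vec{u}'))\|\le 2\kappa$; hence $\vec{u}'\neq\vec{u}_i^\ast$. Combining the two small distances via the triangle inequality gives $\|\M(\vec{\alpha},g(\vec{u}_i^\ast)-g(\vec{u}'))\|\le 4\kappa$, i.e. $\|\M(\vec{\alpha},g(\vec{u}_i^\ast-\vec{u}'))\|$ is small (using linearity of $g$). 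Now I would invoke the contrapositive form of Proposition~\ref{prop:collisionalphabeta}: for distinct $\vec{b}_1,\vec{b}_2\in\tilde B_2$ and linearly independent $\vec{a}_1,\vec{a}_2$, the quantity $\|\M(\vec{a}_1,g(\vec{b}_1))-\M(\vec{a}_2,g(\vec{b}_2))\|\ge 1/2$ — and in particular, taking $\vec{b}_2=\vec{0}$, $\|\M(\vec{a},g(\vec{b}))\|\ge 1/2$ for any nonzero $\vec{b}\in\tilde B_2$ and nonzero $\vec{a}$. Since $\vec{u}_i^\ast-\vec{u}'$ is a nonzero element of $\tilde B_2$, this forces $\|\M(\vec{\alpha},g(\vec{u}_i^\ast-\vec{u}'))\|\ge 1/2$, contradicting $4\kappa=1/(2k)<1/2$ for $k\ge 2$. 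Hence no bad direction $\vec{\alpha}$ exists and the fixed $\vec{u}_i^\ast$ works for all $\vec{\alpha}$.

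One subtlety to handle carefully: the bound in Claim~\ref{claim:decodespecificalpha} is stated with a fixed $2\kappa$, and I need $4\kappa<1/2$, i.e. $\kappa<1/8$; since $\kappa=1/(8k)$ this is fine for all $k\ge 2$ (and the $k=1$ case is trivial as the single collection forces $\vec{u}_1=\vec{0}$ directly or can be excluded). I should also make sure the vectors I feed into Proposition~\ref{prop:collisionalphabeta} genuinely lie in $\tilde B_2$: $\vec{u}_i^\ast,\vec{u}'\in U_i\subseteq B$, so their difference is in $\tilde B_2$, as required; and I only use the degenerate case $\vec{b}_2=\vec{0}$ of that proposition (proved in its last paragraph), so no linear-independence hypothesis on the $\vec{a}$'s is needed here — I just need $\vec{\alpha}\neq\vec{0}$, which is exactly the case under consideration. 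The main obstacle I anticipate is purely bookkeeping: threading the additive errors through the triangle inequality so that the total stays strictly below $1/2$, and being careful that the ``small vs.\ large'' dichotomy from Proposition~\ref{prop:collisionalphabeta} is applied to the right pair of vectors — but there is no genuinely new idea required beyond this error-accounting plus the gap from the proposition.
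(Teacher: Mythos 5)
There is a genuine gap at the heart of your argument: the step ``combining the two small distances via the triangle inequality gives $\|\M(\vec{\alpha},g(\vec{u}_i^\ast)-g(\vec{u}'))\|\le 4\kappa$'' requires that \emph{both} $\|\M(\vec{\alpha},\vec{\Theta}_i-g(\vec{u}'))\|\le 2\kappa$ and $\|\M(\vec{\alpha},\vec{\Theta}_i-g(\vec{u}_i^\ast))\|\le 2\kappa$ hold \emph{at the same direction} $\vec{\alpha}$. The first you get from Claim~\ref{claim:decodespecificalpha} applied to $\vec{\alpha}$; the second is exactly the statement you are trying to prove, and under your contradiction hypothesis it is false. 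What you actually know about $\vec{u}_i^\ast$ is a bound at a \emph{different} direction ($\vec{e}_1$, or the basis directions), and the relative Hamming weights of two different projections $\M(\vec{e}_j,\cdot)$ and $\M(\vec{\alpha},\cdot)$ cannot be combined by the triangle inequality. So no contradiction is reached: the scenario ``$\vec{u}_{\vec{\alpha}}\neq\vec{u}_i^\ast$ and $\|\M(\vec{\alpha},g(\vec{u}_{\vec{\alpha}}-\vec{u}_i^\ast))\|\ge 1/2$'' is perfectly consistent with everything you have derived. The missing idea is the mechanism for transferring information about the decoded vector across directions. The paper does this by evaluating Claim~\ref{claim:decodespecificalpha} at three additively related directions $\vec{\alpha},\vec{\beta},\vec{\alpha}+\vec{\beta}$ with $\vec{\alpha},\vec{\beta}$ linearly independent; summing the three bounds cancels the $\vec{\Theta}_i$ terms by linearity of $\M$ in its first argument, leaving $\|\M(\vec{\alpha},g(\vec{w}))-\M(\vec{\beta},g(\vec{w}'))\|\le 6\kappa$ with $\vec{w}=\vec{u}_{\vec{\alpha}}-\vec{u}_{\vec{\alpha}+\vec{\beta}}$ and $\vec{w}'=\vec{u}_{\vec{\alpha}+\vec{\beta}}-\vec{u}_{\vec{\beta}}$; the non-degenerate case of Proposition~\ref{prop:collisionalphabeta} (this is exactly where the linear-independence hypothesis you discarded is essential) then forces $\vec{w}=\vec{w}'$, i.e.\ $\vec{u}_{\vec{\alpha}}+\vec{u}_{\vec{\beta}}=2\vec{u}_{\vec{\alpha}+\vec{\beta}}$, and a short algebraic manipulation of these relations yields $\vec{u}_{\vec{\alpha}}=\vec{u}_{\vec{\beta}}$ for all linearly independent pairs, hence constancy of $\vec{\alpha}\mapsto\vec{u}_{\vec{\alpha}}$.

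Your outline could be repaired along its own lines, but only with an extra observation you did not make: if $\vec{u}_i^\ast$ satisfies the $2\kappa$ bound at all $k$ standard basis directions, then writing $\M(\vec{\alpha},\cdot)=\sum_j\alpha_j\,\M(\vec{e}_j,\cdot)$ and using subadditivity of the Hamming weight gives the \emph{weaker} bound $\|\M(\vec{\alpha},\vec{\Theta}_i-g(\vec{u}_i^\ast))\|\le 2\kappa k=1/4$ at \emph{every} $\vec{\alpha}$. Only then can your triangle-inequality step run: $\|\M(\vec{\alpha},g(\vec{u}_i^\ast)-g(\vec{u}_{\vec{\alpha}}))\|\le 2\kappa+1/4<1/2$ for $k\ge 2$, the degenerate case of Proposition~\ref{prop:collisionalphabeta} forces $\vec{u}_{\vec{\alpha}}=\vec{u}_i^\ast$, and the full $2\kappa$ bound follows. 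As written, without this intermediate weak bound at arbitrary directions, the argument is circular.
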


We prove the claim for non-zero $\vec{\alpha}$ as the claim is trivial for the case $\vec{\alpha}=\vec{0}$.

For every $\vec{\alpha}\in\F_q^k\setminus \{\vec{0}\}$, let $\vec{u}_{\vec{\alpha}}\in U_i$ be the vector guaranteed in Claim~\ref{claim:decodespecificalpha}, i.e., 
$\|\M(  \vec{\alpha},\vec{\Theta}_i-g(\vec{u}_{\vec{\alpha}})) \|\le 2\kappa$. 

Now consider any linearly independent $\vec{\alpha},\vec{\beta}\in\F_q^k$. We then have: 
 $$
 \|\M(  \vec{\alpha},\vec{\Theta}_i-g(\vec{u}_{\vec{\alpha}})) \|\le 2\kappa,\ \|\M(  \vec{\beta},\vec{\Theta}_i-g(\vec{u}_{\vec{\beta}})) \|\le 2\kappa,\text{ and }\|\M(  \vec{\alpha}+\vec{\beta},\vec{\Theta}_i-g(\vec{u}_{\vec{\alpha}+\vec{\beta}})) \|\le 2\kappa.
 $$

Putting these three inequalities together:
\begin{align*}
 6\kappa&\ge \|\M(  \vec{\alpha},\vec{\Theta}_i-g(\vec{u}_{\vec{\alpha}})) \|+\|\M(  \vec{\beta},\vec{\Theta}_i-g(\vec{u}_{\vec{\beta}})) \|+\|\M(  \vec{\alpha}+\vec{\beta},\vec{\Theta}_i-g(\vec{u}_{\vec{\alpha}+\vec{\beta}})) \|\\
  &\ge\|\M(  \vec{\alpha},\vec{\Theta}_i-g(\vec{u}_{\vec{\alpha}})) +\M(  \vec{\beta},\vec{\Theta}_i-g(\vec{u}_{\vec{\beta}})) -\M(  \vec{\alpha}+\vec{\beta},\vec{\Theta}_i-g(\vec{u}_{\vec{\alpha}+\vec{\beta}})) \|\\
  &=\|\M(  \vec{\alpha},g(\vec{u}_{\vec{\alpha}})) +\M(  \vec{\beta},g(\vec{u}_{\vec{\beta}})) -\M(  \vec{\alpha}+\vec{\beta},g(\vec{u}_{\vec{\alpha}+\vec{\beta}})) \|\\
    &=\|\M(  \vec{\alpha},g(\vec{u}_{\vec{\alpha}})-g(\vec{u}_{\vec{\alpha}+\vec{\beta}})) +\M(  \vec{\beta},g(\vec{u}_{\vec{\beta}})-g(\vec{u}_{\vec{\alpha}+\vec{\beta}})) \|\\
      &=\|\M(  \vec{\alpha},g(\vec{w})) -\M(  \vec{\beta},g(\vec{w}')) \|,
\end{align*}
where $\vec{w}:=\vec{u}_{\vec{\alpha}}-\vec{u}_{\vec{\alpha}+\vec{\beta}}$ and $\vec{w}':=\vec{u}_{\vec{\alpha}+\vec{\beta}}-\vec{u}_{\vec{\beta}}$.

If $\vec{w}\neq\vec{w}'$ then  we arrive at a  contradiction  to \eqref{prop2} (since $\|\M(  \vec{\alpha},g(\vec{w}))-\M(  \vec{\beta},g(\vec{w}')) \|\le 6\kappa$ and $\kappa=o(1)$).

Thus $\vec{w}=\vec{w}'$ which implies $\vec{u}_{\vec{\alpha}}+\vec{u}_{\vec{\beta}}=2\vec{u}_{\vec{\alpha}+\vec{\beta}}$. Since the choice of $\vec{\alpha}$ and $\vec{\beta}$ were arbitrary  linearly independent  vectors, we also have:
\begin{align*}
\vec{u}_{\vec{\alpha}+\vec{\beta}}+\vec{u}_{\vec{\beta}}&=2\vec{u}_{\vec{\alpha}+2\vec{\beta}},\\ \vec{u}_{\vec{\alpha}}+\vec{u}_{\vec{\alpha}+\vec{\beta}}&=2\vec{u}_{2\vec{\alpha}+\vec{\beta}},
\\ \vec{u}_{2\vec{\alpha}+\vec{\beta}}+\vec{u}_{\vec{\beta}}&=2\vec{u}_{\vec{2\alpha}+2\vec{\beta}}=\vec{u}_{\vec{\alpha}}+\vec{u}_{\vec{\alpha}+2\vec{\beta}}.
\end{align*}

We put these relationships together to obtain the following:
\begin{align*}
\vec{u}_{\vec{\alpha}}&=\vec{u}_{\vec{\alpha}}+4\vec{u}_{\vec{2\alpha}+2\vec{\beta}}-4\vec{u}_{2\vec{\alpha}+2\vec{\beta}}\\
&=\vec{u}_{\vec{\alpha}}+\left(2\vec{u}_{2\vec{\alpha}+\vec{\beta}}+2\vec{u}_{\vec{\beta}}\right)-\left(2\vec{u}_{\vec{\alpha}}+2\vec{u}_{\vec{\alpha}+2\vec{\beta}}\right)\\
&=2\vec{u}_{2\vec{\alpha}+\vec{\beta}}+2\vec{u}_{\vec{\beta}}-2\vec{u}_{\vec{\alpha}+2\vec{\beta}}-\vec{u}_{\vec{\alpha}}\\
&=\left(\vec{u}_{\vec{\alpha}}+\vec{u}_{\vec{\alpha}+\vec{\beta}}\right)+2\vec{u}_{\vec{\beta}}-\left(\vec{u}_{\vec{\alpha}+\vec{\beta}}+\vec{u}_{\vec{\beta}}\right)-\vec{u}_{\vec{\alpha}}\\
&=\vec{u}_{\vec{\beta}}.
\end{align*}

So we are only left to handle the cases when $\vec{\alpha}$ and $\vec{\beta}$ are linearly dependent, i.e., for some $\gamma\in \F_q\setminus \{0\}$ we have $\vec{\alpha}=\gamma\cdot \vec{\beta}$. In this case let $\vec{\beta}'\in \F_q^k$  such that it is linearly independent to $\vec{\beta}$ (and thus linearly independent to $\vec{\alpha}$ as well). From the above argument we have that $\vec{u}_{\vec{\alpha}}=\vec{u}_{\vec{\beta}'}=\vec{u}_{\vec{\beta}}$.

\begin{claim}\label{claim:testzero}
We have $\vec{u}_1^\ast+\cdots +\vec{u}_k^\ast=\vec{0}$, where for all $i\in[k]$, $\vec{u}_i^\ast$ is the vector identified in Claim~\ref{claim:decodei}. 
\end{claim}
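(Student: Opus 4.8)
\textbf{Proof plan for Claim~\ref{claim:testzero}.}

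The plan is to leverage the one remaining unused ingredient from the construction, namely the absence of Type 5 non-edges inside $T$, together with the ``global linear structure'' we have now extracted in the form of the vectors $\vec{u}_1^\ast,\ldots,\vec{u}_k^\ast$ promised by Claim~\ref{claim:decodei}. First I would recall the statement of Claim~\ref{claim:decodei}: for each $i\in[k]$ there is a single $\vec{u}_i^\ast\in U_i$ such that $\|\M(\vec{\alpha},\vec{\Theta}_i-g(\vec{u}_i^\ast))\|\le 2\kappa$ for \emph{every} $\vec{\alpha}\in\F_q^k$. Summing this over $i$ and using the triangle inequality for $\|\cdot\|$, I would deduce that for every $\vec{\alpha}\in\F_q^k$,
\[
\left\|\M\!\left(\vec{\alpha},\ \sum_{i\in[k]}\bigl(\vec{\Theta}_i-g(\vec{u}_i^\ast)\bigr)\right)\right\|\ \le\ 2k\kappa\ =\ \tfrac14,
\]
since $\kappa=\tfrac1{8k}$. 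Equivalently, writing $\vec{s}:=\sum_{i}\vec{u}_i^\ast\in\tilde B_k$, and using $\M$-linearity $\sum_i\M(\vec\alpha,g(\vec u_i^\ast))=\M(\vec\alpha,g(\vec s))$, we get $\|\M(\vec\alpha,\ \Xi)\|\le\tfrac14$ for all $\vec\alpha$, where $\Xi:=\sum_i\vec\Theta_i - g(\vec s)$ (viewed coordinatewise in $\F_q^{k\times\ell}$).

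The core of the argument is to show this forces $\vec{s}=\vec0$. Suppose not. Then \eqref{prop1} applies to $\vec s=\sum_i 1\cdot\vec u_i^\ast$ and gives $\|g(\vec s)\|\ge \tfrac23$. Separately I would bring in the absence of Type 5 non-edges: exactly as in Claim~\ref{claim:decodespecificalpha}, the ``affine-shift covering'' averaging argument (which works because $|R^\ast|/q^{k^2}>\varepsilon^3/3>1/q$) shows that for every $\vec\alpha\in\F_q^k\setminus\{\vec0\}$ there are points $\vec{\bm\alpha},\vec{\bm\alpha}+(\vec\alpha,\ldots,\vec\alpha)\in R^\ast$ realized by vertices $v,v'\in T$; since $(v,v')$ is not a Type 5 non-edge we get $\Gamma(\vec{\bm\alpha}+(\vec\alpha,\ldots,\vec\alpha))=\Gamma(\vec{\bm\alpha})$. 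Combining with \eqref{eq:lin1}, $c(\vec{\bm\alpha}+(\vec\alpha,\ldots,\vec\alpha))$ and $c(\vec{\bm\alpha})$ are within $2\kappa$ in relative distance; but $c$ is linear, so $c(\vec{\bm\alpha}+(\vec\alpha,\ldots,\vec\alpha))-c(\vec{\bm\alpha})=c((\vec\alpha,\ldots,\vec\alpha))=\sum_i\M(\vec\alpha,\vec\Theta_i)=\M(\vec\alpha,\sum_i\vec\Theta_i)$. Hence $\|\M(\vec\alpha,\sum_i\vec\Theta_i)\|\le 2\kappa$ for every $\vec\alpha\in\F_q^k$. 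Subtracting this from the displayed bound above (triangle inequality again) yields $\|\M(\vec\alpha,g(\vec s))\|\le 2k\kappa+2\kappa\le\tfrac13$ for all $\vec\alpha$; in particular for any fixed nonzero $\vec\alpha$, say $\vec\alpha=\vec e_1$, this reads $\|g(\vec s)\|=\|\M(\vec e_1,g(\vec s))\|\le\tfrac13$ — note $\M(\vec e_1,g(\vec s))$ just picks out the first of the $k$ blocks, but by averaging over all $\vec e_j$ we get $\|g(\vec s)\|\le\tfrac13$ outright — contradicting $\|g(\vec s)\|\ge\tfrac23$ from \eqref{prop1}. Therefore $\vec s=\sum_i\vec u_i^\ast=\vec0$, which is exactly the claim.

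The main obstacle, and the step requiring the most care, is the bookkeeping between the ``$k$-variable over $\F_q^k$'' view of $c$ and the ``$\M$'' formalism, specifically making sure that $c((\vec\alpha,\ldots,\vec\alpha))=\M(\vec\alpha,\sum_i\vec\Theta_i)$ and that summing the $k$ block-estimates of Claim~\ref{claim:decodei} genuinely yields an estimate on $\M(\vec\alpha,g(\vec s))$ via $\sum_i\M(\vec\alpha,g(\vec u_i^\ast))=\M(\vec\alpha,\sum_i g(\vec u_i^\ast))=\M(\vec\alpha,g(\vec s))$, the last step using that $g$ is \emph{linear} (it is a tuple of matrix products), not merely scalar respecting. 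One also has to be slightly careful that the constant slack $2k\kappa+2\kappa$ stays strictly below the $\tfrac23$ gap from \eqref{prop1}; with $\kappa=\tfrac1{8k}$ and $k\ge1$ this is $\tfrac14+\tfrac1{4k}\le\tfrac12<\tfrac23$, so there is room to spare. Everything else is a routine triangle-inequality chain identical in spirit to Claims~\ref{claim:decodespecificalpha} and~\ref{claim:decodei}.
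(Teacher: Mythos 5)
Your proposal is correct and follows essentially the same route as the paper: the absence of Type 5 non-edges plus the affine-shift averaging yields $\|\M(\vec{\alpha},\sum_i\vec{\Theta}_i)\|\le 2\kappa$, which combined with Claim~\ref{claim:decodei} bounds $\|\M(\vec{\alpha},g(\sum_i\vec{u}_i^\ast))\|$ by $2(k+1)\kappa\le 1/2$ for all $\vec{\alpha}$ (your intermediate ``$\le 1/3$'' is a harmless arithmetic slip --- for $k=2$ the bound is $3/8$ --- but the ``$\le 1/2 < 2/3$'' you state at the end is all that is needed), contradicting \eqref{prop1}. The only cosmetic difference is that you obtain the final contradiction by averaging $\|\M(\vec{e}_j,g(\vec{z}^\ast))\|$ over the standard basis vectors, whereas the paper averages over a uniformly random $\vec{\alpha}\in\F_q^k$; both are valid.
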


The proof idea of this claim is as follows. For every $i\in[k]$ and every $\vec{\alpha}\in \F_q^k$, we show that there is a line in the direction of $(\vec{\alpha},\ldots ,\vec{\alpha})$ which contains two vertices $(\alphab,\betab,\vec{x},\vec{y})$ and  $(\alphab',\betab',\vec{x}',\vec{y}')$ such that  $\alphab,\alphab'\in R^\ast$. Then, by noting that these two vertices don't have a Type 5 non-edge between them, we obtain that the linear function $c$ evaluated at $(\vec{\alpha},\ldots ,\vec{\alpha})$ is almost $\vec{0}$. On the other hand, from Claim~\ref{claim:decodei}, we have that $\M\left(\vec{\alpha},\underset{i\in[k]}{\sum}g(\vec{u}_i^\ast)\right)$ is close to $c(\vec{\alpha},\ldots ,\vec{\alpha})$. Thus, we obtain that $\M\left(\vec{\alpha},\underset{i\in[k]}{\sum}g(\vec{u}_i^\ast)\right)$ has small relative Hamming weight for all $\vec{\alpha}\in \F_q^k$. However, from  \eqref{prop1} we know that if $\underset{i\in[k]}{\sum}\vec{u}_i^\ast\neq \vec{0}$ then there exists $\vec{\alpha}\in\F_q^k$ such that  $\M\left(\vec{\alpha},\underset{i\in[k]}{\sum}g(\vec{u}_i^\ast)\right)$ has large relative Hamming weight, and thus, we arrive at a contradiction. The formal argument follows.

Fix some non-zero $\vec{\alpha}\in \F_q^k$.   Let $L_0$ be a linear line in $\F_q^{k^2}$  containing   the point $(\vec{\alpha},\ldots ,\vec{\alpha})$. Since $|{R}^\ast|/q^{k^2}>1/q$,  there exist two points  $\vec{\bm{\alpha}},\vec{\bm{\alpha}}+(\gamma\cdot \vec{\alpha},\ldots ,\gamma\cdot\vec{\alpha})\in {R}^\ast$, for some $\alphab\in\F_q^{k^2}$ and $\gamma\in \F_q\setminus\{0\}$. 
  From \eqref{eq:lin1} we have
\begin{align}
\|\Gamma(\vec{\bm{\alpha}})-c(\vec{\bm{\alpha}})\|\le \kappa\ \text{and } \|\Gamma(\vec{\bm{\alpha}}+(\gamma\cdot \vec{\alpha},\ldots ,\gamma\cdot\vec{\alpha}))-c(\vec{\bm{\alpha}}+(\gamma\cdot \vec{\alpha},\ldots ,\gamma\cdot\vec{\alpha}))\|\le \kappa. 
\label{eq:lin7}
\end{align}

Let $v:=(\alphab,\betab,\Gamma(\alphab),\Gamma(\betab))$ and 
$v^{\prime}:=(\alphab+(\vec{\alpha}\cdot\gamma,\ldots ,\vec{\alpha}\cdot\gamma),\betab',\Gamma(\alphab+(\vec{\alpha}\cdot\gamma,\ldots ,\vec{\alpha}\cdot\gamma)),\Gamma(\betab'))$ be the two vertices in $T$ for some $\betab,\betab'\in\F_q^{k^2}$. Since   there is no Type 5 non-edge between them, we have 
\begin{align}
\Gamma(\alphab)=\Gamma(\alphab+(\gamma\cdot \vec{\alpha},\ldots ,\gamma\cdot\vec{\alpha})).\label{eq:lin8}
\end{align}

On a different note, we have
\begin{align}
c(\vec{\bm{\alpha}}+(\gamma\cdot \vec{\alpha},\ldots ,\gamma\cdot\vec{\alpha}))=c(\vec{\bm{\alpha}})+\gamma\cdot c(\vec{\alpha},\ldots ,\vec{\alpha})=c(\vec{\bm{\alpha}})+\M\left(\gamma\cdot \vec{\alpha},\underset{i\in[k]}{\sum}\vec{\Theta}_i\right).
\label{eq:lin9}
\end{align}

Plugging in the simplification in \eqref{eq:lin8} and \eqref{eq:lin9} into \eqref{eq:lin7}, we have
\allowdisplaybreaks
\begin{align}
2\kappa&\ge\|\Gamma(\vec{\bm{\alpha}})-c(\vec{\bm{\alpha}})\|+\|\Gamma(\vec{\bm{\alpha}}+(\gamma\cdot \vec{\alpha},\ldots ,\gamma\cdot\vec{\alpha}))-c(\vec{\bm{\alpha}}+(\vec{\alpha}\cdot\gamma,\ldots ,\vec{\alpha}\cdot\gamma))\|\nonumber\\
&\ge\|\Gamma(\vec{\bm{\alpha}})-c(\vec{\bm{\alpha}})-\Gamma(\vec{\bm{\alpha}}+(\gamma\cdot \vec{\alpha},\ldots ,\gamma\cdot\vec{\alpha}))+c(\vec{\bm{\alpha}}+(\vec{\alpha}\cdot\gamma,\ldots ,\vec{\alpha}\cdot\gamma))\|\nonumber\\
&=\left\|\M\left(\gamma\cdot \vec{\alpha},\underset{i\in[k]}{\sum}\vec{\Theta}_i\right) \right\|\nonumber\\
&=\left\|\M\left( \vec{\alpha},\underset{i\in[k]}{\sum}\vec{\Theta}_i\right)\right\|\label{eqzero},
\end{align}
 where the last equality follows from noting that for any vector $\vec{a}$ and non-zero scalar $\zeta$, we have $\|\zeta\cdot \vec{a}\|=\|\vec{a}\|$.
 
Next, to see the claim, we first define $\vec{z}^\ast\in \F_q^m$ as follows:
$$
\vec{z}^\ast:=\sum_{i\in[k]}\vec{u}_i^\ast.
$$

From Claim~\ref{claim:decodei}, we have that for every $i\in[k]$ and for all $\vec{\alpha}\in \F_q^k$ we have $\|\M(\vec{\alpha},\vec{\Theta}_i-g(\vec{u}_i^\ast))\|\le 2\kappa$.  
Fix some $\vec{\alpha}\in \F_q^k\setminus \{0\}$. Then,
$$2\kappa k\ge   \sum_{i\in[k]}\left\|\M(  \vec{\alpha},\vec{\Theta}_i-g(\vec{u_i^\ast})) \right\| \ge \left\|\sum_{i\in[k]}\M(  \vec{\alpha},\vec{\Theta}_i-g(\vec{u_i^\ast})) \right\|=\left\|\M\left(  \vec{\alpha},\sum_{i\in[k]}\left(\vec{\Theta}_i-g(\vec{u_i^\ast})\right)\right) \right\|.$$ 
 
Plugging in \eqref{eqzero}, we have
$$\frac{1}{2}\ge 2\kappa (k+1)\ge \left\|\M\left(  \vec{\alpha},\sum_{i\in[k]}g(\vec{u_i^\ast})\right) \right\|=\left\|\M\left(  \vec{\alpha},g\left(\sum_{i\in[k]}\vec{u_i^\ast}\right)\right) \right\|=\left\|\M\left(  \vec{\alpha},g\left(\vec{z}^\ast\right)\right) \right\|.$$

Therefore, we have that for all $\vec{\alpha}\in\F_q^k$
\begin{align}
\|\M\left(  \vec{\alpha},g\left(\vec{z}^\ast\right)\right)\|\le1/2.\label{eqfinal}
\end{align}

From \eqref{prop1} we have that if $\vec{z}^\ast\neq \vec{0}$ then  $\|g(\vec{z}^\ast)\|\ge 2/3$.
We think of $g(\vec{z}^\ast)$ as $(\vec{b}_1,\ldots, \vec{b}_{\ell})$, where $\vec{b}_i\in\F_q^k$, for all $i\in[\ell]$. Since $\|g(\vec{z}^\ast)\|\ge 2/3$, we have that $\underset{i\sim [\ell]}{\Pr}[\vec{b}_i=\vec{0}]\le 1/3$. For every $i\in[\ell]$ and a uniformly random $\vec{\alpha}\in\F_q^k$ we have that  $$\Pr_{\vec{\alpha}\sim\F_q^k}[\langle \vec{\alpha},\vec{b}_i\rangle=0]=\begin{cases}
\frac{1}{q}&\text{ if }\vec{b}_i\neq 0\\
0&\text{ otherwise}\end{cases} .$$
 Thus, we have
\begin{align*} \underset{\vec{\alpha}\sim\F_q^k}{\mathbb{E}}\left[\|\M(\vec{\alpha},g(\vec{z}^\ast))\|\right]\ge \frac{q-1}{q}\cdot \frac{2}{3}>\frac{1}{2}. 
\end{align*}

This implies there exists $ \vec{\alpha}\in\F_q^k$ such that $\|\M(\vec{\alpha},g\left(\vec{z}^\ast\right))\|>1/2$, which contradicts \eqref{eqfinal}, and therefore we have $\vec{z}^\ast= \vec{0}$.

\section{Open Problems}\label{sec:open}

The main open problem left behind from this work is to prove the total FPT-inapproximability of the $k$-Clique problem. Apart from this open problem, we would like to highlight the following two open problems too. 

\paragraph{Parameterized Inapproximability Hypothesis (PIH).} The PIH was putforth in \cite{LRSZ20} and asserts that it is W[1]-hard parameterized by $k$, to decide the satisfiability of gap 2-CSP on $k$ variables and alphabet size $n$. It is easy to show that assuming  Gap-ETH, the above gap 2-CSP instances do not admit FPT-approximation algorithms  (for example see \cite{BGKM18}). Previously, many researchers belived that the way to obtain  inapproximability results for the parameterized $k$-Clique problem must be to first resolve PIH. However, Lin \cite{L21} surprisingly found a route  to prove  inapproximability of the $k$-Clique problem while circumventing past PIH.   Nevertheless, since one may see PIH as a parameterized complexity analogue of the PCP theorem (for NP), it remains an outstanding open problem to be settled.  

\paragraph{ETH lower bound for approximating $k$-Clique.}
In \cite{L21} and this paper, we are primarily interested in proving strong hardness of approximation factors for the $k$-Clique problem under the W[1]$\neq$FPT assumption.  However, can we prove tighter running time lower bounds for approximating $k$-Clique problem under stronger assumptions such as ETH? For example, assuming ETH, can we rule out constant factor approximation algorithms for $k$-Clique problem running in $n^{o(k)}$ time? 
Both \cite{L21} and this paper can only prove a time lower bound of $n^{(\log k)^{\Omega(1)}}$ under ETH, for  approximating the $k$-Clique to constant factors. 
\bibliographystyle{alpha}
\bibliography{refs}

 \appendix
 
 \section{Linearity Testing over $\F_q$}\label{sec:appendix}

In this section we prove the following theorem. 

\begin{theorem}[Linearity Testing][Restatement of Theorem~\ref{thm:lintest}]\label{thm:linearitytest}
Let $q$ be a prime number and $d\in \mathbb{N}$.
Let $f:\F_q^d\to\F_q$ be a scalar respecting function. Let $\epsilon, \delta > 0$ be parameters such that 
$\epsilon \gg \delta \gg \frac{1}{q^{1/3}}$. 
If $f$  passes $\mathcal{T}$ with probability $\varepsilon$, then  there exists an integer $r = O(1/\delta^2)$ and 
\emph{linear} functions $c_1,\ldots ,c_r:\F_q^d\to\F_q$, such that the following holds.
$$
\underset{ (\alphab, \betab) \sim S_{f,\mathcal{T}} }{\Pr}\left[\exists\ \text{unique } j\in[r] \text{ such that }f(\alphab)=c_j(\alphab),
 f(\betab)=c_j(\betab) \right]\ge 1- O\left(\frac{\delta}{\epsilon}\right).
$$
\end{theorem}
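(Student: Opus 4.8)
The plan is to prove Theorem~\ref{thm:linearitytest} by carrying out a Fourier-analytic list-decoding argument for the Hadamard code over $\F_q$, following the high-acceptance analysis of \cite{AS03} and patching it into the low-acceptance regime in the style of \cite{KS13}. First I would set up additive characters: fix a nontrivial character $\psi:\F_q\to\mathbb{C}^\times$ and, for $f:\F_q^d\to\F_q$, study the function $\omega(x):=\psi(f(x))$. The acceptance probability $\varepsilon$ of $\mathcal{T}$ is then $\mathbb{E}_{\alphab,\betab}\big[\text{indicator that }f(\alphab)+f(\betab)=f(\alphab+\betab)\big]$, which, averaging the phase over all characters, becomes $\frac{1}{q}\sum_{t\in\F_q}\mathbb{E}_{\alphab,\betab}\big[\psi\big(t(f(\alphab)+f(\betab)-f(\alphab+\betab))\big)\big] = \frac{1}{q}\sum_t \sum_{\chi}\widehat{\omega_t}(\chi)^2\,\overline{\widehat{\omega_t}(\chi)}$, where $\omega_t=\psi(t\cdot f)$ and the inner expression is a sum of cubes of Fourier coefficients over the dual group. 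Here I use that $f$ is scalar respecting to relate $\omega_t$ to $\omega_1$, so it suffices to analyze $\omega:=\omega_1$ itself.

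Next I would extract the heavy Fourier coefficients. Since $\sum_\chi \widehat{\omega}(\chi)^2\overline{\widehat\omega(\chi)}$ has real part bounded below in terms of $\varepsilon$ while $\sum_\chi|\widehat\omega(\chi)|^2=1$, a standard argument shows there can be at most $r=O(1/\delta^2)$ characters $\chi$ with $|\widehat\omega(\chi)|\ge \delta$; call them $\chi_1,\dots,\chi_r$, each corresponding (via the character group structure of $\F_q^d$) to a linear function $c_j:\F_q^d\to\F_q$ with $\widehat\omega(\chi_j)=\mathbb{E}_x[\psi(f(x)-c_j(x))]$ large, i.e.\ $c_j$ agrees with $f$ on a $\gtrsim\delta$-fraction of inputs (again using scalar-respecting to promote agreement on one coordinate/character to genuine linearity). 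The core of the argument is then to show that a random pair $(\alphab,\betab)\sim S_{f,\mathcal T}$ is, with probability $1-O(\delta/\varepsilon)$, ``explained'' by a \emph{unique} $c_j$ in the sense that $f(\alphab)=c_j(\alphab)$ and $f(\betab)=c_j(\betab)$. For existence, one bounds the contribution to $\varepsilon$ of the ``light'' part of $\omega$: writing $\omega=\omega^{\text{heavy}}+\omega^{\text{light}}$ with $\|\widehat{\omega^{\text{light}}}\|_\infty\le\delta$, the light part's contribution to $\sum_\chi\widehat\omega(\chi)^2\overline{\widehat\omega(\chi)}$ is $\le\delta\|\widehat\omega\|_2^2=\delta$, so after the $\frac1q\sum_t$ averaging and accounting for the diagonal term, the mass of accepting pairs not captured by any heavy character is $O(\delta)$, hence a $O(\delta/\varepsilon)$ fraction of $S_{f,\mathcal T}$. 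For uniqueness, the key point is that two distinct linear functions $c_j\neq c_{j'}$ agree on exactly a $1/q$ fraction of $\F_q^d$, so the probability that a random accepting pair $(\alphab,\betab)$ lies in the agreement locus of both is tiny — crucially using $\delta\gg q^{-1/3}$ (equivalently $q^{-1}\ll\delta^3$) to absorb the $O(r^2/q)=O(1/(q\delta^4))$ overlap against the target error $O(\delta/\varepsilon)$, together with $\varepsilon\gg\delta$.

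I expect the main obstacle to be the bookkeeping in the low-acceptance regime: in the BLR setting $\varepsilon$ is close to $1$ and one directly gets a single dominant character, but here $\varepsilon$ can be as small as $\mathrm{poly}(1/q)$, so one cannot afford any loss polynomial in $1/\varepsilon$ and must track the three regimes — the diagonal/self-correlation term, the off-diagonal heavy term, and the light term — simultaneously across all $q$ twists $\omega_t$, while ensuring the uniqueness failure event and the existence failure event each stay below $O(\delta/\varepsilon)$. The scalar-respecting hypothesis is what makes the $q$-fold twisting harmless (it collapses the analysis of $\omega_t$ for $t\neq 0$ to that of $\omega_1$, preventing a factor-$q$ blowup), and the hierarchy $\varepsilon\gg\delta\gg q^{-1/3}$ is exactly calibrated so that the cubic slack in $q$ dominates the quadratic/quartic list-size factors. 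Since the paper states this theorem's proof is deferred to Appendix~\ref{sec:appendix} and ``follows by combining known ideas,'' I would organize the write-up as: (1) Fourier setup and character averaging; (2) heavy-coefficient extraction and the $O(1/\delta^2)$ list-size bound; (3) existence via the light-part bound; (4) uniqueness via pairwise $1/q$-agreement of distinct linear functions; (5) collecting error terms under the stated parameter regime.
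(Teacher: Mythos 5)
Your overall architecture — additive characters, the trilinear Fourier identity for the acceptance probability, using the scalar-respecting hypothesis to collapse the $q-1$ nontrivial twists to a single one, the $O(1/\delta^2)$ list of heavy characters, and uniqueness via the $1/q$-fraction agreement of distinct linear functions — matches the paper's Appendix~A proof. The genuine gap is in your existence step. You split $\omega=\omega^{\mathrm{heavy}}+\omega^{\mathrm{light}}$ and argue that since the light part contributes at most $\delta$ to $\sum_\chi\widehat\omega(\chi)^2\overline{\widehat\omega(\chi)}$, ``the mass of accepting pairs not captured by any heavy character is $O(\delta)$.'' That inference does not follow: the heavy/light split is a decomposition in Fourier space of the \emph{global} acceptance probability, whereas the event you must control — that a random accepting pair $(\alphab,\betab)$ satisfies $f(\alphab)=c_j(\alphab)$ and $f(\betab)=c_j(\betab)$ for some listed $c_j$ — is a \emph{pointwise} event. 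A heavy coefficient contributes to the acceptance probability through pairs on which $g$ need not agree with the corresponding character at all, and nothing in the bound $\bigl|\sum_{\mathrm{light}}\widehat\omega(\chi)^3\bigr|\le\delta$ localizes the remaining acceptance mass to the agreement sets of the listed characters.

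The missing idea is a restriction (self-correction) argument. Let $X$ be the set of $\alphab$ with $g(\alphab)\notin\{\chi_{\rhob}(\alphab):\rhob\in\mathsf{List}\}$ and suppose the test accepts with $\alphab\in X$ with probability at least $\delta$. Define $\tilde g$ to equal $g$ on $X$ and to be random elsewhere; then $\Pr[\tilde g(\alphab)g(\betab)=g(\alphab+\betab)]\ge\delta$, and applying the trilinear identity to the triple $(\tilde g,g,g)$ produces $\rhob^*$ with $\hat{\tilde g}(\rhob^*)\ge\Omega(\delta)$. Since $\tilde g$ is random off $X$, this correlation must live on $X$, where $\tilde g=g$, so $g$ correlates $\Omega(\delta)$ with $\chi_{\rhob^*}$, forcing $\rhob^*\in\mathsf{List}$ and contradicting the definition of $X$. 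This is the step your decomposition cannot replace. A smaller quantitative point: your uniqueness bound $O(r^2/q)=O(1/(q\delta^4))$ is \emph{not} absorbed by $\delta\gg q^{-1/3}$ (at $\delta\approx q^{-1/3}$ it is $q^{1/3}$); you need to count agreement on both coordinates $\alphab$ and $\betab$, giving $O(r^2/q^2)$, for the parameters to close.
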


Before we prove the above theorem, we note some basic facts in Fourier analysis.  
Let $\omega:=e^{2\pi i/q}$ be the complex $q^{\text{th}}$ root of unity and $\Omega :=\{1,\omega,\ldots,\omega^{q-1}\}$.
In this section, we are interested in the complex vector space of all complex valued functions on $\F_q^d$, equipped with the inner product,
$$
\langle g_1,g_2\rangle:=\underset{\alphab\in F_q^d}{\mathbb{E}}[g_1(\alphab)\overline{g_2(\alphab)}].
$$

\begin{fact}\label{fact}
For  $z\in\Omega$ we have $\sum_{j=0}^{q-1} z^{j}=q$ if $z=1$ and  $\sum_{j=0}^{q-1} z^{j}=0$ if $z\neq 1$.
\end{fact}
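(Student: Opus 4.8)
\textbf{Proof proposal for Fact~\ref{fact}.}
The plan is the elementary two-case argument for a geometric sum of $q$-th roots of unity. First I would dispose of the case $z=1$: here every summand $z^j$ equals $1$, so $\sum_{j=0}^{q-1} z^j = \sum_{j=0}^{q-1} 1 = q$, which is the first claim.

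For the case $z\neq 1$, I would use that $z\in\Omega$ means $z=\omega^{a}$ for some $a\in\{0,1,\dots,q-1\}$, and since $z\neq 1$ we have $a\neq 0$, hence $z-1\neq 0$. Multiplying the partial sum by $(z-1)$ telescopes:
$$
(z-1)\sum_{j=0}^{q-1} z^{j} \;=\; \sum_{j=0}^{q-1}\big(z^{j+1}-z^{j}\big) \;=\; z^{q}-1.
$$
Now $z^{q}=\omega^{aq}=\big(e^{2\pi i/q}\big)^{aq}=e^{2\pi i a}=1$ because $a\in\mathbb{Z}$, so the right-hand side is $0$. Dividing by the nonzero scalar $z-1$ yields $\sum_{j=0}^{q-1} z^{j}=0$, which is the second claim.

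There is essentially no obstacle here; the only point requiring a moment's care is observing that $z-1\neq 0$ precisely when $z\neq 1$ (so that division is legitimate), and that $z^{q}=1$ for every element of $\Omega$, which is immediate from the definition $\omega=e^{2\pi i/q}$. I would present the argument in two or three lines, exactly as above.
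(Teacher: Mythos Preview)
Your argument is correct and completely standard; the paper itself states this as a \emph{Fact} without proof, so there is nothing to compare against. Your two-case treatment via the telescoping identity $(z-1)\sum_{j=0}^{q-1}z^j=z^q-1$ is exactly the expected justification.
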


Let $g:\F_q^d\to\Omega$ be a function which is scalar respecting (i.e. $g(j \alphab) = g(\alphab)^j$).  We write $$g(\alphab) = \underset{\rhob\in\F_q^d}{\sum}\hat{g}(\rhob)\chi_{\rhob}(\alphab),$$ 
where $\hat{g}:\F_q^d\to \mathbb{C}$ is the Fourier transform of $g$. Here 
$\chi_{\rhob}(\alphab) = \omega^{\rhob \cdot \alphab}$ are the Fourier characters ($\rhob \cdot \alphab$ is the inner product over $\F_q^d$)
 and we have:
$$
\forall \rhob\in\F_q^d,\ \hat{g}(\rhob):= \langle g,\chi_{\rhob}\rangle.
$$
By Parseval, we have $\sum_{\rhob} |\hat{g}(\rhob)|^2 =1$. 
Next we show that the Fourier coefficients are all real numbers for scalar respecting functions $g$. 

\begin{proposition}
For every $\rhob\in\F_q^d$ we have that $\hat{g}(\rhob)$ is a real number.
\end{proposition}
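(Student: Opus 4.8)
The plan is to compute $\overline{\hat g(\rhob)}$ directly from the definition and show it equals $\hat g(\rhob)$, using the fact that $g$ is scalar respecting together with Fact~\ref{fact}. Write $\hat g(\rhob) = \mathbb{E}_{\alphab}[g(\alphab)\,\overline{\chi_{\rhob}(\alphab)}] = \mathbb{E}_{\alphab}[g(\alphab)\,\omega^{-\rhob\cdot\alphab}]$. Taking complex conjugates and using $\overline{g(\alphab)} = g(\alphab)^{-1} = g((q-1)\alphab) = g(-\alphab)$ (the scalar-respecting property extended to the scalar $-1 \equiv q-1 \bmod q$), we get $\overline{\hat g(\rhob)} = \mathbb{E}_{\alphab}[g(-\alphab)\,\omega^{\rhob\cdot\alphab}]$. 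Now substitute $\betab = -\alphab$; as $\alphab$ ranges uniformly over $\F_q^d$ so does $\betab$, and $\rhob\cdot\alphab = -\rhob\cdot\betab$, so the expectation becomes $\mathbb{E}_{\betab}[g(\betab)\,\omega^{-\rhob\cdot\betab}] = \hat g(\rhob)$. Hence $\hat g(\rhob)$ is real.

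First I would make precise the identity $\overline{g(\alphab)} = g(-\alphab)$: since $g$ maps into $\Omega$, each value $g(\alphab)$ is a $q$-th root of unity, so its complex conjugate is its multiplicative inverse, and $g(\alphab)^{-1} = g(\alphab)^{q-1} = g((q-1)\cdot\alphab)$ by the scalar-respecting hypothesis, and $(q-1)\cdot\alphab = -\alphab$ in $\F_q^d$. Then I would carry out the change of variables $\betab = -\alphab$ carefully, noting the map $\alphab \mapsto -\alphab$ is a bijection of $\F_q^d$ so it preserves the uniform average. The rest is just bookkeeping of signs in the exponent of $\omega$.

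I do not expect a genuine obstacle here — this is a short computation. The only point requiring a little care is that the argument uses the scalar-respecting property specifically at the scalar $-1$, which is legitimate since $-1 \in \F_q$, and that one should not accidentally invoke linearity of $g$ (which is not assumed). It may also be worth remarking that the same computation, together with the earlier BLR-style Fourier identities, is what makes the list-decoding arguments go through over $\F_q$, but that is not needed for the statement itself. I would present the proof as the three lines above, with the conjugation identity isolated as a displayed equation so the reader sees exactly where scalar-respecting is used.
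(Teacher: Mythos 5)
Your proof is correct. It takes a slightly different route from the paper's: you establish reality by showing $\hat g(\rhob)$ is fixed by complex conjugation, using the scalar-respecting property only at the single scalar $-1 \equiv q-1$ (via $\overline{g(\alphab)} = g(\alphab)^{-1} = g((q-1)\alphab) = g(-\alphab)$) together with the change of variables $\alphab \mapsto -\alphab$. The paper instead symmetrizes over the whole multiplicative group: it rewrites $\hat g(\rhob)$ as $\mathbb{E}_{\alphab}\bigl[\frac{1}{q-1}\sum_{j=1}^{q-1}(g(\alphab)\overline{\chi_{\rhob}(\alphab)})^j\bigr]$ and invokes Fact~\ref{fact} to see that the inner sum is always $q-1$ or $-1$, hence manifestly real. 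Your argument is marginally more economical in its use of the hypothesis (only the involution $\alphab\mapsto-\alphab$ is needed), while the paper's full symmetrization is the same computation that is then reused verbatim in Proposition~\ref{prop:agree} and Lemma~\ref{lem:lintestcoeff}, so it establishes the template for the rest of the appendix; your version yields only that the coefficient is real, whereas the paper's additionally shows it lies in $\{\,-\frac{1}{q-1}\cdot\text{(something)}\,\}$-type averages that feed directly into the agreement formula $\frac{1}{q}+\frac{q-1}{q}\hat g(\rhob)$. All the individual steps you flag (conjugate equals inverse on $\Omega$, scalar-respecting at $q-1$, bijectivity of negation) are sound, so there is no gap.
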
\begin{proof} We note that:
$$
\hat{g}(\rhob) 
=\underset{\alphab\in\F_q^d}{\mathbb{E}}[g(\alphab)\overline{\chi_{\rhob}(\alphab)}]
= \underset{\alphab\in\F_q^d, j \in \{1,\ldots,q-1\}}{\mathbb{E}}[g(j\alphab)\overline{\chi_{\rhob}(j\alphab)}]
= \underset{\alphab\in\F_q^d }{\mathbb{E}}\left[ \frac{1}{q-1} \sum_{j=1}^{q-1}  (g(\alphab)\overline{\chi_{\rhob}(\alphab)})^j\right].  
$$
We used the fact that $g$ is scalar respecting and that the distribution of
$\alphab$ and $j \alphab$ is the same for $j \in \{1,\ldots,q-1\}$. The proposition follows by noting that 
the inner sum is always a real number, either $q-1$ or $-1$. 
\end{proof}

We note that the agreement of a function $g$ with the character $\chi_{\rhob}$ is related to the corresponding Fourier coefficient.  
\begin{proposition}\label{prop:agree}
For every $\rhob\in\F_q^d$ we have that $\underset{\alphab\sim\F_q^d}{\Pr}[g(\alphab) = \chi_{\rhob}(\alphab)]=\frac{1}{q} + \frac{q-1}{q} \hat{g}(\rhob)
$. 
\end{proposition}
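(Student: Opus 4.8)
The plan is to express the probability $\Pr_{\alphab}[g(\alphab) = \chi_{\rhob}(\alphab)]$ as an expectation of an indicator and then convert that indicator into an average of powers of $g(\alphab)\overline{\chi_{\rhob}(\alphab)}$, exploiting that $g$ is scalar respecting so that all the "scaled copies" $j\alphab$ are governed by the same function value raised to the $j$-th power. Concretely, for a fixed $\alphab$, note $g(\alphab)\overline{\chi_{\rhob}(\alphab)} \in \Omega$, so by Fact~\ref{fact} the sum $\sum_{j=0}^{q-1}(g(\alphab)\overline{\chi_{\rhob}(\alphab)})^j$ equals $q$ exactly when $g(\alphab)=\chi_{\rhob}(\alphab)$ and equals $0$ otherwise. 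Hence
$$
\mathds{1}[g(\alphab) = \chi_{\rhob}(\alphab)] = \frac{1}{q}\sum_{j=0}^{q-1}\left(g(\alphab)\overline{\chi_{\rhob}(\alphab)}\right)^j.
$$

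Next I would take the expectation over $\alphab \sim \F_q^d$ and split off the $j=0$ term, which contributes $\frac{1}{q}$. For the terms $j \in \{1,\dots,q-1\}$, I would use that $g$ is scalar respecting: $g(j\alphab) = g(\alphab)^j$, and also $\chi_{\rhob}(j\alphab) = \chi_{\rhob}(\alphab)^j$ (since $\chi_{\rhob}$ is a genuine character, hence multiplicative in the scalar), so $\left(g(\alphab)\overline{\chi_{\rhob}(\alphab)}\right)^j = g(j\alphab)\overline{\chi_{\rhob}(j\alphab)}$. Because for each nonzero $j$ the map $\alphab \mapsto j\alphab$ is a bijection of $\F_q^d$, we get $\mathbb{E}_{\alphab}\left[\left(g(\alphab)\overline{\chi_{\rhob}(\alphab)}\right)^j\right] = \mathbb{E}_{\alphab}\left[g(\alphab)\overline{\chi_{\rhob}(\alphab)}\right] = \hat{g}(\rhob)$ for every $j \in \{1,\dots,q-1\}$. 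Summing the $q-1$ identical terms and dividing by $q$ yields the claimed value $\frac{1}{q} + \frac{q-1}{q}\hat{g}(\rhob)$.

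The only slightly delicate point — and the place I would be most careful — is the identity $\left(g(\alphab)\overline{\chi_{\rhob}(\alphab)}\right)^j = g(j\alphab)\overline{\chi_{\rhob}(j\alphab)}$, which relies both on $g$ being scalar respecting and on $\overline{\chi_{\rhob}(\alphab)}^j = \overline{\chi_{\rhob}(j\alphab)}$; the latter holds because $\chi_{\rhob}(j\alphab) = \omega^{\rhob\cdot(j\alphab)} = \omega^{j(\rhob\cdot\alphab)} = \chi_{\rhob}(\alphab)^j$, using linearity of the inner product over $\F_q^d$. Everything else is bookkeeping: the reindexing bijection and Fact~\ref{fact}. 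This is essentially the same manipulation used in the preceding proposition showing $\hat{g}(\rhob)\in\mathbb{R}$, so I would simply mirror that argument.
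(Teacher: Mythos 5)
Your proof is correct and follows essentially the same route as the paper: the indicator trick via Fact~\ref{fact}, splitting off the $j=0$ term, converting $\bigl(g(\alphab)\overline{\chi_{\rhob}(\alphab)}\bigr)^j$ into $g(j\alphab)\overline{\chi_{\rhob}(j\alphab)}$ using the scalar-respecting property, and then invoking the bijection $\alphab\mapsto j\alphab$ to collapse the $q-1$ terms to $\hat{g}(\rhob)$ each. The delicate identity you flag is exactly the one the paper relies on as well, so nothing further is needed.
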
\begin{proof} We note (using that the distribution of $\alphab$ and $j\alphab$ is the same for $j \in \{1,\ldots,q-1\}$):
\begin{align*}
\underset{\alphab\sim\F_q^d}{\Pr}[g(\alphab)=\chi_{\rhob}(\alphab)]&=\underset{\alphab\in\F_q^d}{\mathbb{E}}\left[\frac{1}{q}\cdot \sum_{j=0}^{q-1}(g(\alphab)\overline{\chi_{\rhob}(\alphab)})^j\right]\\
&=\frac{1}{q} + \frac{1}{q}  \underset{\alphab\in\F_q^d}{\mathbb{E}}\left[  \sum_{j=1}^{q-1}  
g(j\alphab) \overline{\chi_{\rhob}(j\alphab)} \right] \\ 
&=\frac{1}{q} + \frac{1}{q}  \underset{\alphab\in\F_q^d}{\mathbb{E}}\left[  \sum_{j=1}^{q-1}
g(\alphab) \overline{\chi_{\rhob}(\alphab)} \right] \\ 
&= \frac{1}{q} + \frac{q-1}{q} \hat{g}(\rhob).
\qedhere\end{align*}
\end{proof}

\begin{lemma}\label{lem:lintestcoeff}
Let $g_1,g_2,g_3:\F_q^d\to\Omega$ be three scalar preserving functions. Let $g_i=\underset{\rhob\in\F_q^d}{\sum}\hat{g}_i(\rhob)\chi_{\rhob}(\alphab)$ be the Fourier representation of $g_i$, for all $i\in\{1,2,3\}$. Then we have:
$$
\Pr_{(\alphab,\betab)\sim \F_q^d\times \F_q^d}[g_1(\alphab)g_2(\betab)=g_3(\alphab+\betab)]=\frac{1}{q}+\frac{q-1}{q}\sum_{\rhob\in\F_q^d} \hat{g}_1(\rhob)\cdot \hat{g}_2(\rhob)\cdot {\hat{g}_3(\rhob)}.
$$
Also, if the probability on the LHS is at least $\epsilon$ and $g_2=g_3$, then for some $\rhob$, $\hat{g}_1(\rhob) \geq \Omega(\epsilon)$. 
\end{lemma}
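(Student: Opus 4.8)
\textbf{Proof plan for Lemma~\ref{lem:lintestcoeff}.} The plan is to expand each $g_i$ in its Fourier basis and compute the acceptance probability directly. I would start by writing the test acceptance probability as an expectation of an indicator, and expressing that indicator via Fact~\ref{fact}: for scalar-respecting functions into $\Omega$, the quantity $g_1(\alphab)g_2(\betab)\overline{g_3(\alphab+\betab)}$ lies in $\Omega$, so $\Pr[g_1(\alphab)g_2(\betab)=g_3(\alphab+\betab)] = \mathbb{E}_{\alphab,\betab}\left[\frac{1}{q}\sum_{j=0}^{q-1}\left(g_1(\alphab)g_2(\betab)\overline{g_3(\alphab+\betab)}\right)^j\right]$. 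The $j=0$ term contributes $\frac{1}{q}$. For the remaining terms, I would use scalar-respecting-ness to replace $\left(g_i(\cdot)\right)^j$ by $g_i(j\,\cdot)$ and the fact that the distribution of $(\alphab,\betab)$ equals that of $(j\alphab, j\betab)$ for $j\in\{1,\dots,q-1\}$; this collapses each of the $q-1$ nonzero terms to the same value $\mathbb{E}_{\alphab,\betab}[g_1(\alphab)g_2(\betab)\overline{g_3(\alphab+\betab)}]$, exactly as in the proof of Proposition~\ref{prop:agree}.

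Next I would substitute the Fourier expansions $g_i = \sum_{\rhob}\hat{g}_i(\rhob)\chi_{\rhob}$ into that expectation. Using $\chi_{\rhob}(\alphab+\betab)=\chi_{\rhob}(\alphab)\chi_{\rhob}(\betab)$, the triple sum factors into $\mathbb{E}_{\alphab}[\chi_{\rhob_1}(\alphab)\overline{\chi_{\rhob_3}(\alphab)}]\cdot\mathbb{E}_{\betab}[\chi_{\rhob_2}(\betab)\overline{\chi_{\rhob_3}(\betab)}]$, and orthogonality of characters forces $\rhob_1=\rhob_2=\rhob_3=:\rhob$. What survives is $\sum_{\rhob}\hat{g}_1(\rhob)\hat{g}_2(\rhob)\overline{\hat{g}_3(\rhob)}$, and since the Fourier coefficients of scalar-respecting functions are real (the Proposition just before Proposition~\ref{prop:agree}), the conjugation is harmless and we get $\sum_{\rhob}\hat{g}_1(\rhob)\hat{g}_2(\rhob)\hat{g}_3(\rhob)$. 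Multiplying by $\frac{q-1}{q}$ and adding the $\frac{1}{q}$ from the $j=0$ term yields the claimed identity.

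For the second assertion, when $g_2=g_3$ we have $\hat g_2=\hat g_3$, so the identity reads $\epsilon \le \frac{1}{q}+\frac{q-1}{q}\sum_{\rhob}\hat g_1(\rhob)\hat g_2(\rhob)^2$. Since $\hat g_2(\rhob)^2\ge 0$ and $\sum_{\rhob}\hat g_2(\rhob)^2=1$ by Parseval, the sum is a weighted average of the values $\hat g_1(\rhob)$; hence $\max_{\rhob}\hat g_1(\rhob)\ge \sum_{\rhob}\hat g_1(\rhob)\hat g_2(\rhob)^2 \ge \frac{q}{q-1}\left(\epsilon-\frac{1}{q}\right)\ge \Omega(\epsilon)$, where the last step uses that in the relevant regime $\epsilon\gg 1/q$ (indeed $\epsilon\gg 1/q^{1/3}$), so $\epsilon - 1/q = \Omega(\epsilon)$.

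I do not expect any serious obstacle here: the only mild subtleties are (i) carefully justifying the collapse of the $q-1$ nonzero powers to a single term, which mirrors the already-proven Proposition~\ref{prop:agree} verbatim, and (ii) bookkeeping the reality of the Fourier coefficients so that $\overline{\hat g_3(\rhob)}$ can be dropped. The character-orthogonality computation is routine. The final bound is an elementary averaging argument once Parseval is invoked.
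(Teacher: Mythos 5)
Your proposal is correct and follows essentially the same route as the paper's proof: the indicator-via-roots-of-unity expansion, collapsing the $q-1$ nonzero powers using scalar-respecting-ness, Fourier expansion with character orthogonality forcing $\rhob_1=\rhob_2=\rhob_3$, and Parseval plus averaging for the final assertion. Your write-up of the last step is in fact slightly more explicit than the paper's, which simply asserts the conclusion from $\sum_{\rhob}\hat g_2(\rhob)^2=1$ and $\epsilon\gg 1/q$.
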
\begin{proof} We note that the desired probability can be expressed as below and follow the standard calculation: 
\begin{align*}
& \underset{\alphab,\betab\sim\F_q^d}{\Pr}\left[ \frac{1}{q} \sum_{j=0}^{q-1} (g_1(\alphab)g_2(\betab)\overline{g_3(\alphab+\betab)})^j  \right] \\ 
= & \frac{1}{q} + \frac{1}{q} \sum_{j=1}^{q-1}  \underset{\alphab\in\F_q^d}{\mathbb{E}}\left[ g_1(j\alphab)g_2(j\betab)\overline{g_3(j\alphab+j\betab)}  \right]\\
= & \frac{1}{q} + \frac{q-1}{q}  \underset{\alphab\in\F_q^d}{\mathbb{E}}\left[ g_1(\alphab)g_2(\betab)\overline{g_3(\alphab+\betab)}  \right]\\
= & \frac{1}{q} + \frac{q-1}{q}  \underset{\alphab\in\F_q^d}{\mathbb{E}}\left[ \sum_{{\rhob}_1, {\rhob}_2, {\rhob}_3 } 
\hat{g}_1({\rhob}_1)\hat{g}_2({\rhob}_2)  \overline{\hat{g}_3({\rhob}_3)}   \chi_{\rhob_1}(\alphab) \chi_{\rhob_2}(\betab) 
\overline{\chi_{\rhob_3}(\alphab+\betab)} 
 \right]\\
 = & \frac{1}{q} + \frac{q-1}{q}  \underset{\alphab\in\F_q^d}{\mathbb{E}}\left[ \sum_{{\rhob}_1, {\rhob}_2, {\rhob}_3 }
\hat{g}_1({\rhob}_1)\hat{g}_2({\rhob}_2)  \hat{g}_3({\rhob}_3)   \chi_{\rhob_1-\rhob_3}(\alphab) \chi_{\rhob_2-\rhob_3}(\betab)
 \right]\\
 = & \frac{1}{q} + \frac{q-1}{q}   \sum_{\rhob }
\hat{g}_1({\rhob})\hat{g}_2({\rhob})  \hat{g}_3({\rhob}).  
\end{align*} 
We used the fact that $\hat{g}_3(\rhob_3)$ are real numbers and that the expectation over $\alphab, \betab$ vanishes unless 
$\rhob_1 = \rhob_2 = \rhob_3$. 
For the last conclusion, we note that $\sum_{\rhob} \hat{g}_2(\rhob)^2 = 1$.  Hence, if the probability is at least $\epsilon \gg \frac{1}{q}$,
then for some $\rho$, we have $\hat{g}_1(\rhob) \geq \Omega(\epsilon)$.  \qedhere 
\end{proof}

\begin{proof}[Proof of Theorem~\ref{thm:linearitytest}] Let $f, \epsilon, \delta$ be as in the statement of the theorem. 
Given $f$, we define the function $g:\F_q^d\to\Omega$ as follows:
$$
\forall \alphab\in\F_q^d,\ g(\alphab):=\omega^{f(\alphab)}.$$
Under this transformation, linear functions on $\F_q^d$ can equivalently be viewed as Fourier characters, and we use the two views interchangeably. 
Clearly, for every $\alphab,\betab\in\F_q^d$, 
$
f(\alphab)+f(\betab)=f(\alphab+\betab)\Longleftrightarrow g(\alphab)g(\betab)=g(\alphab+\betab)
$. Since $f$ passes the test ${\mathcal T}$ with probability $\epsilon$, we have
  $$ \Pr_{(\alphab,\betab)\sim \F_q^d\times \F_q^d}[g(\alphab)g(\betab)=g(\alphab+\betab)] \geq \epsilon. $$ 
From Lemma \ref{lem:lintestcoeff} (applied with $g_1=g_2=g_3=g$), we conclude that for some $\rhob$, we have $\hat{g}(\rhob) \geq \Omega(\epsilon)$.
%
%
%
%
%
%
%
We now list the set of all characters (i.e. linear functions) for which $g$ has a large corresponding Fourier coefficient (the constant 
in $\Omega(\delta)$ is chosen small enough so that the subsequent argument works):  
$$ {\sf List} := \{\chi_{\rhob} \in \F_q^d  ~| ~ |\hat{g}(\rhob)| \geq \Omega(\delta)  \}.  
$$
Since the squared Fourier coefficients sum upto $1$, the list is bounded, i.e. $|{\sf List}| \leq O(1/\delta^2)$. Now suppose on the 
contrary that, with probability $\delta$, $g$ passes the linearity test but the value $g(\alphab)$ is inconsistent with all linear
functions $\chi_{\rhob}$ in the list, i.e. 
%
%
%
\begin{equation} \label{eqn:inconsistent}
\Pr_{\alphab,\betab } \left[ g(\alphab) g(\betab)= g(\alphab+\betab) \ \mbox{and} \ g(\alphab) \not\in \{ \chi_{\rhob}(\alphab) ~| ~ 
\rhob \in {\sf List} \}    
\right] \geq \delta.
\end{equation} 

Let $X$ be the subset of $\F_q^d$ on which $g(\alphab) \not\in \{ \chi_{\rhob}(\alphab) ~| ~
\rhob \in {\sf List} \}$. Consider a new function $\tilde{g}:\F_q^d\to\Omega$ defined\footnote{For the sake of conciseness, we skip here the additional care that needs to be taken while defining  $\tilde{g}$ to ensure that it is scalar preserving. } as follows:
$$
\tilde{g}(\alphab)=\begin{cases}
g(\alphab) & if \ \alphab\in X\\
\text{random} & \text{otherwise}
\end{cases}
$$

Thus from Equation \eqref{eqn:inconsistent}, 
$$\Pr_{(\alphab,\betab)\sim \F_q^d\times\F_q^d}[\tilde{g}(\alphab)g(\betab)=g(\alphab+\betab)] \geq \delta.$$

From Lemma \ref{lem:lintestcoeff} (applied with $g_1 = \tilde{g}, g_2=g_3=g$), 
there exists $\rhob^*$ such that $\hat{\tilde{g}}(\rhob^*) \geq \Omega(\delta)$. From 
Proposition \ref{prop:agree}, we have that $\tilde{g}$ has agreement $\Omega(\delta)$ with $\chi_{\rhob^*}$. Since $\tilde{g}$ is defined randomly
outside $X$, this agreement must essentially be on $X$. Further, since  $\tilde{g}$ coincides with $g$ on $X$, $g$ has agreement  $\Omega(\delta)$ 
with $\chi_{\rhob^*}$, and 
hence  by Proposition \ref{prop:agree} again, 
 $\hat{g}(\rhob^*) \geq \Omega(\delta)$ and thus $\rhob^* \in {\sf List}$. This is a contradiction since
$g$ is not supposed to agree with any linear function in the list on $X$.

Thus, we conclude that  
$$\Pr_{\alphab,\betab } \left[ g(\alphab) g(\betab)= g(\alphab+\betab) \ \mbox{and} \ g(\alphab) \not\in \{ \chi_{\rhob}(\alphab) ~| ~
\rhob \in {\sf List} \}
\right] \leq \delta.$$
The same statement applies with $\betab$ in place of $\alphab$. Moreover, the fraction of points in $\F_q^d$ on which two distinct 
$\chi_{\rhob_1}, \chi_{\rhob_2} \in {\sf List}$ agree is at most $O(|{\sf List}|^2/q)$, i.e. $O(1/(\delta^2q))$, i.e.
$O(\delta)$ since $\delta \gg \frac{1}{q^{1/3}}$. Finally, 
we condition on the $\epsilon$-probability event that $g(\alpha)+g(\beta)=g(\alpha+\beta)$ and 
conclude that 
$$\Pr_{(\alphab,\betab) \sim S_{f, {\mathcal T} }}  
\left[  g(\alphab) = \chi_{\rhob}(\alphab), 
g(\betab) = \chi_{\rhob}(\betab) \ \mbox{for some unique} \ \rhob \in {\sf List}  
\right] \geq 1- O\left(\frac{\delta}{\epsilon}\right).$$
\end{proof}

\end{document}